\def\alphabet{\Sigma}
\def\poly{\mathop{\mathrm{poly}}}
 \def\Nesetril{Ne{\v{s}}et{\v{r}}il}
\def\SAT{\textsc{\#Sat}}
\def\nCSP{\textsc{\#CSP}}
\def\BIS{\textsc{\#BIS}} 
\def\nHom#1{\textsc{\#HomsTo}(#1)}
\def\wHom#1{\textsc{\#WHomsTo}(#1)}
\def\MultiCutCount#1{\textsc{\#MultiterminalCut(#1)}}
\def\Potts{\textsc{Potts}}
\def\hPotts{\textsc{HyperPotts}}
\def\uhPotts{\textsc{UniformHyperPotts}}
\def\WE#1#2{\textsc{WE}(#1,#2)}
\def\bqcol#1{\textsc{\#Bipartite }#1\textsc{-Col}}
\def\JS{J_3^*}
\def\bichrom#1{\text{bi}(#1)}
\def\components#1{\kappa(#1)}
\def\APeq{\equiv_\mathrm{AP}}
\def\APred{\leq_\mathrm{AP}} 
\def\hypergraph{\mathcal{H}}
\def\hypervertices{\calV}
\def\hyperedges{\calE}
\def\calE{\mathcal{E}} 
\def\calV{\mathcal{V}}
\def\hyperedge{f} 
\def\ZTutte{Z_\mathrm{Tutte}}
\def\ZPotts{Z_\mathrm{Potts}}
\def\graph{G} 
\def\graphvertices{V}
\def\graphedges{E}
\def\hypervertex{v}
\def\nonnegQ{\mathbb{Q}_{\geq 0}}
\def\posQ{\mathbb{Q}_{> 0}}
\def\Hom#1#2{\mathop{\mathrm{Hom}}(#1,#2)}
\def\IMP{\mathrm{IMP}}
\def\FP{\mathrm{FP}}
\def\numP{\mathrm{\#P}}
\def\RHPi{\text{\#RH$\Pi_1$}}
\def\mono{\mathop{\mathrm{mono}}}
\def\sigmahat{\hat\sigma}
\def\Fp{\mathbb{F}_p}
\def\Fq{\mathbb{F}_q}
\let\phi=\varphi
\def\sat{\mathop{\mathrm{sat}}}
\def\unsat{\mathop{\mathrm{unsat}}}
\def\bfa{\mathbf a}
\def\bfb{\mathbf b}
\def\bsigma{\boldsymbol{\hat\sigma}}
\newtheorem{theorem}{Theorem}
\newtheorem{lemma}[theorem]{Lemma}
\newtheorem{corollary}[theorem]{Corollary}
\newtheorem{observation}[theorem]{Observation}
\title [Tree Homomorphisms]{The Complexity of Approximately Counting Tree Homomorphisms}\thanks{
The research leading to these results has received funding from the European Research Council under 
  the European Union's Seventh Framework Programme (FP7/2007-2013) ERC grant agreement no.\ 334828. The paper 
  reflects only the authors' views and not the views of the ERC or the European Commission. 
  The European Union is not liable for any use that may be made of the information contained therein.
  This work was partially supported by the EPSRC grant 
{\it Computational Counting.} 
  }
\author{Leslie Ann Goldberg}
\address{Leslie Ann Goldberg, Department of Computer Science,
University of Oxford,
Wolfson Bldg, Parks Rd.,
Oxford OX1 3QD, United Kingdom.}
\author{Mark Jerrum}
\address{Mark Jerrum, School of Mathematical Sciences\\
Queen Mary, University of London, Mile End Road, London E1 4NS, United Kingdom.}
\begin{document}
\maketitle
  
  \begin{abstract}  
  
We study two computational problems, parameterised by a fixed tree~$H$.
$\nHom{H}$ is the problem of counting homomorphisms from an input graph~$G$
to~$H$. $\wHom{H}$ is the problem of counting weighted homomorphisms to~$H$,
given an input graph~$G$ and a weight function for each vertex~$v$ of~$G$.
Even though $H$ is a tree, these problems turn out to be sufficiently rich to capture all of the 
known approximation behaviour in $\numP$.
We give a complete trichotomy for $\wHom{H}$. If $H$  is a star then $\wHom{H}$ is in $\FP$. 
If $H$ is not a star but it does not contain a certain induced subgraph~$J_3$ then
$\wHom{H}$ is equivalent under approximation-preserving (AP) reductions to $\BIS$, the
problem of counting independent sets in a bipartite graph. This problem is complete
for the class $\RHPi$ under AP-reductions. Finally, if $H$ contains an induced~$J_3$
then $\wHom{H}$ is equivalent under  AP-reductions to $\SAT$,
the problem of counting satisfying assignments to a CNF Boolean formula.
Thus, $\wHom{H}$ is complete for $\numP$ under AP-reductions.
The results are similar for $\nHom{H}$ except that a rich structure emerges if $H$
contains an induced~$J_3$.  We show that there are trees~$H$
for which $\nHom{H}$ is $\SAT$-equivalent (disproving a plausible conjecture of Kelk).
However, it is still not known whether $\nHom{H}$ is $\SAT$-hard for \emph{every}
tree~$H$ which contains an induced~$J_3$.
 It turns out that there is  an interesting connection between these homomorphism-counting
problems and the problem of approximating the partition function of the \emph{ferromagnetic
Potts model}.
In particular,  we show that for 
a family of graphs $J_q$, parameterised by a positive integer~$q$,
the problem $\nHom{J_q}$ is AP-interreducible with the problem
of approximating the partition function of the $q$-state Potts model.
It was not previously known that the Potts model had
a homomorphism-counting interpretation.
We use this connection to obtain some additional upper bounds for 
the approximation complexity of $\nHom{J_q}$. 
   \end{abstract}
   
\section{Introduction}   
A \emph{homomorphism} from a graph~$G$ to a graph~$H$ is
a mapping $\sigma:V(G)\rightarrow V(H)$
such that
the image $(\sigma(u),\sigma(v))$  
of every edge $(u,v) \in E(G)$  is in $E(H)$.
Let $\Hom GH$ denote the set of homomorphisms from~$G$ to~$H$
and let 
$Z_H(G)=|\Hom GH|$.
For each fixed~$H$, we consider the following computational problem.
\begin{description}
\item[Problem] $\nHom{H}$.
\item[Instance] Graph  $G$.
\item[Output]  $Z_H(G)$.
\end{description} 
The vertices of~$H$ are often referred to as ``colours''
and a homomorphism from~$G$ to~$H$ can be thought of as 
an assignment of colours to the vertices of $G$ 
which satisfies
certain constraints along each edge of~$G$.
The constraints guarantee that adjacent vertices in~$G$
are assigned colours which are adjacent in~$H$. 
A homomorphism in $\Hom GH$ is therefore 
often called an ``$H$-colouring'' of~$G$. When $H=K_q$, 
the complete graph with $q$~vertices, the elements of $\Hom G{K_q}$  
are 
proper $q$-colourings of~$G$.

There has been much work on determining the complexity
of the $H$-colouring decision problem,
which is the problem of 
determining whether $Z_H(G)=0$, given input~$G$. 
This work will be described in Section~\ref{sec:prev}, but at this point it is worth mentioning
the dichotomy result of Hell and \Nesetril~\cite{HN}, which shows that the decision problem is 
solvable in polynomial time if $H$ is bipartite and that it is NP-hard otherwise.
There has also been work~\cite{DG,Kelk}  on determining the complexity of 
exactly or approximately solving the related counting problem~$\nHom{H}$.
This paper is concerned with the computational difficulty of  $\nHom{H}$ when $H$ is
bipartite, and particularly when $H$ is a tree.  
 
As an example, consider the case where $H$ is the four-vertex path~$P_4$ (of length three).  
Label the vertices (or colours) $1,2,3,4$, in sequence.  If $G$ is not bipartite then $\Hom GH=\emptyset$,
so the interesting case is when $G$ is bipartite.  Suppose for simplicity that $G$ is connected.
Then one side of the vertex bipartition of $G$ must be assigned even colours and the other side 
must be assigned odd colours.
It is easy to see that the vertices assigned colours~$1$ and~$4$ form an independent set of $G$, and that
every independent set arises in exactly two ways as a homomorphism.
Thus, $Z_{P_4}(G)$ is equal to twice the number of 
independent sets in the bipartite graph~$G$.  We will return to this example presently.
 
It will sometimes be useful to consider 
a weighted generalisation of the homomorphism-counting problem.
Suppose, for each $v\in V(G)$, that
$w_v:V(H)\rightarrow  \nonnegQ$ is a weight function, assigning a non-negative
rational weight to each colour. Let $W(G,H)$
be 
an indexed set
of weight functions, containing one weight function for each vertex $v\in V(G)$,
 Thus, $$W(G,H)
=\{ w_v \mid v\in V(G)\}.$$
 Our goal is to compute the weighted sum of homomorphisms from~$G$
 to~$H$,
 which is expressed as the partition function
  $$Z_{H}(G, W(G,H)) = \sum_{\sigma\in \Hom GH} 
\prod_{v\in V(G)} w_v(\sigma(v)).$$ 

Given a fixed~$H$,
each weight function $w_v\in W(G,H)$ can
be represented succinctly as a list of $|V(H)|$ 
rational numbers.
This representation is used in the following computational problem.
\begin{description}
\item[Problem] $\wHom{H}$.
\item[Instance] A graph  $G$ and 
an indexed set
of weight functions $W(G,H)$.
\item[Output]  $Z_{H}(G,W(G,H))$.
\end{description} 

The complexity of \emph{exactly} solving $\nHom{H}$ and $\wHom{H}$ 
is already understood.
Dyer and Greenhill have observed \cite[Lemma 4.1]{DG} that $\nHom{H}$
is in~$\FP$ if $H$ is a complete bipartite graph.  
It is easy to see (see Observation~\ref{obs:star})  
that the same is true of $\wHom{H}$.
On the other hand, Dyer and Greenhill showed that $\nHom{H}$ is $\numP$-complete
for every bipartite graph~$H$ that is not complete.
Since $\nHom{H}$ is a special case of 
the more general problem 
$\wHom{H}$, we conclude that
both problems are in~$\FP$ if $H$ is a star (a  tree in which some ``centre'' vertex is 
an endpoint of every edge),
and that both problems are $\numP$-complete for every other tree~$H$.

This paper maps the complexity of  \emph{approximately} solving $\nHom{H}$
and $\wHom{H}$ when $H$ is a tree.  
Dyer, Goldberg, Greenhill and Jerrum~\cite{APred} introduced the 
concept of ``AP-reduction'' for studying the complexity of approximate counting
problems.  Informally, an AP-reduction is an efficient 
reduction from one counting problem to another, which preserves closeness of 
approximation;  two counting problems that are interreducible using this kind
of reduction have the same complexity when it comes to finding good approximate 
solutions.  We have already encountered an extremely simple example of two 
AP-interreducible problems, namely $\nHom{P_4}$ and $\BIS$, the problem 
of counting independent sets in a bipartite graph.  Using  less trivial 
reductions, Dyer et al.\ showed (\cite[Theorem 5]{APred}) that several natural
counting problems in addition to $\nHom{P_4}$ 
are interreducible with $\BIS$, and moreover that they are all
complete for the complexity class $\RHPi$ 
with respect to AP-reductions.  The class $\RHPi$ is conjectured to contain problems that 
do not  have an FPRAS;  however it is not believed to contain $\SAT$, the classical 
hard problem of computing the number of satisfying assignments to a CNF Boolean formula.
Refer to Section~\ref{sec:prelim} for more detail on the technical concepts mentioned here 
and elsewhere in the introduction.

Steven Kelk's PhD thesis~\cite{Kelk} 
examined the approximation complexity of the problem $\nHom{H}$
for general~$H$.
He identified certain families of graphs~$H$
for which $\nHom{H}$ is AP-interreducible with $\BIS$ and other
large families for which $\nHom{H}$ is AP-interreducible with $\SAT$.
He noted \cite[Section 5.7.1]{Kelk} that, during the study, he did not
encounter \emph{any} bipartite graphs~$H$ for which $\SAT \APred \nHom{H}$,
and that he suspected \cite[Section 7.3]{Kelk} that there were ``structural barriers''
which would prevent  homomorphism-counting problems 
to bipartite graphs
from being $\SAT$-hard.
An interesting test case is the tree $J_3$ which is depicted in Figure~\ref{fig:J3}.
Kelk referred to this tree \cite[Section 7.4]{Kelk} as ``the junction'', and conjectured that
$\nHom{J_3}$ is neither $\BIS$-easy nor $\SAT$-hard.
Thus, he conjectured that unlike the setting of Boolean constraint satisfaction,
where every parameter leads to a computational problem 
which is FPRASable, $\BIS$-equivalent, or $\SAT$-equivalent \cite{trichotomy},
the  complexity landscape for approximate $H$-colouring may be more nuanced, in the sense
that there might be graphs~$H$ for which none of these hold.

The purpose of this paper is to describe the interesting complexity landscape of
the approximation problems $\nHom{H}$ and $\wHom{H}$ when $H$ is a tree.
It turns out that even the case in which $H$ is a tree is sufficiently rich to
include all of the known approximation complexity behaviour in $\numP$.

First, consider the weighted problem~$\wHom{H}$.
For this problem, we show that there is a complexity trichotomy, and
the trichotomy depends upon the induced subgraphs of~$H$.
We say that $H$ {\it contains an induced $H'$} if $H$ has an induced
subgraph that is isomorphic to~$H'$.
Here is the result.
If $H$ contains no induced $P_4$ then it is a star, 
so $\wHom{H}$ is in $\FP$ (Observation~\ref{obs:star}).
If $H$ contains an induced $P_4$ but it does not contain an induced $J_3$
then it turns out that $\wHom{H}$ is AP-interreducible with $\BIS$ (Lemma~\ref{lem:intermediate}).
Finally, if $H$ contains an induced $J_3$,
then $\SAT \APred \wHom{H}$ (Lemma~\ref{lem:hardweighted}.)
Thus, the complexity of $\wHom{H}$ is completely determined by the induced subgraphs of
the tree~$H$, and 
there
are no possibilities other than those that arise in the 
Boolean constraint satisfaction trichotomy~\cite{trichotomy}.

Now consider the problem~$\nHom{H}$.  
Like its weighted counterpart, the unweighted problem $\nHom{H}$ is in $\FP$ if $H$ is a star,
and it is $\BIS$-equivalent if $H$ contains an induced~$P_4$ but it does not contain an induced~$J_3$.
However, it is   not known whether $\nHom{H}$ is $\SAT$-hard
for every $H$ which contains an induced $J_3$.
The structure that has emerged is already quite rich.
First, we have discovered  (Theorem~\ref{thm:hardH}) that there are trees $H$
for which $\nHom{H}$ is $\SAT$-hard.
This result is surprising --- it disproves the plausible conjecture of Kelk
that $\nHom{H}$ is not $\SAT$-hard for any bipartite graph~$H$.
We don't know whether $\nHom{H}$ is $\SAT$-hard for \emph{every}
tree~$H$ which contains an induced~$J_3$.
In fact, we have discovered an interesting connection between these homomorphism-counting
problems and the problem of approximating the partition function of the \emph{ferromagnetic
Potts model}.
In particular, Theorem~\ref{thm:junction} shows that for 
a family of graphs $J_q$, parameterised by a positive integer~$q$,
the problem $\nHom{J_q}$ is AP-interreducible with the problem
of approximating the partition function of the $q$-state Potts model.
This is surprising because it was not known that the Potts model had
a homomorphism-counting interpretation.
 
The  
Potts-model connection allows us to give a non-trivial upper bound
for the complexity of $\nHom{J_q}$. In particular, Corollary~\ref{cor:bqcol}
shows that this 
problem
is AP-reducible to the problem of counting proper $q$-colourings of bipartite graphs.
 
We are not aware of any complexity relationships between the problems $\nHom{J_q}$,
for $q>2$.  At one extreme, they might all be AP-interreducible;  at the other, they 
might all be incomparable.  Another conceivable situation is that \nHom{$J_q$} is 
AP-reducible to $\nHom{J_{q'}}$ exactly when $q\leq q'$.  There is no real evidence for
or against any of these or other possibilities.  However, in the final section we exhibit
a natural problem that provides an upper bound on the complexity of infinite families
of problems of the form $\nHom{J_q}$ where $q$ is a prime power.  Specifically, we
show (Corollary~\ref{newcor}) that $\nHom{J_{p^k}}$ is AP-reducible 
to the weight enumerator of a linear code over the field~$\Fp$.
 
\subsection{Previous Work} 
\label{sec:prev}

We have already mentioned Hell and \Nesetril's classic work~\cite{HN} on the 
complexity of the
$H$-colouring decision problem. 
They showed that this problem is solvable in polynomial time if $H$ is bipartite, and
that it is NP-complete otherwise. 
Our paper is concerned with the situation in which $H$ is an undirected graph
(specifically, an undirected tree) but it is worth noting that the decision problem becomes much
more complicated if $H$ is allowed to be a \emph{directed} graph.
Indeed, Feder and Vardi showed~\cite{FV} that every constraint satisfaction problem (CSP)
is equivalent to some digraph homomorphism problem.
Despite much research, a complete dichotomy theorem for 
the digraph homomorphism decision problem is not known.
Bang-Jensen and Hell~\cite{BJH} had conjectured a dichotomy for
the special case  in which the digraph~$H$ has
no sources and no sinks. This conjecture was proved in
important recent work of Barto, Kozik and Niven~\cite{BKN}.
Given the conjecture,
Hell, \Nesetril, and Zhu~\cite{HNZ} stated that
``digraphs with sources and sinks, and in particular oriented trees, seem to be the hard part of the problem.''
Gutjahr, Woeginger and Welzl~\cite{GWW} constructed a directed tree~$H$
such that determining whether a digraph~$G$ has a homomorphism to~$H$ is NP-complete.
Of course, for 
some
other trees, this problem is solvable in polynomial time.
For example, they showed that it is solvable in polynomial time whenever $H$ is an oriented path
(a path in which edges may go in either direction). 
Hell, \Nesetril\ and Zhu~\cite{HNZ} construct a whole family of directed trees for which the
homomorphism decision problem is
NP-hard, and study the problem of characterising NP-hard trees by forbidden subtrees.
The reader is referred to Hell and \Nesetril's book~\cite{HNbook} and to their survey
paper~\cite{HNsurvey} for more details about these decision problems.

As mentioned in the introduction, there is already some existing work~\cite{DG, Kelk} on determining the complexity of exactly or approximately counting homomorphisms. This work is discussed in more detail elsewhere
in this paper.
The problem of  sampling homomorphisms uniformly at random
(or, in the weighed case, of sampling homomorphisms with probability proportional
to their contributions to the partition function) is closely related to the approximate
counting problem. We will later discuss some existing work~\cite{GKP} on the complexity of the
homomorphism-sampling problem.
First, we describe some related results on a particular approach to this problem - namely,  the application of
the Markov chain Monte Carlo (MCMC) method.
Here the idea is to simulate a Markov chain whose states correspond to 
homomorphisms from~$G$ to~$H$.
The chain will be constructed so that the probability of a particular homomorphism~$\sigma$
in the stationary distribution of the chain is proportional to the contribution of~$\sigma$ to the
partition function. If the Markov chain is \emph{rapidly mixing} then it is possible to
efficiently sample homomorphisms from a distribution that is very close to the appropriate distribution.
This, in turn, leads to a good approximate counting algorithm~\cite{HColSampleCount}. 
First, Cooper, Dyer and Frieze~\cite{CDF}
considered the unweighted problem. They showed that, for any non-trivial $H$,
any Markov chain on $H$-colourings that
changes the colours of up to some constant fraction of the vertices of~$G$ in a single step
will have exponential mixing time (so will not lead to an efficient approximate counting algorithm).
When $H$ is a tree with a self-loop on every vertex, they construct
a weight function $w_H\colon V(H) \to \nonnegQ$  
so that rapid mixing does occur for the special case of 
the weighted homomorphism problem in which every vertex $v$ of~$G$ has weight function $w_v=w_H$.
Thus,  their result gives an FPRAS for this special case of $\wHom{H}$.
The slow-mixing results of \cite{CDF} have been extended in~\cite{BS}
and in \cite{BCDT}. In particular, Borgs et al.~\cite{BCDT}
considered the case in which $H$ is 
a rectangular subset of the hypercubic lattice, and
constructed a weight function~$w_H$ for which  
quasi-local Markov chains (which change the colours of up to some constant fraction of the
vertices in a small sublattice at each step)
have slow mixing.

\section{Preliminaries}
\label{sec:prelim}

This section brings
together the main complexity-theoretic notions that 
are specific to the study of approximate counting problems.  A more detailed
account can be found in~\cite{APred}.

A \emph{randomised approximation scheme\/} is an algorithm for
approximately computing the value of a function~$f:\Sigma^*\rightarrow
\mathbb{R}_{\geq 0}$.
The approximation scheme has a parameter~$\varepsilon\in(0,1)$ which specifies
the error tolerance.
A \emph{randomised approximation scheme\/} for~$f$ is a
randomised algorithm that takes as input an instance $ x\in
\alphabet^{\ast }$ (e.g., in the case of $\nHom{H}$, the input would 
be an encoding of a graph~$G$) and a rational error
tolerance $\varepsilon \in(0,1)$, and outputs a rational number $z$
(a random variable depending on the ``coin tosses'' made by the algorithm)
such that, for every instance~$x$,
$\Pr \big[e^{-\epsilon} f(x)\leq z \leq e^\epsilon f(x)\big]\geq \tfrac{3}{4}$.
We adopt the convention that~$z$ is represented as a pair of integers representing the numerator
and the denominator.
The randomised approximation scheme is said to be a
\emph{fully polynomial randomised approximation scheme},
or \emph{FPRAS},
if it runs in time bounded by a polynomial
in $ |x| $ and $ \epsilon^{-1} $.
As in~\cite{FerroPotts}, we say that a real number~$z$ is 
\emph{efficiently approximable} if there is an FPRAS for the constant function $f(x)=z$.

Our main tool for understanding the relative difficulty of
approximation counting problems is \emph{approximation-preserving reductions}.
We use the notion of 
approximation-preserving reduction from Dyer et al.~\cite{APred}.
Suppose that $f$ and $g$ are functions from
$\alphabet^{\ast }$ to~$\mathbb{R}_{\geq 0}$. An  AP-reduction 
from~$f$ to~$g$ gives a way to turn an FPRAS for~$g$
into an FPRAS for~$f$. 
The actual definition in~\cite{APred} applies to functions whose outputs are natural numbers.
The generalisation that we use here follows McQuillan~\cite{McQuillan}.
An {\it approximation-preserving reduction\/} (AP-reduction)
from $f$ to~$g$ is a randomised algorithm~$\mathcal{A}$ for
computing~$f$ using an oracle for~$g$. The algorithm~$\mathcal{A}$ takes
as input a pair $(x,\varepsilon)\in\alphabet^*\times(0,1)$, and
satisfies the following three conditions: (i)~every oracle call made
by~$\mathcal{A}$ is of the form $(w,\delta)$, where
$w\in\alphabet^*$ is an instance of~$g$, and $\delta \in (0,1)$ is an
error bound satisfying $\delta^{-1}\leq\poly(|x|,
\varepsilon^{-1})$; (ii) the algorithm~$\mathcal{A}$ meets the
specification for being a randomised approximation scheme for~$f$
(as described above) whenever the oracle meets the specification for
being a randomised approximation scheme for~$g$; and (iii)~the
run-time of~$\mathcal{A}$ is polynomial in $|x|$ and
$\varepsilon^{-1}$ and the bit-size of the values returned by the oracle.

If an approximation-preserving reduction from $f$ to~$g$
exists we write $f\APred g$, and say that {\it $f$ is AP-reducible
  to~$g$}.
Note that if $f\APred g$ and $g$ has an FPRAS then $f$ has an FPRAS\null.
(The definition of AP-reduction was chosen to make this true.)
If $f\APred g$ and $g\APred f$ then we say that
{\it $f$ and $g$ are AP-interreducible}, and write $f\APeq g$.
A word of warning about terminology:
the notation $\APred$ has been
used 
(see, e.g., \cite{CrescenziGuide})
to denote a different type of approximation-preserving reduction which applies to
optimisation problems.
We will not study optimisation problems in this paper, so hopefully this will
not cause confusion.

Dyer et al.~\cite{APred} studied counting problems in \#P and
identified three classes of counting problems that are interreducible
under approx\-imation-preserving reductions. The first class, containing the
problems that have an FPRAS, are trivially AP-interreducible since
all the work can be embedded into the reduction (which declines to
use the oracle). The second class  is the set of problems that are
AP-interreducible with \SAT, the problem of counting
satisfying assignments to a Boolean formula in CNF\null.
Zuckerman~\cite{zuckerman}
has shown that \SAT{} cannot have an FPRAS unless
$\mathrm{RP}=\mathrm{NP}$. The same is obviously true of any problem
 to which \SAT{} is AP-reducible.  
 
The third class appears to be of intermediate complexity.
It contains   all of the counting problems
expressible in a certain logically-defined complexity class, $\RHPi$. Typical
complete problems include counting the downsets in a partially ordered
set~\cite{APred},
computing the partition function of the ferromagnetic Ising model with
local external magnetic fields~\cite{Ising},
and counting the independent sets in a bipartite graph,
which is defined as follows.

\begin{description}
\item[Problem] $\BIS$.
\item[Instance] A bipartite graph $G$.
\item[Output]  The number of independent sets in $G$.
\end{description}

In \cite{APred} it was shown that $\BIS$ is complete for the
logically-defined
complexity class $\mathrm{\#RH}\Pi_1$  with respect to approximation-preserving
reductions.
We conjecture~\cite{FerroPotts} 
that there is no FPRAS for $\BIS$.

A problem 
that is closely related to approximate counting
is the problem of sampling configurations 
almost uniformly at random.
The analogue of an FPRAS in the context of sampling problems is the PAUS, or
\emph{Polynomial Almost Uniform Sampler}.

Goldberg, Kelk, and Paterson~\cite{GKP} have studied the problem of sampling $H$-colourings
almost uniformly at random.  They gave a hardness result for every fixed tree~$H$ that
is not a star. In particular, their theorem \cite[Theorem 2]{GKP} 
shows that there is no PAUS for sampling $H$-colourings unless $\BIS$ has an FPRAS.

In general, there is a close connection between approximate counting and
almost-uniform sampling. Indeed, in the presence of a technical 
condition called ``self-reducibility'', the counting and sampling variants 
of two problems
are interreducible~\cite{JVV}.  
The weighted problem $\wHom{H}$ is self-reducible,
so the result of~\cite{GKP} immediately gives an AP-reduction from~$\BIS$
to $\wHom{H}$ for every tree~$H$ that is not a star.
However, it is not known whether the unweighted problem~$\nHom{H}$ is
self-reducible.

As mentioned in Section~\ref{sec:prev} the paper \cite{HColSampleCount}
shows how to turn a PAUS for $H$-colourings into an FPRAS for $\nHom{H}$,
but it is not known whether there is a reduction in the other direction.
Thus, we cannot directly apply the hardness result of~\cite{GKP} to
reduce~$\BIS$ to~$\nHom{H}$. 
However, we will see in the next section that the complexity gap
between problems with an FPRAS and those that are $\BIS$-equivalent
still holds
for $\nHom{H}$ in the special case when $H$ is a tree,
which is the focus of this paper.
 
\section{Weighted tree homomorphisms}
\label{sec:weighted}

First, we introduce some notation and a few graphs that are of special interest.

In this paper, the graphs that we consider are undirected 
and simple --- they do not have self-loops or multiple edges between vertices.
For every positive integer~$n$, let $[n]$ denote $\{1,2,\ldots,n\}$.  
We use $\Gamma_H(v)$ to denote the set of neighbours of vertex~$v$ in graph~$H$
and we use $d_H(v)$ to denote 
the degree of~$v$, which is~$|\Gamma_H(v)|$. 

Let $P_n$ be the $n$-vertex path (with $n-1$ edges).
An $n$-leaf \emph{star} is the complete bipartite graph
$K_{1,n}$.
Let $J_q$ be the graph with vertex set 
$$V(J_q) =  \{w\} \cup \{c_i\mid i\in[q]\} \cup \{c'_i\mid i\in[q]\},$$
and edge set 
$$ E(J_q) = \{(c_i,c'_i) \mid i\in[q]\} \cup \{(c'_i,w)\mid i\in[q]\}.$$
$J_3$ is depicted in Figure~\ref{fig:J3}.
\begin{figure}
\centering{
 \begin{tikzpicture}[fill=white,scale=0.7,
line width=0.5pt,inner sep=1pt,minimum size=2.5mm]
\pgfsetxvec{\pgfpoint{1.7cm}{0cm}}
\pgfsetyvec{\pgfpoint{0cm}{1.7cm}}
\path 
(0,0) node [draw,fill,circle,minimum size=0.6cm](w){ $w$}
(1,0) node [draw,fill,circle,minimum size=0.6cm](y0){ $c'_2$}
(-1,0) node [draw,fill,circle,minimum size=0.6cm](x0){ $c'_1$}
(0,-1) node [draw,fill,circle,minimum size=0.6cm](z0){ $c'_3$}
(-2,0) node [draw,fill,circle,minimum size=0.6cm](x1){ $c_1$}
(2,0) node [draw,fill,circle,minimum size=0.6cm](y1){ $c_2$}
(0,-2) node [draw,fill,circle,minimum size=0.6cm](z1){ $c_3$}
;
\draw [-] (x1)--(x0) -- (w) -- (y0) -- (y1);
\draw [-] (w)--(z0)--(z1); 

\end{tikzpicture}

 }
 \caption{The tree $J_3$.}
\label{fig:J3}
\end{figure}
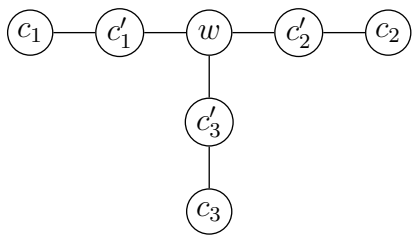

\subsection{Stars}

As 
Dyer and Greenhill observed
\cite[Lemma 4.1]{DG}, $\nHom{H}$ is in $\FP$ if
$H$ is a complete bipartite graph.
We now show 
that $\wHom{H}$ is also in $\FP$ in this case.
Suppose that $H$ is a complete bipartite graph with 
bipartition $(U,U')$ where $U=\{u_1,\ldots,u_h\}$ and
$U'=\{u'_1,\ldots,u'_{h'}\}$.
Let $G$ be an input to $\wHom{H}$ with connected components $G^1,\ldots,G^\kappa$.
Clearly, $Z_H(G)=\prod_{i=1}^\kappa Z_{H}(G^i)$.
Also, if $G^i$ is non-bipartite then $Z_{H}(G^i)=0$.
Suppose that $G^i$ is a connected bipartite graph with 
bipartition $(V,V')$ where $V=\{v_1,\ldots,v_n\}$
and $V'=\{v'_1,\ldots,v'_{n'}\}$.
Then 
$$Z_{H}(G^i) = 
\prod_{j=1}^n\sum_{c=1}^{h} w_{v_j}(u_c)
\prod_{j'=1}^{n'}\sum_{c'=1}^{h'} w_{v'_{j'}}(u'_{c'})
+
\prod_{j=1}^{n'}\sum_{c=1}^{h} w_{v'_{j}}(u_c)
\prod_{j'=1}^{n}\sum_{c'=1}^{h'} w_{v_{j'}}(u'_{c'})
.$$
 
In the context of this paper, 
where $H$ is a tree, we can draw  the following concluson.
\begin{observation}
\label{obs:star}\label{obs:triv}
Suppose that $H$ is a star. Then
 $\wHom{H}$ is in $\FP$.
\end{observation}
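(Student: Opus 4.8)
The plan is to observe that every star is a complete bipartite graph and then invoke the explicit formula for $Z_H(G^i)$ that was derived in the paragraph immediately preceding the statement. Concretely, an $n$-leaf star is the complete bipartite graph $K_{1,n}$, so applying that formula with one bipartition class of $H$ being the single centre vertex and the other being the $n$ leaves (either way round when $n=1$) expresses $Z_H(G^i)$, for each connected bipartite component $G^i$ of~$G$, as a sum of two products; each product ranges over the vertices of~$G^i$, and each factor is a sum of at most $|V(H)|$ of the rational weights supplied in the instance.

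First I would dispose of the routine structural reductions. In polynomial time we can compute the connected components $G^1,\dots,G^\kappa$ of~$G$, use the multiplicativity $Z_H(G)=\prod_{i} Z_H(G^i)$, and set $Z_H(G^i)=0$ whenever $G^i$ is non-bipartite (detectable by testing $2$-colourability). For each bipartite component we apply the preceding display; its correctness is exactly the observation that a homomorphism from a connected bipartite graph to a complete bipartite graph is determined by a choice of which side goes to which part, and that each such choice contributes the corresponding product of per-vertex weight sums.

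Then I would verify the running time. Each inner sum $\sum_{c} w_v(u_c)$ has only $|V(H)|=O(1)$ summands, all taken verbatim from the input, so it has bit-size linear in the instance size; each product has at most $|V(G)|$ such factors, so by the usual bound on products of rationals the numerator and denominator of $Z_H(G^i)$ have bit-size polynomial in the instance size; and there are at most $|V(G)|$ components to multiply. Hence the entire computation runs in polynomial time and $\wHom{H}\in\FP$. I do not anticipate a genuine obstacle: the only point requiring any care is that the rational arithmetic does not blow up super-polynomially, and this is immediate because $H$ is fixed and every quantity manipulated is a sum or product of polynomially many of the given input weights.
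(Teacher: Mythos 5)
Your proposal matches the paper's argument: the paper proves the observation by the same device of viewing the star as a complete bipartite graph, factoring over connected components, and evaluating each bipartite component via the displayed two-term product formula. Your additional remarks on bit-size and running time are routine details the paper leaves implicit, so there is nothing genuinely different here.
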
 
 
\subsection{Trees with intermediate complexity}

The purpose of this section is to prove Lemma~\ref{lem:intermediate}, which says
that if $H$ is a tree that is not a star and has no induced~$J_3$
then $\BIS \APeq\nHom{H}$ and  $\BIS\APeq \wHom{H}$.
The main work of the section is in the proof of Lemma~\ref{lem:intermediate}, but
first we need some existing results. In particular, Lemma~\ref{lem:kelk} below is due to Kelk, and
Lemma~\ref{lem:CSP}   is an easy consequence of earlier work
by the authors and their coauthors on counting CSPs.
We have chosen to include a proof sketch of the former because the work
of Kelk is   unpublished~\cite{Kelk} and a proof of the latter because 
we did not state or prove it explicitly in earlier work, and it 
might
be rather difficult for the reader to
see why it is implied by  that work.

If $H$ is a tree
with no induced~$P_4$ then it is a 
star, so, by
Observation~\ref{obs:triv},
$\wHom{H}$ is in $\FP$.
On the other hand,  the following lemma shows that if $H$ contains an induced~$P_4$
then even the unweighted problem $\nHom{H}$ is $\BIS$-hard. 
To motivate  the lemma, suppose
that $H$ contains an induced~$P_4$. Then it
is a bipartite graph which is not complete, so
by Goldberg at al.~\cite[Theorem 2]{GKP} the (uniform) sampling 
problem for $H$-colourings of a graph 
is as hard as the sampling problem for independent sets in a bipartite graph.
This is not quite the result we are seeking, but it is close in spirit,
given the close connection between sampling and approximate counting.  
The following lemma, which is a special case of \cite[Lemma 2.19]{Kelk},
is exactly what we need. 

\begin{lemma} [Kelk] 
\label{lem:kelk}
Let $H$ be a tree containing an induced~$P_4$.
Then $$\BIS \APred \nHom{H}.$$
\end{lemma}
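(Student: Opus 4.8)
The plan is to establish the two approximation-preserving reductions $\BIS \APred \nHom{P_4}$ and $\nHom{P_4} \APred \nHom{H}$. The first is essentially the computation already made in the Introduction: for a bipartite input~$G$ one has $Z_{P_4}(G) = 2^{\components{G}}\,N(G)$, where $N(G)$ denotes the number of independent sets of $G$ and $\components{G}$ can be read off from $G$ directly, while $Z_{P_4}(G)=0$ if $G$ is not bipartite; this gives $\BIS \APeq \nHom{P_4}$, so it suffices to give an AP-reduction from $\nHom{P_4}$ to $\nHom{H}$.

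For that reduction I would first fix vertices $a,b,c,d$ of $H$ inducing a path with edges $ab$, $bc$, $cd$. Given an instance $G$ of $\nHom{P_4}$, which may be assumed to be connected and bipartite with bipartition $(X,Y)$ (if $G$ is not bipartite then $Z_{P_4}(G)=0$, and a disconnected $G$ is handled component by component), the construction attaches a rooted pendant gadget to every vertex of $G$: a gadget $\Gamma_X$ to each $u\in X$ and a gadget $\Gamma_Y$ to each $v\in Y$, the root being identified with the vertex of $G$. This is morally a counting analogue of the sampling reduction of~\cite{GKP}, but it must be carried out directly at the level of partition functions since $\nHom{H}$ is not known to be self-reducible. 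If $w_v\colon V(H)\to\mathbb{Z}_{\ge 0}$ denotes the weight function realised at a vertex~$v$ of $G$ — so $w_v(j)$ counts the homomorphisms from the attached gadget to $H$ that send its root to $j$ — then the resulting instance $G^{+}$ of $\nHom{H}$ satisfies
$$ Z_H(G^{+}) \;=\; \sum_{\sigma\in\Hom{G}{H}}\ \ \prod_{v\in V(G)} w_v(\sigma(v)). $$
The combinatorial aim is for the colour classes $\{a,c\}$ on $X$ and $\{b,d\}$ on $Y$ to control this sum: a homomorphism $\sigma$ with $\sigma(X)\subseteq\{a,c\}$ and $\sigma(Y)\subseteq\{b,d\}$ is exactly a homomorphism $G\to P_4$, and by the correspondence recalled in the Introduction the $X$-vertices coloured~$a$ together with the $Y$-vertices coloured~$d$ then form an independent set of $G$, with every independent set arising exactly once per ``orientation''.

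The main obstacle I anticipate is that in a tree a gadget cannot restrict a vertex's palette at all: if $\Gamma$ is connected and bipartite with root $r$, then for every vertex $j$ of $H$ and any neighbour $j'$ of~$j$ the map sending the side of $r$ to $j$ and the opposite side to $j'$ is a homomorphism, so each $w_v$ has full support; and in general one cannot even make $\{a,b,c,d\}$ collectively dominant (automorphisms of a symmetric spider, say, would force $w_v$ to be constant across the legs). So the reduction must work with the genuinely weighted quantity $Z_H(G^{+})$ and cannot suppress the contribution of the ``junk'' colours outside $\{a,b,c,d\}$. The delicate step will therefore be to choose $\Gamma_X$ and $\Gamma_Y$ — exploiting the position of $a,b,c,d$ within the tree~$H$ — so that, after dividing out an efficiently computable factor, $Z_H(G^{+})$ is related by an AP-reduction to a $\BIS$-equivalent weighted variant of $\nHom{P_4}$ (for instance a bipartite hard-core partition function at a fixed positive activity), from which $N(G)$ can be recovered. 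Designing the two gadgets, computing the weight functions they realise, isolating the required main term and handling the junk contributions — all using that $H$ is a tree containing an induced $P_4$ rather than merely a non-complete bipartite graph — is the heart of the argument, and full details are in Kelk~\cite{Kelk}.
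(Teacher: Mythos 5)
Your first step ($\BIS \APeq \nHom{P_4}$) is fine, and you have correctly diagnosed why the natural second step fails: a connected pendant gadget rooted at a vertex of a tree always realises a weight function with full support, so per-vertex decorations can neither restrict palettes nor make the four colours of an induced $P_4$ dominant. But having named the obstacle you do not overcome it; you only assert that some choice of $\Gamma_X,\Gamma_Y$ must work and defer to Kelk. That missing idea is the content of the lemma, and the construction that actually works is of a different kind. Kelk's reduction (as sketched in the paper) does not decorate the vertices of $G$ at all. Instead it adds two \emph{apex} vertices $C$ and $C'$, with $C'$ joined to every vertex of one side $V$ of the bipartition, $C$ joined to every vertex of the other side $V'$, together with the edge $(C,C')$. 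One chooses an edge $(c,c')$ of $H$ with $d_H(c)=\Delta$ the maximum degree and $d_H(c')$ maximum among neighbours of degree-$\Delta$ vertices; this degree-extremal choice is what allows standard decorations to pin $C$ to $c$ and $C'$ to $c'$ up to a negligible error. Once the apexes are pinned, every $v\in V$ is \emph{forced} into $\Gamma_H(c')$ and every $v'\in V'$ into $\Gamma_H(c)$ --- a hard palette restriction obtained by adjacency to a pinned apex, which is exactly what you (correctly) argued pendant gadgets cannot deliver. Because $H$ is a tree there are no edges between $\Gamma_H(c)\setminus\{c'\}$ and $\Gamma_H(c')\setminus\{c\}$, so the vertices of $V$ not coloured $c$, together with the vertices of $V'$ not coloured $c'$, form an independent set of $G$.

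The second missing ingredient is how the junk terms and multiplicities are controlled. Rather than trying to recover the number of \emph{all} independent sets from a weighted sum, Kelk reduces from counting \emph{maximum} independent sets of a bipartite graph (a $\BIS$-equivalent problem by \cite{APred}); standard amplification then makes the homomorphisms corresponding to maximum independent sets dominate all others, and makes each maximum independent set account for approximately the same number of homomorphisms. Your plan of relating $Z_H(G^{+})$ to ``a $\BIS$-equivalent weighted variant of $\nHom{P_4}$'' is never carried out and, as you yourself observe, runs directly into the full-support obstruction. Without the apex-vertex mechanism and the passage to maximum independent sets, the proposal is a statement of intent rather than a proof.
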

 
 \begin{proof} (Proof sketch) 
 We will not give a complete proof of Lemma~\ref{lem:kelk} since it is a special case of
a lemma of Kelk, but here is a sketch to give the reader a high-level idea of
the construction.
Let $\Delta$ be the maximum degree of vertices of~$H$ and let $\Delta'\leq \Delta$
be the maximum degree taken by a neighbour of a degree-$\Delta$ vertex in~$H$.
Note that $\Delta'\geq2$ since $H$ cannot be a star.
Let $(c,c')$ be any edge in~$H$ with $d_H(c)=\Delta$ and $d_H(c')=\Delta'$.
Let $N_c$ be the set $\Gamma_H(c)-\{c'\}$ and
let $N_{c'} = \Gamma_H(c')-\{c\}$. Since $H$ is a tree, there are no
edges in $H$ between $N_c$ and $N_{c'}$.
Now consider a connected instance $G$ of $\BIS$ with 
bipartition $V(G)=(V,V')$. 
Let $G'$ be the bipartite
graph with vertex set
$V(G)\cup \{C,C'\}$ (where~$C$ and~$C'$ are new vertices that are not in $V(G)$)
and edge set $E(G) \cup \{(C,C')\} \cup \{C\}\times V' \cup \{C'\} \times V$.
Consider an $H$-colouring $\sigma$ of~$G$ 
with  
$\sigma(C)=c$ and $\sigma(C')=c'$.
(Standard constructions can be used to augment $G'$ so that 
almost all homomorphisms to $H$ have this property.)
For every vertex $v\in V$, $\sigma(v) \in N_{c'} \cup \{c\}$
and for every vertex $v'\in V'$, $\sigma(v') \in N_c \cup \{c\}$.
Also, 
$\{v \in V \mid \sigma(v)\in N_{c'} \} \cup \{ v'\in V' \mid \sigma(v')\in N_c\}$
is an independent set of~$G$. 
Thus, there is an injection from independent sets of~$G$
into these $H$-colourings of~$G'$.
Standard tricks can be used to adjust
the construction so that almost all of the homomorphisms correspond to
\emph{maximum} independent sets of~$G$ and so that all maximum independent
sets correspond to approximately the same number of homomorphisms. The proof follows
from the fact that counting maximum independent sets in a bipartite graph 
is equivalent to~$\BIS$~\cite{APred}.
\end{proof}

As mentioned above, the main result of this section is Lemma~\ref{lem:intermediate}, which will be presented below.
Its proof relies on earlier work on
counting \emph{constraint satisfaction problems} (CSPs).
Suppose that $x$ and $x'$ are Boolean variables.
An assignment $\sigma: \{x,x'\}\to \{0,1\}$ is
said to satisfy the implication constraint 
$\IMP(x,x')$ if
$(\sigma(x),\sigma(x'))$ is in $\{ (0,0),(0,1),(1,1)\}$.
The idea is that ``$\sigma(x)=1$'' implies ``$\sigma(x')=1$''.
The assignment~$\sigma$ is said to satisfy the ``pinning'' constraint $\delta_0(x)$
if $\sigma(x)=0$ and the pinning constraint $\delta_1(x)$ if $\sigma(x)=1$.
If $X$ is a set of Boolean variables then
a set~$C$ of $\{\IMP,\delta_0,\delta_1\}$ constraints 
on~$X$ is a set of constraints of the form
$\delta_0(x)$, $\delta_1(x)$ and $\IMP(x,x')$ for $x$ and $x'$ in $X$. 
The set $S(X,C)$ of \emph{satisfying assignments} 
is the set of all assignments $\sigma: X \to \{0,1\}$ which
simultaneously satisfy all of the constraints in~$C$.
We will consider the following computational problem.
\begin{description}
\item[Problem] $\nCSP(\IMP,\delta_0,\delta_1)$.
\item[Instance] A set $X$ of Boolean variables and a set $C$ of 
$\{\IMP,\delta_0,\delta_1\}$ constraints on~$X$.
 \item[Output] $|S(X,C)|$.
\end{description}

We will also consider the following weighted version of $\nCSP(\IMP)$.
Suppose, for each $x\in X$, that $\gamma_x:\{0,1\} \rightarrow \posQ$ 
is a weight function. For 
an indexed set
$\gamma(X) = \{\gamma_x \mid x\in X\}$
of weight functions, let
$$Z(X,C,\gamma) = \sum_{\sigma\in S(X,C)} \prod_{x\in X} \gamma_x(\sigma(x)).$$
\begin{description}
\item[Problem] $\nCSP^*(\IMP,\delta_0,\delta_1)$.
\item[Instance] A set $X$ of Boolean variables, a set $C$ of  $\{\IMP,\delta_0,\delta_1\}$ constraints 
on  $X$, and 
an indexed set
$\gamma(X)$ of weight functions.
\item[Output]  $Z(X,C,\gamma)$.
\end{description}

 We will use the following lemma, which  follows from earlier work
 on counting CSPs.

 \begin{lemma}
 \label{lem:CSP}
 $\nCSP^*(\IMP,\delta_0,\delta_1) \APeq \BIS$.
 \end{lemma}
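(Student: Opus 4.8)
The plan is to derive the equivalence from the unweighted fact that $\nCSP(\IMP,\delta_0,\delta_1)\APeq\BIS$. One direction of \emph{that} fact is the easiness statement $\nCSP(\IMP,\delta_0,\delta_1)\APred\BIS$: after contracting the strongly connected components of the implication digraph (variables in a common cycle must receive equal values) and deleting the variables that are forced by the pinning constraints, returning $0$ if the pins are inconsistent, an instance of $\nCSP(\IMP,\delta_0,\delta_1)$ becomes an instance of counting downsets of a poset, and counting downsets of a poset is $\APeq\BIS$ by~\cite{APred}. The other direction, $\BIS\APred\nCSP(\IMP,\delta_0,\delta_1)$, is witnessed directly: given a bipartite graph~$G$ with parts $U$ and~$V$, introduce one Boolean variable per vertex and the constraint set $\{\IMP(x_u,x_v):uv\in E(G),\ u\in U,\ v\in V\}$; complementing the $V$-variables makes the satisfying assignments biject with the independent sets of~$G$. (Taken together these two directions are the $\BIS$-complete case of the Boolean $\nCSP$ trichotomy of~\cite{trichotomy}, $\IMP$ being implicative but not affine.)

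Granting this, the direction $\BIS\APred\nCSP^*(\IMP,\delta_0,\delta_1)$ is immediate, because $\nCSP(\IMP,\delta_0,\delta_1)$ is the special case of $\nCSP^*(\IMP,\delta_0,\delta_1)$ in which every weight function is identically~$1$. So the work is in the reverse direction, where the plan is to give an exact, essentially many-one reduction $\nCSP^*(\IMP,\delta_0,\delta_1)\APred\nCSP(\IMP,\delta_0,\delta_1)$ that eliminates the weights. Starting from an instance $(X,C,\gamma)$, multiply every weight function by the least common multiple~$M$ of all denominators appearing in~$\gamma$; this replaces the partition function by $M^{|X|}Z(X,C,\gamma)$, a known scalar multiple that we divide out at the end, so we may assume each $\gamma_x$ takes positive integer values $(\gamma_x(0),\gamma_x(1))=(b_x,a_x)$.

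The gadget I would use rests on the remark that, for every positive integer~$a$, there is a set of $O(\log a)$ fresh Boolean variables together with a set of $\IMP$-constraints on them whose number of satisfying assignments is exactly~$a$: this is built by a doubling-and-incrementing recursion, where $a\mapsto 2a$ is realised by adjoining one unconstrained variable, and $a\mapsto a+1$ by adjoining one new variable~$t$ together with a constraint $\IMP(y,t)$ for every variable~$y$ already present (so that $t=1$ reproduces the old satisfying assignments while $t=0$ forces them all to~$0$), and each such gadget is constructed in time polynomial in the bit-length of~$a$. To encode the weight of a variable~$x$ I attach two such gadgets: a copy of the ``count-$a_x$'' gadget together with a constraint $\IMP(y,x)$ for each of its variables~$y$ (so that $x=0$ collapses this gadget to a single assignment and $x=1$ leaves it free, contributing~$a_x$), and a copy of the ``count-$b_x$'' gadget together with a constraint $\IMP(x,y)$ for each of its variables (contributing~$b_x$ when $x=0$ and $1$ when $x=1$). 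Carrying this out for every $x\in X$ produces an instance $(X',C')$ of $\nCSP(\IMP,\delta_0,\delta_1)$, of size polynomial in the input, in which each satisfying assignment of the original constraints $C$ extends in exactly $\prod_{x\in X}\gamma_x(\sigma(x))\cdot M^{|X|}$ ways, so that $|S(X',C')|=M^{|X|}Z(X,C,\gamma)$; composing with $\nCSP(\IMP,\delta_0,\delta_1)\APred\BIS$ finishes the reduction.

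I expect the only genuine obstacle to be the logarithmic-size integer-weight gadget: a chain realises the count~$a$ but has size~$a$, which is exponential in the binary encoding, so the doubling recursion is exactly what keeps the reduction polynomial, and one has to verify that wiring these gadgets to~$x$ introduces no unintended constraints between a gadget and the remainder of the instance. The remaining ingredients — clearing denominators, carrying the scalar $M^{|X|}$ through the AP-reduction bookkeeping (it is a fixed rational of polynomial bit-length, so dividing by it preserves the multiplicative error), and the appeal to~\cite{APred,trichotomy} for the unweighted base case — are routine. (Alternatively, one can read the lemma off the classification of weighted Boolean $\nCSP$ in~\cite{LSM}, since $\IMP$ is log-supermodular but not of product form; I would mention this but prefer the explicit reduction above, as it is self-contained modulo the unweighted case.)
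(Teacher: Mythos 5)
Your proposal is correct and follows essentially the same route as the paper's proof: cite the unweighted equivalence $\nCSP(\IMP,\delta_0,\delta_1)\APeq\BIS$ from the trichotomy, observe that one direction is trivial because the unweighted problem is a special case, and eliminate the weights exactly by clearing denominators and attaching to each variable~$x$ a pair of logarithmic-size pure-$\IMP$ gadgets that contribute a factor $\gamma_x(1)$ when $x=1$ and $\gamma_x(0)$ when $x=0$ (and $1$ otherwise). The only difference is the realisation of the ``exactly $a$ satisfying assignments'' gadget --- the paper chains blocks of parallel implications encoding the individual bits of~$a$, tied together by bi-implications and pinned at the ends, whereas your doubling-and-increment recursion is a somewhat simpler construction achieving the same count --- and the checks you flag (the all-zero and all-one assignments always satisfy a set of $\IMP$ constraints, so collapsing a gadget leaves exactly one extension and no assignment of the original instance is lost) do go through.
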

 
 \begin{proof}
Dyer, Goldberg, and Jerrum~\cite[Theorem 3]{trichotomy}
shows that $\nCSP(\IMP,\delta_0,
\delta_1) 
\APeq \BIS$. 
$\nCSP(\IMP,\delta_0,\delta_1)$ trivially reduces to 
$\nCSP^*(\IMP,\delta_0,\delta_1)$ since it is a special case.
Thus, it suffices to give an AP-reduction from
$\nCSP^*(\IMP,\delta_0,\delta_1)$ to $\nCSP(\IMP,\delta_0,\delta_1)$.
The idea behind the construction that we use comes from
Bulatov et al.~\cite[Lemma 36, Item~(i)]{LSM}.
We give the details in order to translate the construction into the current context.
  
Let $(X,C,\gamma)$ be an instance of $\nCSP^*(\IMP,\delta_0,\delta_1)$.
We can assume without loss of generality that 
all of the weights $\gamma_x(b)$ are 
positive integers
by multiplying 
all of the weights by the product of the denominators.
The construction that follows is not difficult but
the details are a little bit complicated, so
we use the following running example to illustrate.
Let $X=\{y,z\}$, $C = \IMP(y,z)$, 
$\gamma_y(0)=5$, $\gamma_y(1) = 2$,
$\gamma_z(0)=1$ and $\gamma_z(1)=1$.

For every variable $x\in X$, consider the weight function~$\gamma_x$. 
Let 
$k_x =  \max(\lceil \lg \gamma_x(0) \rceil ,\lceil \lg \gamma_x(1) \rceil)$.
For every $b\in \{0,1\}$,
write the bit-expansion of 
$\gamma_x(1\oplus b)$ 
as
 $$
\gamma_x(1\oplus b)
= a_{x,b,0} + a_{x,b,1} 2^1 + \cdots + a_{x,b,k_{x}} 2^{k_{x}},$$
where each $a_{x,b,i}\in \{0,1\}$. 
Note that $\gamma_x(1\oplus b)>0$ 
so there is at least one~$i$ with $a_{x,b,i}=1$.
Let 
$\min_{x,b} = \min\{i \mid a_{x,b,i}=1\}$
and $\max_{x,b} = \max \{ i \mid a_{x,b,i} = 1\}$.
If $i<\max_{x,b}$ and $a_{x,b,i}=1$ then let $\text{next}_{x,b,i} = \min\{j>i \mid a_{x,b,j}=1\}$.
If $i>\min_{x,b}$ and $a_{x,b,i}=1$ then let $\text{prev}_{x,b,i} = \max\{j<i \mid a_{x,b,j}=1\}$.
For the running example,
\begin{itemize}
\item 
$k_y= \lceil \lg 5 \rceil = 3$ and $k_z = \lceil \lg 1 \rceil =0$.
\item 
For the variable $y$, taking $b=0$ we have
$\gamma_y(1\oplus 0) = 2^1$ so
$a_{y,0,0}=0$, $a_{y,0,1}=1$, and $a_{y,0,2}= a_{y,0,3}=0$. Also,
$\min_{y,0}=1=\max_{y,0}$.
\item 
Similarly,  taking $b=1$ gives
$\gamma_y(1\oplus 1) = 2^0+2^2$ so
$a_{y,1,0}=1$, $a_{y,1,1}=0$, $a_{y,1,2}=1$ and $a_{y,1,3}=0$. 
Thus $\min_{y,1}=0$ and $\max_{y,1}=2$.
Then $\text{next}_{y,1,0}=2$
and $\text{prev}_{y,1,2}=0$.
\item
Finally, for the variable $z$  and $b\in \{0,1\}$,  we have
$\gamma_z(1\oplus b)=2^0$
so 
$a_{z,b,0}=1$ and $\min_{z,b}=0=\max_{z,b}$.
\end{itemize}

Now for every $x\in X$, for every $ i \in \{1,\ldots,k_x\}$
and every $b\in\{0,1\}$ with $a_{x,b,i}=1$
let $A_{x,b,i}$ be the set of $i+2$
variables $\{x_{b,i,1},\ldots,x_{b,i,i}\} \cup \{ L_{x,b,i},R_{x,b,i}\}$.
Let $C_{x,b,i}$ be the set of implication constraints
$\bigcup_{j\in[i]} \{\IMP(L_{x,b,i},x_{b,i,j}),\IMP(x_{b,i,j},R_{x,b,i})\}$.
Note that there are $2^i+2$ satisfying assignments to the $\nCSP$ instance $(A_{x,b,i},C_{x,b,i})$:
one with $\sigma(L_{x,b,i})=\sigma(R_{x,b,i})=0$,
one with $\sigma(L_{x,b,i})=\sigma(R_{x,b,i})=1$, and
$2^i$ with $\sigma(L_{x,b,i})=0$ and $\sigma(R_{x,b,i})=1$.
The point here is that the sets $A_{x,b,i}$ 
will be combined for different values of~$i$.
The satisfying assignments
with  $\sigma(L_{x,b,i})=\sigma(R_{x,b,i})=0$
will correspond to contributions from a different index $i'>i$
and the satisfying assignments
with $\sigma(L_{x,b,i})=\sigma(R_{x,b,i})=1$
will correspond to contributions from a different index $i'<i$.
There are exactly $2^i$ satisfying assignments with 
$\sigma(L_{x,b,i})=0$ and $\sigma(R_{x,b,i})=1$
and these will correspond to the 
$a_{x,b,i} 2^i$ summand in the bit-expansion of $\gamma_x(1\oplus b)$.
For the running example,
\begin{itemize}
\item
for the variable $y$ and for $b=0$ and $i=1$ 
we have $A_{y,0,1} = \{y_{0,1,1} \} \cup \{ L_{y,0,1},R_{y,0,1}\}$.
Then $C_{y,0,1}$ contains
$  \{\IMP(L_{y,0,1},y_{0,1,1}),\IMP(y_{0,1,1},R_{y,0,1})\}$ and there are  $2+2^1=4$ solutions.
\item
For the variable $y$ and for $b=1$ and
$i=2$
we have  $A_{y,1,2} = \{y_{1,2,1}, y_{1,2,2}\} \cup \{ L_{y,1,2},R_{y,1,2}\}$.
 Then $C_{y,1,2}$ contains the constraints
   $\IMP(L_{y,1,2},y_{1,2,1})$, $\IMP(y_{1,2,1},R_{y,1,2})$,
 $\IMP(L_{y,1,2},y_{1,2,2})$, and $\IMP(y_{1,2,2},R_{y,1,2})$
 and there are $2+2^2=6$~solutions. 
 \end{itemize}

We now add some constraints corresponding to the $i=0$ case above.
For every $x\in X$
and every $b\in \{0,1\}$ with $a_{x,b,0}=1$
let $A_{x,b,0}$ be the set of
variables $\{ L_{x,b,0},R_{x,b,0}\}$.
Let $C_{x,b,0}$ be the set containing the constraint
$\IMP(L_{x,b,0},R_{x,b,0})$.
Note that there are $2^0+2=3$ satisfying assignments to the $\nCSP$ instance $(A_{x,b,0},C_{x,b,0})$:
one with $\sigma(L_{x,b,0})=\sigma(R_{x,b,0})=0$,
one with $\sigma(L_{x,b,0})=\sigma(R_{x,b,0})=1$, and
$2^0=1$ with $\sigma(L_{x,b,0})=0$ and $\sigma(R_{x,b,0})=1$.
For the running example,
\begin{itemize}
\item $A_{y,1,0} = \{ L_{y,1,0},R_{y,1,0}\}$
and $C_{y,1,0} = \{ \IMP(L_{y,1,0},R_{y,1,0})\}$.
\item  For $b\in \{0,1\}$, $A_{z,b,0} = \{ L_{z,b,0},R_{z,b,0}\}$
and $C_{z,b,0} = \{ \IMP(L_{z,b,0},R_{z,b,0})\}$.
\end{itemize}

 Now 
for every $x\in X$ and $b\in\{0,1\}$
let 
$C'_{x,b}$  
be the set of constraints
forcing equality of
$\sigma(R_{x,b,i})$ and $\sigma(L_{x,b,j})$ when $i$ and $j$ are adjacent one-bits in the bit-expansion
of 
$\gamma_x(1\oplus b)$.
In particular, 
$$C'_{x,b} = \bigcup_{\text{next}_{x,b,i}=j, \text{prev}_{x,b,j}=i}
\{ \IMP(R_{x,b,i},L_{x,b,j}), \IMP(L_{x,b,j},R_{x,b,i}) \}
$$
For the running example,
\begin{itemize}
\item
$C'_{y,0} =  C'_{z,0} =  C'_{z,1} =  \emptyset$  since
these variables have  only one positive coefficient in the bit expansion. 
\item For the variable $y$ and $b=1$ the relevant non-zero coefficients
are  $i=0$ and $j=2$ 
 so we get
$$C'_{y,1} =  
\{  
\IMP(R_{y,1,0},L_{y,1,2}), \IMP(L_{y,1,2},R_{y,1,0}) \}.
$$ 
\end{itemize}

Now consider $x\in X$. Let
$C''_{x,0}=C'_{x,0}  \cup \{\delta_0(L_{x,0,\min_{x,0}})\}$ and
  let  
$C''_{x,1} = C'_{x,1} \cup \{\delta_1(R_{x,1,\max_{x,1}})\}$. 
 For $x\in X$ and $b\in \{0,1\}$ let
$$A_{x,b} = \bigcup_{i\in
\{0,\ldots,k_x\},
a_{x,b,i}=1} A_{x,b,i}$$
and let 
$$C_{x,b} = 
C''_{x,b} 
\cup 
\bigcup_{i\in\{0,\ldots,k_x\},
a_{x,b,i}=1} C_{x,b,i}.
$$ 
Now will show that 
there are  
$\gamma_x(1)$
satisfying assignments to the $\nCSP$ instance $(A_{x,0},C_{x,0})$  which have the property that
$\sigma(R_{x,0,\max_{x,0}})=1$
and one satisfying assignment in which 
$\sigma(R_{x,0,\max_{x,0}})=0.$
To see this, note that the constraint
$\delta_0(L_{x,0,\min_{x,0}})$
forces 
$\sigma(L_{x,0,\min_{x,0}})=0$.
If $\sigma(R_{x,0,\max_{x,0}})=0$ then
all of the variables in $A_{x,0}$ are assigned spin~$0$ by~$\sigma$.
Otherwise, there is exactly one~$i$ with
$a_{x,0,i}=1$ and
$\sigma(L_{x,0,i})=0$ and $\sigma(R_{x,0,i})=1$.
As we noted above, there are $2^i$ assignments to the variables in~$A_{x,b,i}$.
But $\sum_{i: a_{x,0,1}=i} 2^i = \gamma_x(1)$, as required.
Similarly,  
there are  
$\gamma_x(0)$
satisfying assignments to the $\nCSP$ instance $(A_{x,1},C_{x,1})$   in which
$\sigma(L_{x,1,\min_{x,1}})=0$ and
there is one satisfying assignment in which 
$\sigma(L_{x,1,\min_{x,1}})=1.$
Let us quickly apply this to the running example.
\begin{itemize}
\item  Taking variable $y$ and $b=0$ we have
$A_{y,0} = A_{y,0,1}$  and
$C''_{y,0} = \{\delta_0(L_{y,0,1})\} \cup C_{y,0,1}$. Then
$\max_{y,0}=1$.
From above, there is one solution~$\sigma$ with $\sigma(R_{y,0,\max_{y,0}})=0$
and there are $2^1=\gamma_y(1)$ solutions 
$\sigma$
with $\sigma(R_{y,0,\max_{y,0}})=1$.
\item Taking variable $y$ and $b=1$ we have 
$$A_{y,1} = A_{y,1,0} \cup A_{y,1,2}$$
and
$$C''_{y,1} = \{ \delta_1(R_{y,1,2}),
\IMP(R_{y,1,0},L_{y,1,2}), \IMP(L_{y,1,2},R_{y,1,0})
\} \cup C_{y,1,0} \cup C_{y,1,2} .$$
There is one solution $\sigma$ with 
$\sigma(L_{y,1,0})=1$.
There are $2^0+2^2=\gamma_y(0)$ solutions $\sigma$ with 
$\sigma(L_{y,1,0})=0$. 
\item Taking variable $z$   we have
$A_{z,b} = A_{z,b,0} = \{L_{z,b,0},R_{z,b,0}\}$. Then,
taking $b=0$,
$C_{z,0} = \{ \delta_0(L_{z,0,0}),\IMP(L_{z,0,0},R_{z,0,0})\}$.
so there is 
$2^0=1=\gamma_z(1)$ assignment with $\sigma(R_{z,0,0})=1$
and one with $\sigma(R_{z,0,0})=0$.
Taking $b=1$,
 $C_{z,1} = \{\delta_1(R_{z,1,0}),\IMP(L_{z,1,0},R_{z,1,0})\}$
so there is
$2^0=1=\gamma_z(0)$ assignment with $\sigma(L_{z,1,0})=0$
and one with $\sigma(L_{z,1,0})=1$.

\end{itemize}

Finally, consider $x\in X$. Let $C_x$ be
the set of constraints containing the four 
implications $\IMP(x,R_{x,0,\max_{x,0}})$,
$\IMP(R_{x,0,\max_{x,0}},x)$,
$\IMP(x,L_{x,1,\min_{x,1}})$,
and $\IMP(L_{x,1,\min_{x,1}},x)$.
Now there are $\gamma_x(1)$
solutions to $(A_{x,0} \cup A_{x,1} \cup \{x\},C_{x,0} \cup C_{x,1} \cup  C_x)$
with $\sigma(x)=1$ and $\gamma_x(0)$ solutions with $\sigma(x)=0$.
Thus, we have simulated
the weight function $w_x$ with $\{\IMP,\delta_0,\delta_1\}$ constraints.
For the running example, 
\begin{itemize}
\item first consider the variable $y$.
\begin{itemize}
\item 
With $\sigma(y)=1$ 
the constraints in $C_y$
force
$\sigma(R_{y,0,\max_{y,0}})=1$ 
which, from above, gives $\gamma_y(1)$ solutions to $(A_{y,0},C_{y,0})$.
The constraints in 
$C_y$ also force 
$\sigma(L_{y,1,\min(y,1)})=1$,
which, from above, gives one solution to  $(A_{y,1},C_{y,1})$.
\item
With $\sigma(y)=0$ the 
constraints in $C_y$ force
$\sigma(R_{y,0,\max_{y,0}})=0$
so there is only one solution to $(A_{y,0},C_{y,0})$.
The constraints in $C_y$ also force
$\sigma(L_{y,1,\min(y,1)})=0$ so there are $\gamma_y(0)$ solutions to $(A_{y,1},C_{y,1})$.
\end{itemize}
\item  The argument for variable~$z$ is similar.
\end{itemize}

Thus, the correct output for the $\nCSP^*(\IMP,\delta_0,\delta_1)$
instance $(X,C,\gamma)$ is
same as the correct output for the $\nCSP(\IMP,\delta_0,\delta_1)$ instance
obtained from $(X,C,\gamma)$ by 
adding new variables and constraints to simulate each weight function $\gamma_x$. 
\end{proof}

We can now prove the main lemma of this section.

\begin{lemma}
\label{lem:intermediate}
Suppose that $H$ is a tree
which is not a star and which
has no induced~$J_3$.
Then $$\BIS \APeq\nHom{H} \mbox{ and } \BIS\APeq \wHom{H}.$$
\end{lemma}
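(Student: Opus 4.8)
The plan is to sandwich $\nHom H$ and $\wHom H$ between $\BIS$ from both sides. For the lower bounds: a tree that is not a star has diameter at least~$3$, so it contains an induced~$P_4$, and Lemma~\ref{lem:kelk} gives $\BIS\APred\nHom H$; since $\nHom H$ is the special case of $\wHom H$ in which every weight function is identically~$1$, also $\BIS\APred\wHom H$. It therefore remains to prove $\wHom H\APred\BIS$ (the unweighted upper bound being a special case), and by Lemma~\ref{lem:CSP} it suffices to give an AP-reduction $\wHom H\APred\nCSP^*(\IMP,\delta_0,\delta_1)$.

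The key structural point is that a tree with no induced~$J_3$ is a caterpillar. Indeed, the subtree $H'$ induced on the non-leaf vertices of~$H$ is connected, and if it had a vertex~$v$ of degree~$\ge 3$ then $v$, three of its $H'$-neighbours $c'_1,c'_2,c'_3$, and a further $H$-neighbour $c_i$ of each $c'_i$ (which exists since $c'_i\in H'$) would, by acyclicity, be seven distinct vertices spanning exactly the edges of~$J_3$. So $H'$ is a path $v_1,\dots,v_k$, with $k\ge 2$ because $H$ is not a star, and every other vertex of~$H$ is a leaf attached to some~$v_j$. Take the bipartition $(V_0,V_1)$ of~$H$ with $v_j\in V_0$ iff $j$ is odd. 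Order~$V_0$ by listing $v_1$, then the leaves of~$v_2$, then~$v_3$, then the leaves of~$v_4$, and so on; order~$V_1$ by listing the leaves of~$v_1$, then~$v_2$, then the leaves of~$v_3$, then~$v_4$, and so on (leaves of a common spine vertex in any order). A routine check of the three kinds of adjacency (each $v_{2i+1}$ to $v_{2i}$, to $v_{2i+2}$, and to its own leaves, plus the symmetric statements) shows that with these orders $E(H)\cap(V_0\times V_1)$ is a \emph{monotone staircase}: there are nondecreasing maps $\lambda,\rho$ from $V_0$ to~$V_1$ with $(c,c')\in E(H)$ iff $\lambda(c)\le c'\le\rho(c)$, and such a relation is closed under coordinatewise $\min$ and $\max$.

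Given these orderings, the reduction is a standard ``threshold'' encoding. If $G$ is non-bipartite, output~$0$. Otherwise split $G$ into connected components; $Z_H(G,W)$ is the product of the $Z_H(G^i,W)$, single-vertex components are trivial, and a component $G^i$ with bipartition $(X_i,Y_i)$ satisfies $Z_H(G^i,W)=Z^0_i+Z^1_i$, where $Z^b_i$ counts, with weights, the homomorphisms sending $X_i$ into~$V_b$ — the only two options for a connected bipartite graph. Fix a component and the orientation contributing~$Z^0_i$. For each $w\in X_i$ introduce monotone Boolean threshold variables recording the position of~$\sigma(w)$ in the $V_0$-order, with monotonicity enforced by $\IMP$-constraints, and similarly for each $w'\in Y_i$ using the $V_1$-order; attach weight functions to these variables whose products reproduce the vertex weights, up to a global computable constant (zero weights are a minor technicality, handled by a perturbation argument). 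For each edge $(w,w')$ of~$G^i$ add the $\IMP$-constraints, and where needed pinnings, expressing ``$(\sigma(w),\sigma(w'))\in E(H)$''; this is possible precisely because that relation is a min/max-closed staircase, so it unpacks into implications between threshold literals of~$w$ and of~$w'$. The weighted solution count of the resulting $\nCSP^*(\IMP,\delta_0,\delta_1)$ instance equals~$Z^0_i$; repeating for the other orientation and all components and calling the oracle with error parameter $\delta=\Theta(\varepsilon/\kappa)$ gives the AP-reduction, and together with the lower bounds this yields $\BIS\APeq\nHom H\APeq\wHom H$.

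The main obstacle is precisely the structural claim in the second paragraph: that a non-star caterpillar admits orderings of its two colour classes turning its edge relation into a monotone staircase. This is where the no-induced-$J_3$ hypothesis is essential — a vertex with three ``heavy'' neighbours forces, under every ordering, a submatrix of the adjacency matrix that is not closed under coordinatewise $\min$ and $\max$, hence not implication-definable, in line with the fact (proved later) that $\wHom{J_3}$ is $\SAT$-hard. The remaining points — the bipartiteness, component and orientation bookkeeping, realising the vertex weights as products of variable weights, and handling zero weights — are routine but must be carried out carefully so that the reduction preserves closeness of approximation.
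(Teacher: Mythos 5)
Your proposal is correct and follows essentially the same route as the paper: $\BIS$-hardness via Kelk's lemma (every non-star tree contains an induced $P_4$), and $\BIS$-easiness by ordering the two sides of the bipartition of $H$ so that every neighbourhood becomes an interval varying monotonically (the paper's ``convex ordering'', your monotone staircase) and then encoding colours as monotone threshold Boolean variables in an instance of $\nCSP^*(\IMP,\delta_0,\delta_1)$, invoking Lemma~\ref{lem:CSP}. The only differences are minor: you obtain the ordering from the caterpillar characterisation of $J_3$-free trees where the paper uses a recursive leaf-peeling construction, and for zero weights the paper simply adds the constraint $\IMP(v_i,v_{i-1})$ to exclude the offending colour and resets the weight to~$1$, which is cleaner than a perturbation argument (the latter needs extra care when the partition function is zero).
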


\begin{proof}
$\nHom{H}$ is a special case of $\wHom{H}$
so it is certainly
AP-reducible to
$\wHom{H}$. By Lemma~\ref{lem:kelk},
$\BIS$ is AP-reducible to $\nHom{H}$ and therefore it is
AP-reducible to $\wHom{H}$.
So it
suffices to give an AP-reduction from $\wHom{H}$ to $\BIS$.
Applying Lemma~\ref{lem:CSP}, it
suffices to give an AP-reduction from $\wHom{H}$ to $\nCSP^*(\IMP,\delta_0,\delta_1)$.

In order to do the reduction, we will order the vertices of~$H$
using the fact that it has no induced~$J_3$.
(This ordering is similar the one arising from the ``crossing property'' of the authors 
that is mentioned in \cite[Section 7.3.3]{Kelk}.)   
A  ``convex ordering'' of a connected bipartite graph with bipartition $(U,U')$
with $|U|=h$ and $|U'|=h'$ and edge set $E\subseteq U\times U'$
is a pair of bijections $\pi:U \rightarrow [h]$ and
$\pi':U' \rightarrow [h']$ such that
there are 
monotonically non-decreasing functions
functions $m:[h]\to[h']$, $M:[h]\to[h']$, $m':[h']\to[h]$ and $M':[h']\to[h]$ satisfying the
following conditions.
\begin{itemize} 
\item If $\pi(u)=i$
then $\{\pi'(u') \mid (u,u')\in E \} = \{ \ell \in [h'] \mid m(i) \leq \ell \leq M(i)\}$.
\item If $\pi'(u')=i$
then $\{\pi(u) \mid (u,u')\in E \} = \{ \ell \in [h] \mid m'(i) \leq \ell \leq M'(i)\}$.
\end{itemize}
 
The purpose of~$\pi$ and~$\pi'$ is just to put the
vertices in the correct order.
For example, in Figure~\ref{fig:referee},
\begin{figure}
\centering{
\begin{tikzpicture}[fill=white,scale=0.4,
line width=0.5pt,inner sep=1pt,minimum size=2.5mm]
\pgfsetxvec{\pgfpoint{1.7cm}{0cm}}
\pgfsetyvec{\pgfpoint{0cm}{1.7cm}}
\path 
(0,0) node [draw,fill,circle,minimum size=0.6cm](u4){ $4$}
(0,2) node [draw,fill,circle,minimum size=0.6cm](u3){ $3$}
(0,4) node [draw,fill,circle,minimum size=0.6cm](u2){ $2$}
(0,6) node [draw,fill,circle,minimum size=0.6cm](u1){ $1$}
(2,0) node [draw,fill,circle,minimum size=0.6cm](up3){ $3$}
(2,2) node [draw,fill,circle,minimum size=0.6cm](up2){ $2$}
(2,4) node [draw,fill,circle,minimum size=0.6cm](up1){ $1$}
;
\draw [-] (u1)--(up1)--(u2)--(up1)--(u3)--(up2)--(u3)--(up3)--(u4);
\path (0,7) node (){$U$};
\path (2,5) node () {$U'$};
\end{tikzpicture}
}
\caption{An example of a convex ordering}
\label{fig:referee}
\end{figure}
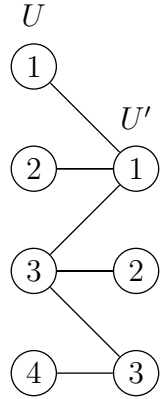
$\pi$ is the identity map on the set $U=\{1,2,3,4\}$ and
$\pi'$ is the identity map on the set $U'=\{1,2,3\}$.
Vertex~$3$ in~$U$ is connected to the sequence containing
vertices $1$, $2$ and $3$ in~$U'$,
so $m(3)=1$ and $M(3)=3$.
Every other vertex in~$U$ has degree~$1$ and in particular
$m(1)=M(1)=1$,
$m(2)=M(2)=1$ and
$m(4)=M(4)=3$.
Similarly, vertex~$1$ in~$U'$ is attached to
the sequence containing vertices $1$, $2$ and $3$ in $U$
so $m'(1)=1$ and $M'(1)=3$ but
$m'(2)=M'(2)=3$ and $m'(3)=M'(3)=4$.

To see that a convex ordering of~$H$ 
always
exists, consider the following algorithm.
The input is a tree~$H$ with 
no induced~$J_3$, a bipartition
$(U,U')$ of the vertices of~$H$, and a distinguished leaf~$u\in U$
whose parent~$u'$ is adjacent to at most one non-leaf. 
(Note that such a leaf~$u$ always exists since $H$ is a tree.)
The output is a convex ordering of~$H$ in which $\pi(u)=h$ and  $\pi'(u')=h'$.
Here is what the algorithm does.
If all of the neighbours of~$u'$ are leaves, then $h'=1$ so
take 
any bijection $\pi$ from $U-\{u\}$ to $[h-1]$ and set $\pi(u)=h$ and $\pi'(u')=h'$. Return this output.
Otherwise, let $u''$ be the neighbour of $u'$ that is not a leaf.
Let $H'$ be the graph
formed from $H$ by removing all of the $d_H(u')-1$ neighbours of $u'$ other than $u''$.
Since $H$ has no induced $J_3$, the graph $H'$ has the following property:
$u'$ is a leaf whose parent, $u''$, is adjacent to at most one non-leaf.
Recursively, construct a convex ordering for $H'$ in which $\pi(u')=h'$ 
and $\pi(u'')=h-(d_H(u')-1)$.
Extend $\pi$ by assigning values to the leaf-neighbours of~$u'$, ensuring that $\pi(u)=h$.

We will now show how to reduce $\wHom{H}$ to $\nCSP^*(\IMP,\delta_0,\delta_1)$.
Let $G$ be a connected bipartite graph
with bipartition $(V,V')$ and let $W(G,H)$ be 
an indexed set
of weight functions.
Let 
$$Z'_{H}(G,W(G,H))  = \sum_{\sigma\in \Hom GH\text{ with $\sigma(V)\subseteq U$}}\,
\prod_{v\in V(G)} w_v(\sigma(v))$$
and let
$$Z''_{H}(G,W(G,H))  = \sum_{\sigma\in \Hom GH\text{ with $\sigma(V)\subseteq U'$}}\,
\prod_{v\in V(G)} w_v(\sigma(v)).$$
 Clearly, $Z_{H}(G,W(G,H)) = Z'_{H}(G,W(G,H))+Z''_{H}(G,W(G,H))$.
 We will show how to reduce the computation of $Z'_{H}(G,W(G,H))$,
 given the input $(G,W(G,H))$,
 to the problem $\nCSP^*(\IMP,\delta_0,\delta_1)$.
 In the same way, we can reduce the computation of $Z''_{H}(G,W(G,H))$ to $\nCSP^*(\IMP,\delta_0,\delta_1)$.
 
Since we are considering assignments which map 
$V$ to $U$ and $V'$ to $U'$, the
vertices in~$U$ will not get mixed up with the vertices in~$U'$.
We can simplify the notation by relabelling the vertices so that $\pi$ and~$\pi'$
are the identity permutations.  Then,
given the convex ordering property, we can assume that 
$U=[h]$ and that $U'=[h']$
and that we have monotonically non-decreasing functions
functions $m:[h]\to[h']$, $M:[h]\to[h']$, $m':[h']\to[h]$ and $M':[h']\to[h]$ such that
\begin{itemize} 
\item  for $i\in U$, $\Gamma_H(i) = \{ \ell \in [h'] \mid m(i) \leq \ell \leq M(i)\}$, and
\item for $i\in U'$, $\Gamma_H(i) = \{ \ell \in [h] \mid m'(i) \leq \ell \leq M'(i)\}$.
\end{itemize}

A configuration $\sigma$ contributing to $Z'_{H}(G,W(G,H))$
is a map  from $V$ to $[h]$ together with a map from $V'$ to $[h']$
such that the following is true for every edge $(v,v')\in V\times V'$.
\begin{enumerate}[(1)]
\item \label{one} $m({\sigma(v)}) \leq \sigma(v') \leq M({\sigma(v)})$, and
\item  \label{two} $m'({\sigma(v')}) \leq \sigma(v) \leq M'({\sigma(v')})$. 
\end{enumerate} 

Since $m$, $M$, $m'$ and $M'$ are monotonically non-decreasing,
we can re-write the conditions  
in a less natural way which will be straightforward to apply below.
\begin{enumerate}[($1'$)]
\item \label{onep} $\sigma(v) \leq i$ implies $\sigma(v') \leq M(i)$,
\item \label{twop} $\sigma(v') \leq i'$ implies $\sigma(v) \leq M'(i')$,
\item \label{threep} $\sigma(v') \leq m(i)-1$ implies $\sigma(v) \leq i-1$, and
\item \label{fourp} $\sigma(v) \leq m'(i')-1$ implies $\sigma(v') \leq i'-1$. 
\end{enumerate}
Using monotonicity, (\ref{onep}$'$) and (\ref{twop}$'$) follow from the right-hand side of (\ref{one}) and (\ref{two}). 
Suppose that $\sigma(v') < m(i)$. Then the left-hand side of (\ref{one}) gives
$m(\sigma(v))< m(i)$, so by monotonicity, $\sigma(v)< i$. Equation~(\ref{threep}$'$) follows.
In the same way, Equation~(\ref{fourp}$'$) follows from the left-hand side of (\ref{two}).
Going the other direction, the right-hand sides of (\ref{one}) and (\ref{two}) follow
from (\ref{onep}$'$) and (\ref{twop}$'$).To derive the left-hand side of (\ref{one}), 
take the contrapositive of 
(\ref{threep}$'$), 
which says 
$\sigma(v) \geq i$ implies $\sigma(v') \geq m(i)$ then plug in $i=\sigma(v)$.
The derivation of the left-hand side of (\ref{two}) is similar.

We now construct an instance of $\nCSP^*(\IMP,\delta_0,\delta_1)$.
For each vertex $v\in V$ introduce Boolean variables $v_0,\ldots,v_{h}$.
Introduce constraints $\delta_0(v_0)$ and $\delta_1(v_{h})$ 
and, for every $i\in[h]$, $\IMP(v_{i-1},v_i)$.
For each vertex $v'\in V'$ introduces Boolean variables $v'_0,\ldots,v'_{h'}$.
Introduce constraints $\delta_0(v'_0)$ and $\delta_1(v'_{h'})$ 
and, for every $i'\in[h']$, $\IMP(v'_{i'-1},v'_{i'})$.

Now there is a one-to-one correspondence between
assignments $\sigma$
mapping $V$ to~$U$ and $V'$ to~$U'$,
and assignments $\tau$ to the Boolean variables that satisfy the above constraints.
In particular, 
$\sigma(v)=\min\{i \mid \tau(v_i)=1\}$.
Similarly, 
$\sigma(v')=\min\{i' \mid \tau(v'_i)=1\}$.

Now, $\sigma(v) \leq i$ is exactly equivalent to
$\tau(v_i) =1$. 
Thus, we can add the following further constraints to 
rule out assignments 
$\sigma$ that do not  
satisfy (\ref{onep}$'$), (\ref{twop}$'$), (\ref{threep}$'$) and (\ref{fourp}$'$).
Add all of the following constraints
where $v\in V$, $v'\in V'$, $i\in [h]$ and 
$i'\in [h']$:
$\IMP(v_{i},v'_{M(i)})$,
$\IMP(v'_{i'}, v_{M'(i')})$,
$\IMP(v'_{m(i)-1},v_{i-1})$, and
$\IMP(v_{m'(i')-1},v'_{i'-1})$. 
Now the assignments~$\tau$ of Boolean values to the variables satisfy all of the constraints if and only if
they 
correspond to assignments~$\sigma$ which satisfy 
 (\ref{onep}$'$), (\ref{twop}$'$), (\ref{threep}$'$) (\ref{fourp}$'$),
 and
  so should  contribute to
$$Z'_{H}(G,W(G,H))  = \sum_{\sigma\in \Hom GH\text{ with $\sigma(V)\subseteq U$}} \,
\prod_{v\in V(G)} w_v(\sigma(v)).$$

We will next construct weight functions for the 
instance of $\nCSP^*(\IMP,\delta_0,\delta_1)$ in order to reproduce the
effect of the weight functions in $W(G,H)$.

In order to avoid division by~$0$,
we first modify the construction.
Suppose that for some variable $v\in V$ and some $i\in [h]$,
$w_v(i)=0$.
Configurations $\sigma$ with $\sigma(v)=i$
make no contribution to  $Z'_{H}(G,W(G,H))$.
Thus, it does no harm to rule out such configurations by  
modifying the $\nCSP^*(\IMP,\delta_0,\delta_1)$ instance to ensure that
$\tau(v_i)=1$ implies $\tau(v_{i-1})=1$.
We do this by adding the constraint $\IMP(v_i,v_{i-1})$. Similarly, if
$w_{v'}(i')=0$ for $v'\in V$ and $i'\in[h']$ then we add the constraint $\IMP(v'_{i'},v'_{i'-1})$.

Once we've made this change, we can 
replace $W(G,H)$ with
an equivalent 
indexed
set of weight functions $W'(G,H)$ 
where $w'_v(i)=w_v(i)$ if $w_v(i)>0$ and $w'_v(i)=1$, otherwise. 

The weight functions 
for the $\nCSP^*(\IMP,\delta_0,\delta_1)$ instance
are then constructed as follows, for each $v\in V$.
For  
each $i\in[h]$, let 
$\gamma_{v_{i-1}}(0)=1$.
Let $\gamma_{v_h}(1)=w'_v(h)$.
For each $i\in [h-1]$, let 
$\gamma_{v_i}(1)=w'_v(i)/w'_v(i+1)$.
Note that $\gamma_{v_h}(0)$ and $\gamma_{v_0}(1)$ have not yet been
defined --- these values can be chosen arbitrarily. They will not be relevant
given the constraints $\delta_0(v_0)$ and $\delta_1(v_h)$.

Now if $\sigma(v)=i$ 
we have $\tau(v_0)=\cdots = \tau(v_{i-1})=0$ and
$\tau(v_i)=\cdots = \tau(v_h)=1$ 
so 
$\prod_j \gamma_{v_j}(\tau(v_j)) = w'_v(i)$, 
as required.
Similarly, for each $v'\in V'$, define the weight functions as follows.
For 
each $i\in[h']$, let 
$\gamma_{v'_{i-1}}(0)=1$.
Let $\gamma_{v'_{h'}}(1)=w'_{v'}(h')$.
For each $i\in [h'-1]$, let 
$\gamma_{v'_i}(1)=w'_{v'}(i)/w'_{v'}(i+1)$.
Using these weight functions, we obtain the desired reduction from  
the computation of $Z'_{H}(G,W(G,H))$ 
to $\nCSP^*(\IMP,\delta_0,\delta_1)$.
\end{proof}

\subsection{Intractable trees}

Lemma~\ref{lem:intermediate} shows that if $H$ has no induced~$J_3$
then $\wHom{H}$ is AP-reducible to $\BIS$.
The purpose of this section is to 
prove Lemma~\ref{lem:hardweighted}, below, which 
shows, by contrast, that
if $H$ does 
have an induced~$J_3$,
then $\wHom{H}$ is $\SAT$-hard.

In order to prepare for the proof of Lemma~\ref{lem:hardweighted}, we
introduce the  notion of a multiterminal cut.
Given a graph~$G=(V,E)$ 
with distinguished vertices~$\alpha$, $\beta$ and~$\gamma$, which we refer to as
``terminals'', a {\it multiterminal cut\/} is a set $E'\subseteq E$
whose removal disconnects the terminals in the sense that the graph
$(V,E\setminus E')$ does not contain a path between any two distinct
terminals. The size of the multiterminal cut is the number of edges
in~$E'$.  Consider the following computational problem.

\begin{description}
\item[Problem]  \MultiCutCount{$3$}. 
\item[Instance] A positive integer~$b$, a connected
graph $G=(V,E)$ and $3$ distinct vertices $\alpha$, $\beta$ and $\gamma$ from~$V$.
The input has the property that every multiterminal cut  has size at
least~$b$. 
\item[Output]
The number of size-$b$ multiterminal cuts for
$G$ with  terminals $\alpha$, $\beta$, and $\gamma$.
\end{description}

We will use the following technical lemma, which we used
before in~\cite{Ising} (without stating it formally).
\begin{lemma}
\label{lem:cut}
\MultiCutCount{$3$}  $\APeq \SAT$.
\end{lemma}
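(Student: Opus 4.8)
The plan is to prove the two directions $\MultiCutCount{3} \APred \SAT$ and $\SAT \APred \MultiCutCount{3}$ separately, with the second being the substantive one.

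For the easy direction, $\MultiCutCount{3} \APred \SAT$, I would note that $\MultiCutCount{3}$ is a problem in $\numP$: a size-$b$ multiterminal cut is a witness of polynomial size, and checking that a given edge set $E'$ of size $b$ disconnects the three terminals (and, given the promise, hence is a minimum cut) is a polynomial-time graph-connectivity computation. Since $\SAT$ is complete for $\numP$ under AP-reductions (by the result of Dyer et al.~\cite{APred} that every problem in $\numP$ AP-reduces to $\SAT$), we immediately get $\MultiCutCount{3} \APred \SAT$.

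For the hard direction, $\SAT \APred \MultiCutCount{3}$, the plan is to go through a convenient $\SAT$-equivalent problem rather than reduce from $\SAT$ directly. A natural choice is $\#\text{Min-3-Terminal-Cut}$ phrased so that it matches known hard problems: I would use the fact (Dahlhaus et al.~\cite{Dalhaus}) that the \emph{decision/optimization} multiterminal cut problem with three terminals is $\mathrm{NP}$-hard, and strengthen this to a counting statement. Concretely, I would recall the standard reduction from (a hard variant of) \textsc{Max-Cut} or from \textsc{3-Terminal-Cut} and check that it is parsimonious or near-parsimonious — i.e. that minimum multiterminal cuts correspond bijectively (or with an easily-computable multiplicity) to the objects being counted in a known $\SAT$-equivalent counting problem. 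The cleanest route is probably: take a $\SAT$-equivalent counting problem known to us, such as counting the size-$b$ minimum cuts in a two-terminal $(s,t)$-cut instance — which is $\SAT$-equivalent because counting minimum $(s,t)$ edge cuts encodes counting maximum antichains / satisfying assignments via standard gadgets — and then reduce \emph{that} to $\MultiCutCount{3}$ by adding a third terminal $\gamma$ attached with enough parallel edges (simulated by many parallel paths, since the graph is simple) that $\gamma$ is never separated in any minimum cut, so that minimum $3$-terminal cuts of the augmented graph are exactly minimum $(s,t)$-cuts of the original. The promise ``every multiterminal cut has size at least $b$'' is maintained by the construction because we start from an instance with that promise and the augmentation does not create smaller cuts.

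The main obstacle I expect is \textbf{parsimoniousness bookkeeping}: the classical hardness reductions for multiterminal cut (and for the two-terminal min-cut counting problems) are optimization reductions, and I must verify that they preserve the \emph{number} of optimal solutions up to a polynomial-time-computable factor, and that they respect the size-$\geq b$ promise. Two technical points need care: (i) the simulation of parallel edges by disjoint paths in a simple graph must be done so that cutting any such path costs strictly more than the budget unless it is cut in exactly one place, so that these paths contribute a controllable multiplicative factor (or factor $1$) to the count; and (ii) ensuring the third terminal is genuinely irrelevant — e.g. by attaching $\gamma$ to a vertex via sufficiently many internally-disjoint length-two paths so that separating $\gamma$ alone already exceeds budget $b$. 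Once these are handled, the two AP-reductions compose with the ambient $\SAT$-equivalences to give $\MultiCutCount{3} \APeq \SAT$. Since this lemma was ``used before in~\cite{Ising} without stating it formally,'' I would also cross-check the construction there and simply adapt it, citing \cite{Ising} and \cite{Dalhaus} for the underlying hardness.
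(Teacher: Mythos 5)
Your easy direction is fine and matches what the paper implicitly relies on: the counting function is in $\numP$, and every $\numP$ function is AP-reducible to $\SAT$ by \cite{APred}. The hard direction, however, rests on a false premise. You propose to reduce from counting minimum two-terminal $(s,t)$-cuts, asserting that this problem is $\SAT$-equivalent ``because counting minimum $(s,t)$ edge cuts encodes counting maximum antichains''. But the decision version of minimum $(s,t)$-cut is in P (max-flow), and the counting version is interreducible with counting antichains/downsets in a partial order, which \cite{APred} shows is $\BIS$-equivalent (complete for $\RHPi$), not $\SAT$-equivalent. Asserting $\SAT$-equivalence there amounts to asserting $\BIS \APeq \SAT$, contrary to the standing conjecture underlying this whole line of work. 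Consequently your padding construction (a third terminal attached so heavily that it is never separated in a minimum cut) would at best show $\BIS \APred$ \MultiCutCount{$3$}, which is far weaker than the lemma claims. The whole reason the lemma uses \emph{three} terminals is that three is exactly where the decision problem jumps from P to NP-hard.

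The tool you are missing is \cite[Theorem 1]{APred}: if a counting problem is in $\numP$ and the corresponding decision problem is NP-hard, then the counting problem is AP-interreducible with $\SAT$. The paper's proof is precisely this two-line argument: Dahlhaus et al.~\cite{Dalhaus} show that deciding whether a three-terminal instance admits a multiterminal cut of size~$b$ (equivalently, given the promise, whether the minimum multiterminal cut has size exactly~$b$) is NP-hard, and the generic theorem then yields $\SAT$-equivalence of the counting version; the remaining details, including the treatment of the promise, are worked out in \cite[Section 4]{Ising}. No parsimonious reduction is required. Your first instinct --- checking whether the Dahlhaus et al.\ reduction is parsimonious --- is a possible but unnecessarily laborious alternative; the route you actually settled on does not work.
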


\begin{proof} 
This follows essentially from the  
proof of Dalhaus et al.~\cite{Dalhaus} that the decision version of 
 \MultiCutCount{$3$}
is NP-hard 
and from the fact \cite[Theorem 1]{APred} that
the NP-hardness of a decision problem implies that the corresponding counting 
problem is AP-interreducible with $\SAT$.
The details are given in \cite[Section 4]{Ising}.
\end{proof}

\begin{lemma}
\label{lem:hardweighted} Suppose that $H$ is a tree with an induced~$J_3$.
Then $$\SAT \APred\wHom{H}.$$
\end{lemma}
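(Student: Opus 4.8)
The plan is to use Lemma~\ref{lem:cut} to replace the target $\SAT$ by $\MultiCutCount{3}$, so that it suffices to give an AP-reduction from $\MultiCutCount{3}$ to $\wHom{H}$. First I would fix an induced copy of $J_3$ inside $H$, with vertices $\hat w$, $\hat c_1,\hat c_2,\hat c_3$ and $\hat c'_1,\hat c'_2,\hat c'_3$, where $\hat w\sim\hat c'_i$ and $\hat c'_i\sim\hat c_i$ for each $i$. Since $H$ is a tree, $\{\hat w,\hat c_1,\hat c_2,\hat c_3\}$ lies on one side of its bipartition and $\{\hat c'_1,\hat c'_2,\hat c'_3\}$ on the other, and --- crucially --- because the copy is \emph{induced}, these seven vertices span no edges of $H$ other than the eight edges of $J_3$. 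The role of the weight functions in $\wHom{H}$ will be to confine an input graph to this induced $J_3$: a $0/1$-valued weight function acts as a restriction of the allowed images of a vertex, and in particular pins a vertex when it is the indicator of a singleton. After this confinement $\wHom{H}$ behaves like homomorphism counting into $J_3$ itself.

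Given a $\MultiCutCount{3}$ instance $(b,G,\alpha,\beta,\gamma)$ and an amplification parameter $t\ge1$ (polynomial in $|G|$ and $\varepsilon^{-1}$, fixed later), I would build the $\wHom{H}$ instance $(G^*,W)$ in which $G^*$ has vertex set $V(G)$ together with $t$ fresh vertices $x_e^1,\dots,x_e^t$ for each edge $e\in E(G)$, with each $x_e^s$ joined to both endpoints of $e$. Then $G^*$ is bipartite with sides $V(G)$ and $X=\{x_e^s\}$, and connected because $G$ is. For the weights: set $w_u$ to the indicator of $\{\hat c'_1,\hat c'_2,\hat c'_3\}$ for $u\in V(G)\setminus\{\alpha,\beta,\gamma\}$, set $w_\alpha,w_\beta,w_\gamma$ to the indicators of the singletons $\{\hat c'_1\},\{\hat c'_2\},\{\hat c'_3\}$, and set $w_{x_e^s}$ to the indicator of $\{\hat w,\hat c_1,\hat c_2,\hat c_3\}$. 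A homomorphism that survives these weights amounts to a colouring $f\colon V(G)\to\{1,2,3\}$ with $f(\alpha)=1$, $f(\beta)=2$, $f(\gamma)=3$ (reading $\sigma(u)=\hat c'_{f(u)}$), together with, for each $x_e^s$ with $e=(u,v)$, an image adjacent to both $\hat c'_{f(u)}$ and $\hat c'_{f(v)}$; since the copy of $J_3$ is induced the only candidates are $\hat w$ (always valid) and $\hat c_{f(u)}$ (valid exactly when $f(u)=f(v)$). Hence each $x_e^s$ has two valid images when $f$ is monochromatic on $e$ and one otherwise, and writing $m(f)$ for the number of monochromatic edges of $G$ under $f$ one gets
\[
Z_{H}(G^*,W)=\sum_{\substack{f\colon V(G)\to\{1,2,3\}\\ f(\alpha)=1,\,f(\beta)=2,\,f(\gamma)=3}} 2^{t\,m(f)}=Y^{|E(G)|}\sum_f Y^{-|\mathrm{cut}(f)|},
\]
where $Y=2^t$ and $\mathrm{cut}(f)$ is the set of bichromatic edges of $f$.

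The hard part --- and the combinatorial heart of the argument --- is to show that the dominant term of this sum counts minimum multiterminal cuts \emph{exactly}. Each $\mathrm{cut}(f)$ is a multiterminal cut, and any multiterminal cut $E'$ admits a colouring $f$ with $\mathrm{cut}(f)\subseteq E'$ (colour the terminal components $1,2,3$ and the rest arbitrarily), so $\min_f|\mathrm{cut}(f)|=b^*$, the minimum multiterminal cut size. I would then prove that $f\mapsto\mathrm{cut}(f)$ restricts to a bijection from the cost-$b^*$ colourings onto the size-$b^*$ multiterminal cuts: for a size-$b^*$ cut $E'$, short ``merging'' arguments show that no two floating components of $G-E'$ (those avoiding all terminals) are adjacent, and that every floating component is adjacent to exactly two of the three terminal components --- adjacency to one, or to all three, would yield a strictly smaller cut, and adjacency to none is impossible since $G$ is connected --- whence colouring terminal components by their indices and each floating component by the unique missing index realises $E'$ as $\mathrm{cut}(f)$ for a \emph{unique} $f$. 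Thus if $M$ denotes the number of minimum multiterminal cuts,
\[
Z_{H}(G^*,W)=M\,Y^{|E(G)|-b^*}\,(1+\theta),\qquad 0\le\theta<3^{|V(G)|}/Y .
\]

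Finally I would choose $t$ polynomially large so that $3^{|V(G)|}/Y$ is, say, at most $\varepsilon/10$ and the fractional part of $\log_Y Z_{H}(G^*,W)$ lies below $\tfrac1{10}$. One oracle call for $\wHom{H}$ with constant error then determines the integer $|E(G)|-b^*$, and hence $b^*$; by the input promise $b\le b^*$, so if $b<b^*$ the correct output is $0$, and if $b=b^*$ a second oracle call with error $\varepsilon/4$ returns a value which, divided by the known integer $Y^{|E(G)|-b^*}$, approximates $M\ge1$ within the factor $e^{\pm\varepsilon}$ required of an FPRAS. Since every error parameter is the reciprocal of a polynomial and $G^*$ has polynomial size, this is a valid AP-reduction from $\MultiCutCount{3}$ to $\wHom{H}$, completing the proof. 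Besides the bijection above, the only point needing care is this last step: extracting the integers $b^*$ and then $M$ from merely approximate values of the partition function within the precision that an AP-reduction permits --- which works precisely because an FPRAS output for $\MultiCutCount{3}$ need only be multiplicatively close, not exact.
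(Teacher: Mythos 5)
Your proposal is correct and follows essentially the same route as the paper: reduce from $\MultiCutCount{3}$ via Lemma~\ref{lem:cut}, attach polynomially many midpoint vertices to each edge of~$G$, and use $0/1$ weight functions to confine images to the induced copy of~$J_3$ and to pin the three terminals, so that the dominant term of the partition function counts minimum multiterminal cuts. One intermediate claim is wrong, though harmlessly so: you assert that in a minimum cut a floating component must be adjacent to \emph{exactly two} terminal components, on the grounds that adjacency to one or to three would allow a cheaper cut; but the same merging argument (absorb the floating component into any one adjacent terminal component, deleting the edges between them from the cut) rules out adjacency to two as well, so a minimum multiterminal cut has \emph{no} floating components and exactly three components. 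This is in fact the simpler observation the paper uses, and your bijection between cost-$b^*$ colourings and size-$b^*$ cuts survives because the problematic case is vacuous. The only other (benign) divergence is that you spend an extra oracle call determining $b^*$, whereas the paper works directly with the promised value~$b$ and reads off $N$ from $N \le Z/Z^* \le N+\tfrac14$, which also handles the case $N=0$ without a separate branch.
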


\begin{proof}
We will prove the lemma by giving an  AP-reduction from \MultiCutCount{$3$} to $\wHom{H}$.
The lemma will then follow from Lemma~\ref{lem:cut}.

Suppose that $H$ has an induced subgraph which is isomorphic to~$J_3$.
To simplify the notation, label the vertices and edges of~$H$ in such a way that
the induced subgraph is (identically) the graph $J$ depicted in Figure~\ref{fig:J}.
\begin{figure}
\centering{
 \begin{tikzpicture}[fill=white,scale=0.7,
line width=0.5pt,inner sep=1pt,minimum size=2.5mm]
\pgfsetxvec{\pgfpoint{1.7cm}{0cm}}
\pgfsetyvec{\pgfpoint{0cm}{1.7cm}}
\path 
(0,0) node [draw,fill,circle,minimum size=0.6cm](w){ $w$}
(1,0) node [draw,fill,circle,minimum size=0.6cm](y0){ $y_0$}
(-1,0) node [draw,fill,circle,minimum size=0.6cm](x0){ $x_0$}
(0,-1) node [draw,fill,circle,minimum size=0.6cm](z0){ $z_0$}
(-2,0) node [draw,fill,circle,minimum size=0.6cm](x1){ $x_1$}
(2,0) node [draw,fill,circle,minimum size=0.6cm](y1){ $y_1$}
(0,-2) node [draw,fill,circle,minimum size=0.6cm](z1){ $z_1$}
;
\draw [-] (x1)--(x0) -- (w) -- (y0) -- (y1);
\draw [-] (w)--(z0)--(z1); 

\end{tikzpicture}

 }
 \caption{The tree $J$.}
\label{fig:J}
\end{figure}
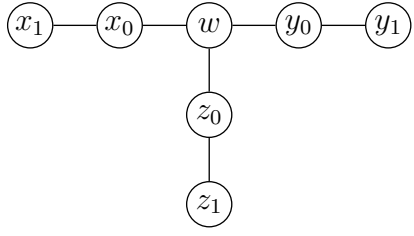

Let $b$, $G=(V,E)$, $\alpha$, $\beta$ and $\gamma$ be an input to \MultiCutCount{$3$}.
Let  
$s= 2 + |E(G)|+2|V(G)|$.
(The exact size of~$s$ is not important, but
it has to be at least this big to make the calculation work,
and it has to be at most a polynomial in the size of~$G$.)
Let $G'$ be the graph defined as follows.
First, let $V'(G)= \{(e,i) \mid e\in E,i\in[s]\}$.
Thus, $V'(G)$ contains $s$ vertices for each edge $e$ of~$G$.
Then let $G'$
be the graph with vertex set
$V(G') = V(G) \cup V'(G)$
and edge set 
 $$E(G') = \{(u,(e,i)) \mid u\in V(G), (e,i)\in V'(G),  
\mbox{and $u$ is  an endpoint of~$e$}  \}.$$

We will define weight functions $w_v$ for $v\in V(G')$
so that an approximation to
the number of size-$b$ multi-terminal cuts for $G$ with terminals $\alpha$, $\beta$ and $\gamma$
can be obtained from an approximation to
$Z_H(G',W(G',H))$. We start by defining the set of pairs $(v,c)\in V(G')\times V(H)$ 
for which we will specify $w_v(c)>0$. In particular, define the set $\Omega$ as follows.
$$\Omega = 
\{(\alpha,x_0),(\beta,y_0),(\gamma,z_0)\} 
\cup
\big((V(G)-\{\alpha,\beta,\gamma\}) \times \{x_0,y_0,z_0\} \big)
\cup
\left(V'(G) \times \{w,x_1,y_1,z_1\}\right).$$
Let $w_v(c)=1$ if $(v,c)\in \Omega$. Otherwise, let $w_v(c)=0$.

Thus,  $Z_H(G',W(G',H))$
is the number of homomorphisms $\sigma$ from~$G'$ to~$H$ with
$\sigma(V(G)) = \{x_0,y_0,z_0\}$,
$\sigma(V'(G)) \subseteq \{w,x_1,y_1,z_1\}$,
$\sigma(\alpha)=x_0$, $\sigma(\beta)=y_0$ and $\sigma(\gamma)=z_0$.
We will refer to these as ``valid'' homomorphisms. 
  
If $\sigma$ is  a  valid homomorphism, then let 
\begin{align*}
\bichrom{\sigma} =
\{ e \in E(G) \mid \quad & 
\mbox{the vertices of~$V(G)$ corresponding to } \\
& \mbox{the endpoints of~$e$ are mapped to different colours by~$\sigma$} 
\}.\end{align*}
Note that, for every valid homomorphism~$\sigma$,
$\bichrom{\sigma}$ is a multiterminal cut 
for the  graph~$G$ with terminals~$\alpha$, $\beta$ and~$\gamma$.

For every multiterminal cut $E'$, let $\components{E'}$
denote the number of components
in the graph $(V,E\setminus E')$.
For each multiterminal cut~$E'$,
let $Z_{E'}$ denote
the number of valid homomorphisms~$\sigma$ from~$G'$ to~$H$
such that $\bichrom{\sigma} = E'$.
From the definition of multiterminal cut, $\components{E'}\geq 3$. 
If $\components{E'}=3$ then
$$Z_{E'} =  2^{s(E(G)-E')}$$
since there are
 two choices for the colours of
each vertex $(e,i)$ with $e\in E(G)-E'$.
(Since the endpoints of each such edge~$e$
are assigned the same colour by~$\sigma$, the vertex $(e,i)$
can either be coloured~$w$, or it can be coloured with one other colour.)
Also, 
$$Z_{E'} \leq  2^{s(E(G)-E')} 3^{\components{E'}-3},$$ 
since  the component
of~$\alpha$ is mapped to~$x_0$ by~$\sigma$,
the component of~$\beta$ is mapped to~$y_0$,  
the component of~$\gamma$ is mapped to~$z_0$,  and each remaining
component  is mapped to  a colour in $\{x_0,y_0,z_0\}$.
 
Let $Z^*= 2^{s(E(G)-b)}  $.
If $E'$ has size~$b$ then 
$\components{E'}=3$. (Otherwise, there would be a smaller multiterminal cut,
contrary to the definition of \MultiCutCount{$3$}.)
So, in this case, 
\begin{equation}
Z_{E'} = Z^*.
\label{eq:smgoodcuts}
\end{equation}

If $E'$ has size $b'>b$ then 
$$Z_{E'} \leq 2^{s(E(G)-b')} 3^{\components{E'}-3}   
               = 2^{-s(b'-b)} 3^{\components{E'}-3} Z^*
               \leq 2^{-s} 3^{|V(G)|} Z^*.
$$
Clearly, there are at most $2^{|E(G)|}$   multiterminal cuts~$E'$.
So, using the definition of~$s$,  
\begin{equation}
\label{eq:smbigcuts}
\sum_{E' : |E'|>b} Z_{E'} \leq \frac{Z^*}{4}
\end{equation}  

From Equation~(\ref{eq:smgoodcuts}), we find that,
if  there are $N$ size-$b$ multiterminal cuts
then 
$$Z_H(G',W(G',H)) = N Z^* + \sum_{E' : |E'|>b} Z_{E'}  .$$
So applying 
Equation  (\ref{eq:smbigcuts})  , 
we get
$$ N \leq \frac{Z_H(G',W(G',H))}{Z^*} \leq N + \frac{1}{4}.$$

Thus, we have an AP-reduction
from \MultiCutCount{$3$}  to $\nHom{H}$.
To determine the accuracy with which~$Z(G)$ should be approximated
in order to achieve a given accuracy in the approximation to~$N$,
see the   proof of Theorem 3 of \cite{APred}.
\end{proof}

\section{Tree homomorphisms capture the ferromagnetic Potts model.}
\label{sec:potts} 
The problem $\nHom{H}$ counts colourings of a graph satisfying ``hard'' constraints:
two colours (corresponding to vertices of $H$) are either allowed on 
adjacent vertices of the instance or disallowed.  By contrast, the Potts model  
(to be described presently) is ``permissive'':  every pair of colours is allowed 
on adjacent vertices, but some pairs are favoured relative to others.  The strength 
of interactions between colours is controlled by a real parameter~$\gamma$.  
In this section, we will show that approximating the number of homomorphisms to $J_q$
is equivalent in difficulty to the problem of approximating the partition function of
the ferromagnetic $q$-state Potts model.  Since the latter problem is not 
known to be \BIS-easy for any $q>2$, we might speculate that approximating $\nHom{J_q}$ is
not \BIS-easy for any $q>2$.  If so, $J_3$ would be the smallest tree with this property.

It is interesting that, for fixed~$q$,
a continuously parameterised class of permissive problems can be shown to be
computationally equivalent to a single counting problem with hard constraints.  
Suppose, for example, that we wanted to investigate the possibility that computing the 
partition function of the $q$-state ferromagnetic Potts model formed a hierarchy 
of problems of increasing complexity with increasing~$q$.  
We could equivalently investigate the 
sequence of problems $\nHom{J_q}$, which seems intuitively to be an easier proposition.

We start with some definitions. 
Let $q$ be a positive integer. 
The $q$-state Potts model is 
a statistical mechanical model of Potts~\cite{Potts} which  
generalises the classical Ising model from 
two to $q$~spins.  
In this model, spins interact along edges of a graph~$G=(V,E)$.
The strength of each interaction is governed by
a parameter~$\gamma$ (a real number which is always at least~$-1$, and is greater than~$0$
in the   \emph{ferromagnetic} case 
which we study,
where like spins attract each other).
The $q$-state Potts partition function  is
defined as follows.
\begin{equation}\label{eq:PottsGph}
\ZPotts(G;q,\gamma) = 
\sum_{\sigma:V\rightarrow [q]}
\prod_{e=\{u,v\}\in E}
\big(1+\gamma\,\delta(\sigma(u) ,\sigma(v))\big),
\end{equation}
where  
$\delta(s,s')$ is~$1$ if $s=s'$, and is~$0$ otherwise.  

The Potts partition function is well-studied.
In addition to the complexity-theory literature mentioned below,
we refer the reader to Sokal's survey~\cite{Sokal05}.

In order to state our results in the strongest possible form,  
we use the notion of ``efficiently approximable real number'' from Section~\ref{sec:prelim}.
Recall that a real number $\gamma$ is efficiently approximable if there is an FPRAS for
the problem of computing it.
The notion of ``efficiently approximable'' is not important to the constructions below ---
the reader who prefers to 
assume that the parameters are rational
will still appreciate the essence of the reductions.

Let $q$ be a positive integer 
and let $\gamma$ be
a positive efficiently approximable real. Consider the
following computational problem, which is parameterised by~$q$ and~$\gamma$.
\begin{description}
\item[Problem] $\Potts(q,\gamma)$.
\item[Instance] Graph $\graph=(\graphvertices,\graphedges)$.
\item[Output]  $\ZPotts(\graph;q,\gamma)$.
\end{description} 
This problem may be defined more generally for 
non-integers~$q$ via the Tutte polynomial. 
We will use some results from
\cite{FerroPotts} which are more general, but we do not need
the generality here.

In an important paper, Jaeger, Vertigan and Welsh~\cite{JVW90} examined
the problem of evaluating the Tutte polynomial.
Their result gave a complete classification of the computational complexity of
$\Potts(q,\gamma)$.  
For every fixed positive integer~$q$, apart from the trivial $q=1$, and for
every fixed~$\gamma$,  
they
showed that this computational problem is \#P-hard.
When $q=1$ and $\gamma$ is rational, 
$\ZPotts(\graph;q,\gamma)$ can easily be exactly evaluated in polynomial time. 
The complexity of the approximation problem has also been
partially resolved. In the positive direction, Jerrum and Sinclair~\cite{JS93} gave an FPRAS
for the case $q=2$. In the negative direction, 
Goldberg and Jerrum~\cite{FerroPotts} showed that approximation is
$\BIS$-hard for every fixed $q>2$. They left open the question of whether approximating 
$\ZPotts(G;q,\gamma)$ is as easy as $\BIS$ (or whether it might be even harder).

In this paper, we show that  the approximation problem is equivalent in complexity 
to a tree homomorphism problem.
In particular, we show that
 $\Potts(q,\gamma)$ is AP-equivalent to the problem 
of approximately counting homomorphisms to the tree~$J_q$.

We first give an AP-reduction from
$\Potts(q,1)$  to $\nHom{J_q}$.

\begin{lemma}
\label{lem:tocol}
Let $q>2$ be a positive integer.
$$\Potts(q,1) \APred \nHom{J_q}.$$
\end{lemma}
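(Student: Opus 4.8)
The plan is to realise the $\gamma=1$ ferromagnetic Potts interaction --- whose local weight is $1+\delta(\cdot,\cdot)$ --- as a small homomorphism gadget inside $J_q$, designating the $q$ vertices $c'_1,\dots,c'_q$ of $J_q$ as the Potts colours. I may assume the instance $G=(V,E)$ of $\Potts(q,1)$ is connected with at least one edge (otherwise split into components, using multiplicativity of $\ZPotts$). First I would let $G_0$ be $G$ with every edge subdivided once, writing $x_e$ for the new degree-two vertex on $e$; then $G_0$ is bipartite with parts $V$ and $\{x_e:e\in E\}$. If $\tau\colon G_0\to J_q$ is a homomorphism with $\tau(V)\subseteq\{c'_1,\dots,c'_q\}$ and we set $\sigma(v)=i$ when $\tau(v)=c'_i$, then since $\Gamma_{J_q}(c'_i)=\{c_i,w\}$ the vertex $x_e$ for $e=\{u,v\}$ has exactly $|\{c_{\sigma(u)},w\}\cap\{c_{\sigma(v)},w\}|=1+\delta(\sigma(u),\sigma(v))$ legal images, independently over $e$; hence the number of such $\tau$ equals $\sum_{\sigma:V\to[q]}\prod_{\{u,v\}\in E}(1+\delta(\sigma(u),\sigma(v)))=\ZPotts(G;q,1)$. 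Thus the homomorphisms of the subdivided graph with $\tau(V)\subseteq\{c'_1,\dots,c'_q\}$ already count the Potts partition function.

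The difficulty --- which I expect to be the main obstacle --- is that $Z_{J_q}(G_0)$ also includes the spurious homomorphisms sending $V$ into $A=\{w,c_1,\dots,c_q\}$; by bipartiteness of $J_q$ and connectedness of $G_0$ these are the only other possibility, but they cannot be ruled out, since the homomorphism collapsing $G_0$ onto the single edge $\{w,c'_1\}$ always survives (one cannot force $V$ into $B=\{c'_1,\dots,c'_q\}$). My fix would be to make the good homomorphisms exponentially more numerous. Fix a constant $m=m(q)$ and attach to each $v\in V$ a number $k$ (polynomially large, chosen below) of vertex-disjoint copies of the booster $F_m$: a new vertex $a$ joined to $v$ and to $m$ new leaves; call the result $G''$. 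A homomorphism with $v\mapsto z$ extends over one copy of $F_m$ in $h(z):=\sum_{z'\in\Gamma_{J_q}(z)}d_{J_q}(z')^m$ ways, so $h(w)=q\cdot 2^m$, $h(c'_i)=1+q^m$, and $h(c_i)=2^m$. Because $q>2$ I can choose $m$ with $h(c'_i)=1+q^m>q\cdot 2^m=h(w)>2^m=h(c_i)$, and this is exactly where the hypothesis $q>2$ enters: for $q=2$ one has $h(w)=2^{m+1}>1+2^m=h(c'_i)$ for every $m$.

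Next I would track the two orientations for $G''$. The boosters are colour-blind on $B$ and are not adjacent to any $x_e$, so the homomorphisms of $G''$ with $V\to B$ number $h(c'_1)^{k|V|}\cdot\ZPotts(G;q,1)$, while those with $V\to A$ number at most $h(w)^{k|V|}\cdot Z_2$, where $Z_2$ is the number of homomorphisms of $G_0$ alone with $V\to A$, crudely $Z_2\le q^{O(|V|+|E|)}$. Since $\ZPotts(G;q,1)\ge q^{|V|}$, the bad-to-good ratio is at most $(h(w)/h(c'_1))^{k|V|}\cdot q^{O(|V|+|E|)}$, and as $h(w)/h(c'_1)<1$ this falls below any prescribed $\varepsilon'$ once $k=\poly(|V|,|E|,\log(1/\varepsilon'))$. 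Thus $\ZPotts(G;q,1)\le Z_{J_q}(G'')/h(c'_1)^{k|V|}\le(1+\varepsilon')\,\ZPotts(G;q,1)$. Finally, given a tolerance $\varepsilon$ I would set $\varepsilon'=\varepsilon/3$, call the $\nHom{J_q}$ oracle on $G''$ with tolerance $\varepsilon/3$, and divide the returned value by the explicitly known integer $h(c'_1)^{k|V|}$; since $\ZPotts(G;q,1)>0$ this yields an $e^{\pm\varepsilon}$-approximation, with the accuracy bookkeeping exactly as in the proof of Theorem~3 of \cite{APred}. The only genuinely delicate point, as noted, is replacing the impossible ``force the Potts colours'' step by this booster argument, which succeeds precisely because $q>2$.
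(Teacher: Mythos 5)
Your proposal is correct and is essentially the paper's proof: both use the $2$-stretch so that each midpoint vertex has $1+\delta(\sigma(u),\sigma(v))$ legal images, and both suppress the wrongly-oriented homomorphisms with a pendant star gadget whose extension count exploits the degree gap between $w$ (degree $q$) and the other vertices (degree $\le 2$), which is exactly where $q>2$ is needed. The only cosmetic difference is that the paper attaches a single such gadget with polynomially many leaves to one vertex, whereas you attach polynomially many constant-size copies to every vertex; the counting and error analysis are otherwise the same.
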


\begin{proof}

Let $G$ be an   instance of $\Potts(q,1)$.
We can assume without loss of generality that $G$ is connected,
since it is clear from~(\ref{eq:PottsGph})
 that a graph~$G$ with connected components $G_1,\ldots,G_\kappa$ satisfies
$\ZPotts(G;q,\gamma)=\prod_{i=1}^{\kappa} \ZPotts(G_i;q,\gamma)$.

Let $G'$
be the 
graph with 
$$V(G') = V(G) \cup E(G)$$
and 
$$E(G') = \{(u,e) \mid u\in V(G), e \in E(G),  
\mbox{and $u$ is  an endpoint of~$e$}  \}.$$
$G'$ is sometimes referred to as the ``$2$-stretch'' of~$G$. 
For clarity, when we consider an element $e\in E(G)$ as a vertex 
of $G'$ (rather than an edge of $G$), 
we shall refer to it as the ``midpoint vertex corresponding to edge~$e$''.

 Let $s$ be an integer
 satisfying
 \begin{equation}
 \label{eq:firsts}
   8 q  {(q+1)}^{|V(G)|+|E(G)|} \leq {\left(\frac{q}{2}\right)}^s .
   \end{equation}
   For concreteness, take $s$ to be the smallest integer satisfying
   (\ref{eq:firsts}). 
The exact size of~$s$ is not so important. The calculation below 
relies on the fact that $s$ is large enough to satisfy~(\ref{eq:firsts}). On the other
hand, $s$ must be at most a polynomial in the size of~$G$, to make the reduction feasible.
  
We will construct an instance~$G''$ of $\nHom{J_q}$
by adding some gadgets to~$G'$.
Fix a vertex $v\in V(G)$.
Let $G''$ be the graph with 
$V(G'')=V(G) \cup E(G) \cup \{v_0,\ldots,v_s\}$
and $E(G'') = E(G') \cup \{(v,v_0)\} \cup 
\{(v_0,v_i) \mid i\in [s]\}$.
See Figure~\ref{fig:firstinstance}.
\begin{figure}
\centering{
 \begin{tikzpicture}[fill=white,scale=0.7,
line width=0.5pt,inner sep=1pt,minimum size=2.5mm]
\pgfsetxvec{\pgfpoint{1.7cm}{0cm}}
\pgfsetyvec{\pgfpoint{0cm}{1.7cm}}
 \path (0,-4.5) node   {$V(G)$};
\draw (0,-4.5) circle (2.5cm);
\draw (5,-4.5) circle (2.5cm);
\path (5,-4.5) node  {$E(G)$};
\draw [line width=3pt]  (1.5,-4.5) .. controls (2,-4)  and (3,-5) .. (3.5,-4.5);

\path  
(0,-3.7) node  [draw,fill,circle,minimum size=0.6cm](v){ $v$ } 
(5,0) node  [draw,fill,circle,minimum size=0.6cm](vp){ $v_0$ } 
[grow'=left, level distance=85mm,sibling distance=10mm]
   child {node [draw,fill,circle,minimum size=0.6cm] { $  v_{1}$} }  
   child {node [draw,fill,circle,minimum size=0.6cm] { $  v_{2}$} }  
   child {node  { } }  
   child {node  { } }  
   child {node [draw,fill,circle,minimum size=0.6cm] { $  v_{s}$} }  
;
\draw (v)--(vp);
\path (-0.1,0.0) node [draw,fill=black,circle,minimum size=0.2cm] {}; 
\path (-0.1,0.3) node [draw,fill=black,circle,minimum size=0.2cm] {}; 
\path (-0.1,0.6) node [draw,fill=black,circle,minimum size=0.2cm] {}; 
\end{tikzpicture}

 }
 \caption{The instance~$G''$.  The thick curved line between $V(G)$ and
 $E(G)$ indicates that  the edges in~$E(G')$ go between elements of~$V(G)$ and
 elements of~$E(G)$, but these are not shown.
 }
\label{fig:firstinstance}

\end{figure}
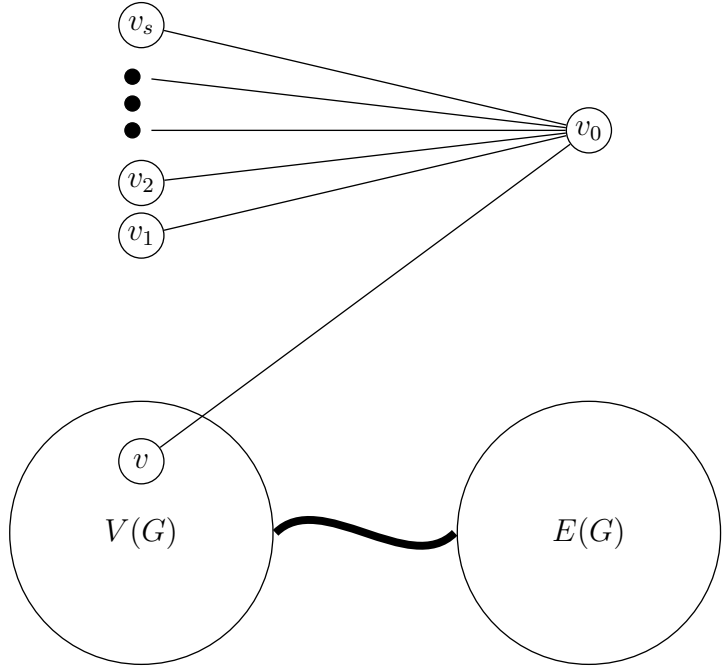

We say that a homomorphism~$\sigma$ from~$G''$ to~$J_q$ is \emph{typical} if
$\sigma(v_0)=w$.
Note that, in a typical homomorphism, 
every vertex in $V(G)$ is mapped by~$\sigma$ to
one of the colours from $\{c'_1,\ldots,c'_q\}$.
Let $Z_{J_q}^t(G'')$ denote the number of typical homomorphisms from~$G''$ to~$J_q$.

Given a mapping $\sigma: V(G) \rightarrow \{c'_1,\ldots,c'_q\}$,
the number of typical homomorphisms which induce this mapping is
$2^{\mathrm{mono}(\sigma)} q^s$, where $\mathrm{mono}(\sigma)$
is the number of edges $e\in E(G)$ whose endpoints in $V(G)$ are mapped to 
the same colour by~$\sigma$.
(To see this, note that there are two possible colours for 
 the midpoint vertices corresponding to such edges,
whereas the other 
midpoint vertices
have to be mapped to~$w$ by~$\sigma$.
Also, there are $q$ possible colours for each vertex in $\{v_1,\ldots,v_s\}$.)
Thus, using the definition~(\ref{eq:PottsGph}),
we conclude that
$$Z_{J_q}^t(G'') = \sum_{\sigma: V(G) \rightarrow \{c'_1,\ldots,c'_q\}} 2^{\mathrm{mono}(\sigma)} q^s
= q^s \ZPotts(G;q,1).$$

The number of atypical homomorphisms from~$G''$ to~$J_q$,
which we denote by $Z_{J_q}^a(G'')$, is
at most 
$2q 2^s {(q+1)}^{|V(G)|+|E(G)|}$. (To see this, note, that
there are $2q$ alternative colours for~$v_0$.
For each of these, there are at most~$2$ colours for each vertex in $\{v_1,\ldots,v_s\}$
and at most $q+1$ colours for each vertex in $V(G)\cup E(G)$.)
Using Equation~(\ref{eq:firsts}),
we conclude that   $Z_{J_q}^a(G'') \leq q^s/4$.
Since $Z_{J_q}(G'') = Z_{J_q}^t(G'') + Z_{J_q}^a(G'')$, we have

\begin{equation}
\label{done}
\ZPotts(G;q,1) \leq \frac{Z_{J_q}(G'')}{q^s} \leq \ZPotts(G;q,1) + \frac{1}{4}.
\end{equation}
  
Equation~(\ref{done}) guarantees that the construction is an AP-reduction from $\Potts(q,1)$ to
the problem
$\nHom{J_q}$. To determine the accuracy with which 
$Z_{J_q}(G'')$ should be approximated in order to achieve a given desired accuracy
in the approximation to $\ZPotts(G;q,1)$, see the proof of Theorem 3 of \cite{APred}.
\end{proof}
 
 In order to get a reduction going the other direction, we need to 
 generalise  the Potts partition function to a hypergraph version.
 Let $\hypergraph=(\hypervertices,\hyperedges)$ be a hypergraph
with vertex set $\hypervertices$ and hyperedge (multi)set~$\hyperedges$. 
 Let $q$ be a positive integer. The $q$-state Potts partition function of $\hypergraph$ is
defined as follows:
$$
\ZPotts(\hypergraph;q,\gamma) = 
\sum_{\sigma:\hypervertices\rightarrow [q]}
\prod_{\hyperedge\in\hyperedges}
\big(1+\gamma \delta(\{\sigma(\hypervertex) \mid \hypervertex\in \hyperedge\})\big),$$
where   $\delta(S)$ is~$1$ if its argument is a singleton and 0 otherwise.
Let $q$ be a positive integer and let $\gamma$ be a positive 
efficiently approximable real.
We consider the following computational problem, which is parameterised by~$q$ and~$\gamma$.
\begin{description}
\item[Problem] $\hPotts(q,\gamma)$.
\item[Instance]  A hypergraph $\hypergraph=(\hypervertices,\hyperedges)$.
\item[Output]  $\ZPotts(\hypergraph;q,\gamma)$.
\end{description} 
We start by reducing $\nHom{J_q}$ to
the problem of approximating the Potts partition function of a hypergraph
with parameters~$q$ and~$1$.

\begin{lemma}\label{lem:fromcol}
Let $q$ be a positive integer.
$$\nHom{J_q}  \APred \hPotts(q,1).$$ 
\end{lemma}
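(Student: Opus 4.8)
The plan is to reverse the construction used in Lemma~\ref{lem:tocol}: there, a Potts instance $G$ was turned into a homomorphism instance by taking the $2$-stretch and attaching a pendant gadget forcing the ``midpoint'' behaviour; here I want to take an arbitrary instance $G$ of $\nHom{J_q}$ and produce a hypergraph $\hypergraph$ whose Potts partition function (with $q$ states and $\gamma=1$) encodes $Z_{J_q}(G)$, up to an easily computed factor and a negligible error. The key observation about $J_q$ is its bipartite structure: one side of the bipartition is $\{c_i\}\cup\{w\}$ and the other is $\{c'_i\}$, with $w$ adjacent to all $c'_i$ and each $c'_i$ adjacent to its unique $c_i$. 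So in any homomorphism $\sigma$ of a connected bipartite $G$ with parts $(V,V')$, one part — say $V'$ — maps into $\{c'_1,\dots,c'_q\}$, and then each vertex of $V$ that is adjacent only to $c'_i$-coloured vertices with a common index $i$ may be coloured $c_i$, while a vertex of $V$ adjacent to two differently-indexed $c'$-neighbours must be coloured $w$. Thus the ``real'' combinatorial content is the colouring of $V'$ by $[q]$, and each vertex $u\in V$ contributes a factor of $2$ if all its neighbours in $V'$ received a common colour, and a factor of $1$ otherwise. That factor $1+\delta$ applied to the set $\{\sigma(v)\mid v\in\Gamma_G(u)\}$ is exactly a Potts hyperedge on the neighbourhood of $u$.

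Concretely: as in Lemma~\ref{lem:tocol} I would first reduce to the case that $G$ is connected (the partition function is multiplicative over components, and $\hPotts$ is closed under disjoint union of hypergraphs). If $G$ is non-bipartite then $Z_{J_q}(G)=0$ since $J_q$ is bipartite, and this is detectable in polynomial time; otherwise fix the bipartition $(V,V')$. I then build two hypergraphs, $\hypergraph_V$ and $\hypergraph_{V'}$: in $\hypergraph_{V'}$ the vertex set is $V'$ and for each $u\in V$ there is one hyperedge equal to $\Gamma_G(u)\subseteq V'$ (with multiplicity, so degenerate neighbourhoods are handled), and symmetrically for $\hypergraph_V$ with roles swapped. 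A $[q]$-colouring $\sigma$ of $V'$ together with a choice, for each $u\in V$ whose $V'$-neighbourhood is monochromatic, of colour $c_i$ (for that common index $i$) versus $w$, is in bijection with the homomorphisms of $G$ sending $V'$ into $\{c'_i\}$; the number of such homomorphisms inducing a given $\sigma$ is $2^{\mathrm{mono}(\sigma)}$ where $\mathrm{mono}(\sigma)$ counts the hyperedges that are monochromatic under $\sigma$, and hence summing over $\sigma$ gives exactly $\ZPotts(\hypergraph_{V'};q,1)$. So $Z_{J_q}(G) = \ZPotts(\hypergraph_{V'};q,1) + \ZPotts(\hypergraph_V;q,1) - (\text{overcount})$, where the overcount is the number of homomorphisms with $\sigma(V)\subseteq\{c'_i\}$ \emph{and} $\sigma(V')\subseteq\{c'_i\}$ simultaneously — but that forces every vertex of $G$ into $\{c'_i\}$ with no edges of $J_q$ available among the $c'_i$, so the overcount is $0$ unless $G$ has no edges, an easy boundary case. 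One then makes a single oracle call to $\hPotts(q,1)$ on the disjoint union $\hypergraph_V \sqcup \hypergraph_{V'}$ (or two calls) and combines.

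There is one subtlety to watch, and I expect it to be the main obstacle: a homomorphism of $G$ that sends $V'$ into $\{c'_i\}$ may also have some vertices of $V$ mapped to $w$, and such a homomorphism could \emph{also} arise with the roles of $V$ and $V'$ reversed only if both parts avoid $w$; conversely a homomorphism could send \emph{some} of $V$ to $w$ and all of $V'$ into $\{c'_i\}$, which is counted in $\hypergraph_{V'}$, and symmetrically — so I must verify the inclusion–exclusion is clean and that the $\pm$ overcount term is genuinely trivial (it is, by the edgeless-graph observation above, which can be disposed of directly). A second, more routine subtlety is that this reduction is \emph{exact} (the factor is $1$, no error gadget is needed), so unlike Lemma~\ref{lem:tocol} there is no large parameter $s$ and no approximation slack to manage; the AP-reduction conditions are then immediate, with the required oracle accuracy $\delta$ taken equal to (a suitable fraction of) the target accuracy $\varepsilon$ since the final quantity is a sum of two oracle outputs with no cancellation. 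I would close by noting that the total hypergraph size is linear in $|V(G)|+|E(G)|$ and the construction is clearly polynomial-time, completing the AP-reduction.
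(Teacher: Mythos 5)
Your construction is exactly the paper's: restrict to connected bipartite $G$, split $Z_{J_q}(G)$ into the two homomorphism classes determined by which side of the bipartition lands in $\{c'_1,\dots,c'_q\}$, and observe that each class equals the Potts partition function of the hypergraph whose hyperedges are the neighbourhoods of the vertices on the opposite side, since a monochromatic neighbourhood admits two extensions ($c_i$ or $w$) while a non-monochromatic one forces $w$. One small correction: an oracle call on the disjoint union $\hypergraph_V\sqcup\hypergraph_{V'}$ returns the \emph{product} of the two partition functions rather than their sum, so you need the two separate calls you mention as the alternative.
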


\begin{proof} 
We can assume without loss of generality
that the instance to $\nHom{J_q}$  is bipartite, since
otherwise the output is zero.
We can also assume that it is connected since 
a graph~$G$ with connected components $G_1,\ldots,G_\kappa$ satisfies
$Z_{J_q}(G) = \prod_{i=1}^\kappa Z_{J_q}(G_i)$.
Finally, it is easy to find a bipartition of a connected bipartite graph in polynomial
time, so we can assume without loss of generality that this is provided as
part of the input.

Let $B=(U,V,E)$ be a connected instance of $\nHom{J_q}$ consisting
of vertex sets~$U$ and~$V$ and edge set $E$ (a subset of $U\times V$).
Let $Z_{J_q}^U(B)$ be the number of homomorphisms
from $B$ to $J_q$ in which vertices in~$U$ are coloured with colours in~$\{c'_1,\ldots,c'_q\}$. 
Similarly, let $Z_{J_q}^V(B)$ be the number of 
homomorphisms
from $B$ to $J_q$ in which vertices in~$V$ are coloured with colours in~$\{c'_1,\ldots,c'_q\}$. 
Clearly, $Z_{J_q}(B) = Z_{J_q}^U(B) +   Z_{J_q}^V(B)$.
We will show how to approximate $Z_{J_q}^U(B)$ using an approximation oracle
for  
$\hPotts(q,1)$. The approximation of $Z_{J_q}^V(B)$ is similar.

The construction is straightforward.
For every $v\in V$, let $\Gamma(v)$ denote the set of neighbours of vertex~$v$ in~$B$.
Let $F = \{\Gamma(v), \mid v\in V\}$.
Let $H=(U,F)$ be an instance of $\hPotts(q,1)$.

The reduction is immediate, because
$Z_{J_q}^U(B) = \ZPotts(H;q,1)$.
To see this, note that
every configuration $\sigma: U \rightarrow \{c'_1,\ldots,c'_q\}$ 
contributes weight 
$2^{\mathrm{mono}(\sigma)}$
to  $\ZPotts(H;q,1)$, where
${\mathrm{mono}(\sigma)}$ is the number of hyperedges in~$F$ that are monochromatic in~$\sigma$.
Also, the configuration~$\sigma$
can be extended in 
exactly $2^{\mathrm{mono}(\sigma)}$ ways to homomorphisms 
from~$B$ to~$J_q$.
\end{proof}

The next step is to reduce the problem of approximating the Potts partition function of
a hypergraph to the problem of approximating the Potts partition function of
a \emph{uniform} hypergraph, which is a hypergraph in which all hyperedges have the same
size. The reason for this step is that the paper \cite{FerroPotts}
shows how to reduce the latter to  the approximation of the Potts partition function of a \emph{graph},
which is the desired target of our reduction.

Let $q$ be a positive integer and let $\gamma$ be a positive  
efficiently approximable real.
We consider the following computational problem, which,  
like $\hPotts(q,\gamma)$,
is parameterised by~$q$ and~$\gamma$.
\begin{description}
\item[Problem] $\uhPotts(q,\gamma)$.
\item[Instance]  A uniform hypergraph $\hypergraph=(\hypervertices,\hyperedges)$.
\item[Output]  $\ZPotts(\hypergraph;q,\gamma)$.
\end{description} 
We will actually only use the following lemma with $\gamma=1$
but we state, and prove, the more general lemma, since it is no more difficult to
prove.

\begin{lemma}\label{lem:touniform}
Let $q$ be a positive integer and let $\gamma$ be a positive 
efficiently approximable real. Then
$$\hPotts(q,\gamma) \APred \uhPotts(q,\gamma).$$
\end{lemma}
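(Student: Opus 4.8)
The plan is to turn $\hypergraph=(\hypervertices,\hyperedges)$ into a uniform hypergraph by \emph{padding} every undersized hyperedge up to the maximum size~$r$, using a gadget of fresh vertices that are effectively pinned to one vertex of the hyperedge. First dispose of trivialities: the case $q=1$ is immediate, a hyperedge of size~$0$ contributes a global factor~$1$, and a hyperedge of size~$1$ a global factor $1+\gamma$, so after deleting all hyperedges of size at most~$1$ (recording the resulting power of $1+\gamma$) we may assume $q\ge 2$ and that every hyperedge has size $\ge 2$. If they all have size exactly~$r$ we are done, so assume $r\ge 3$ and that some hyperedge has size~$k$ with $2\le k<r$.

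For each hyperedge $f=\{v_1,\dots,v_k\}\in\hyperedges$ with $k<r$, introduce $r-1$ fresh vertices $u_{f,1},\dots,u_{f,r-1}$, delete~$f$, and add: one \emph{constraint} hyperedge $C_f=\{v_1,\dots,v_k,u_{f,1},\dots,u_{f,r-k}\}$, and $M$ parallel copies of a \emph{binding} hyperedge $B_f=\{v_1,u_{f,1},\dots,u_{f,r-1}\}$, where $M$ is a parameter fixed below. Hyperedges of size exactly~$r$ are kept unchanged. The result $\hypergraph'_M$ is $r$-uniform (every listed hyperedge has exactly~$r$ distinct vertices) and has size polynomial in $|\hypergraph|$ and~$M$.

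Now sum out, gadget by gadget, the $q^{r-1}$ colourings of $u_{f,1},\dots,u_{f,r-1}$ with the colours of $v_1,\dots,v_k$ held fixed. The $M$ binding copies together contribute $(1+\gamma)^M$ exactly when all $u_{f,i}$ receive the colour $\sigma(v_1)$, and~$1$ otherwise; a short case analysis then shows the gadget contributes a weight $A(M,k)$ when $\sigma(v_1)=\dots=\sigma(v_k)$ and $B(M,k)$ otherwise, with $A(M,k)=(1+\gamma)^{M+1}+R$ and $B(M,k)=(1+\gamma)^M+R'$, where $R$ and $R'$ are $O(q^r)$ and do not depend on~$M$. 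Hence $A(M,k)/B(M,k)=(1+\gamma)(1+\eta_f)$ with $|\eta_f|$ decaying geometrically in~$M$ (ratio $(1+\gamma)^{-1}$, lead constant polynomial in $|\hypergraph|$). So each gadget reproduces the effect of the deleted hyperedge~$f$ --- a factor $1+\gamma$ on monochromatic~$f$ and~$1$ otherwise --- up to replacing~$\gamma$ by $\gamma_f:=\gamma(1+\eta_f)$ on~$f$, together with a known prefactor $B(M,|f|)$. Since the $u$-vertices of distinct gadgets are disjoint, the colourings of the $u$'s factor through and $\ZPotts(\hypergraph'_M;q,\gamma)=\bigl(\prod_{f:|f|<r}B(M,|f|)\bigr)\,\ZPotts(\hypergraph;q,(\gamma_f)_f)$, where the last sum uses interaction $\gamma_f$ on~$f$ if $|f|<r$ and~$\gamma$ otherwise. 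Perturbing each interaction from~$\gamma$ to $\gamma(1+\eta_f)$ changes every term of that sum by a factor in $e^{\pm O(|\eta_f|)}$, so $\ZPotts(\hypergraph;q,(\gamma_f)_f)\in e^{\pm O(|\hyperedges|\max_f|\eta_f|)}\,\ZPotts(\hypergraph;q,\gamma)$. Given an instance $(\hypergraph,\varepsilon)$ of $\hPotts(q,\gamma)$, choose the least~$M$ with $O(|\hyperedges|\max_f|\eta_f|)\le\varepsilon/2$; because $\gamma>0$ is a fixed constant this~$M$ is polynomial in $|\hypergraph|$ and $\log(1/\varepsilon)$. Build $\hypergraph'_M$, call the $\uhPotts(q,\gamma)$ oracle on it with tolerance $\varepsilon/2$, and divide the returned value by the prefactor $\prod_{f:|f|<r}B(M,|f|)$ (computed exactly when~$\gamma$ is rational, or to relative error $\varepsilon/4$ via the FPRAS for~$\gamma$ in general). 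Conditions (i)--(iii) of an AP-reduction are then immediate, and the lemma follows.

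The delicate point is the gadget design: the padding vertices must neither dilute nor inflate the interaction. Replacing $\{v_1,\dots,v_k\}$ by $\{v_1,\dots,v_k,u_1,\dots,u_{r-k}\}$ with \emph{free} fresh vertices turns the effective interaction into $\gamma/q^{r-k}\ne\gamma$, and running parallel copies of that overshoots rather than converging to~$\gamma$; the binding hyperedges repair this by making $\sigma(u_{f,i})=\sigma(v_1)$ overwhelmingly likely, so that $C_f$ behaves like the original~$f$. The technical core is therefore the estimates $A(M,k)/B(M,k)\to 1+\gamma$ at the claimed geometric rate, uniformly over $2\le k<r$, together with the bookkeeping showing that a polynomial~$M$ drives the accumulated multiplicative error below~$\varepsilon$. (If one allows hyperedges that are multisets of vertices, all of this collapses: padding~$f$ with $r-|f|$ extra copies of~$v_1$ leaves $\ZPotts$ unchanged, and the gadget above is needed only to stay within simple hyperedges.)
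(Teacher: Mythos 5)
Your construction is exactly the paper's: pad each undersized hyperedge with fresh vertices to reach the maximum size, and add many parallel copies of a ``binding'' hyperedge containing a designated vertex of the original hyperedge together with all the fresh vertices, so that the padding vertices are overwhelmingly likely to copy that vertex's colour. (The paper applies the gadget to every hyperedge rather than only the undersized ones, which is immaterial.) Where you differ is in the analysis. The paper isolates the configurations that are monochromatic on every binding hyperedge --- these contribute exactly $\ZPotts(\hypergraph;q,\gamma)(1+\gamma)^{sm}$ --- and bounds all remaining configurations by a crude count, choosing $s$ so that after dividing by $(1+\gamma)^{sm}$ the leftover is an additive $1/4$; since $\ZPotts\geq q^n\geq 1$ this is already a good relative approximation, and the normalising factor is a clean power of $1+\gamma$. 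You instead sum out each gadget exactly, obtain an effective interaction $(1+\gamma)(1+\eta_f)$ with $\eta_f$ geometrically small in $M$, and control the multiplicative drift $e^{\pm O(|\hyperedges|\max_f|\eta_f|)}$; this is equally valid and arguably more informative about what the gadget does, at the cost of having to divide out the messier prefactor $\prod_f B(M,|f|)$ (which you correctly note must itself be approximated when $\gamma$ is irrational --- a point the paper glosses over for its own normalizer). One small slip: the lead constant in $\eta_f$ is $O(q^r)$, which is exponential in the maximum hyperedge size $r$ and hence not polynomial in $|\hypergraph|$ as you claim; but since only its logarithm enters the choice of $M$, your conclusion that a polynomial $M$ suffices still stands.
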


\begin{proof}
 
Let $\hypergraph=(\hypervertices,\hyperedges)$ be an instance to 
$\hPotts(q,\gamma)$ with $|\hypervertices|=n$ and $|\hyperedges|=m$
and $\max(|\hyperedge| \mid \hyperedge \in \hyperedges)=t$.
Let~$s$ be any positive integer that is at least
$$ \frac{\log(4 q^{n+m(t-1)}{(1+\gamma)}^m)}
{\log(1+\gamma)}.$$ 
As with our other reductions, the exact value of~$s$ is not 
important, as long as it satisfies the above inequality, it is bounded from above by a
polynomial in~$n$ and~$m$,
and its can be computed in polynomial time (as a function of~$n$ and~$m$).
An appropriate~$s$ can be readily computed by computing crude upper and lower
bounds for~$\gamma$ and evaluating different values of~$s$ one-by-one to find one that is
sufficiently large, in terms of these bounds.

For every hyperedge $\hyperedge\in\hyperedges$,  
fix some vertex $v_\hyperedge\in\hyperedge$.
Introduce new vertices $\{u_{\hyperedge,i}\mid \hyperedge\in\hyperedges,i\in[t-1]\}$, and
let $\hypervertices' = \hypervertices \cup
\{u_{\hyperedge,i}\mid \hyperedge\in\hyperedges, i\in[t-1]\}$.
Let 
$$\hyperedges' = 
\Big\{
\hyperedge \cup \big\{u_{\hyperedge,i}\bigm| i\in[\, t-|\hyperedge|\, ]\big\} \Bigm|
\hyperedge\in\hyperedges
\Big\}
\cup 
\Big\{ 
\{v_\hyperedge,u_{\hyperedge,1},\ldots,u_{\hyperedge,t-1}\} \times [s] \Bigm|
\hyperedge \in \hyperedges
\Big\}.$$
That is, the multi-set $\hyperedges' $ has $s$ copies of the edge
$\{v_\hyperedge,u_{\hyperedge,1},\ldots,u_{\hyperedge,t-1}\} $ and
one copy of the edge $\hyperedge \cup \{u_{\hyperedge,i}\mid i\in[t-|\hyperedge|\,]\}$
for each hyperedge $\hyperedge\in \hyperedges$.
Let $\hypergraph' = (\hypervertices', \hyperedges')$. Note that
$\hypergraph'$ is $t$-uniform.

Now, the total contribution to $\ZPotts(\hypergraph';q,\gamma)$ from
configurations~$\sigma$ which are monochromatic on 
every edge $\{v_\hyperedge,u_{\hyperedge,1},\ldots,u_{\hyperedge,t-1}\}$ is exactly
 $\ZPotts(\hypergraph;q,\gamma) {(1+\gamma)}^{s m}$.
Also, the total contribution to $\ZPotts(\hypergraph';q,\gamma)$ from
any other configurations~$\sigma$
is at most $q^{n+m(t-1)} {(1+\gamma)}^{m} {(1+\gamma)}^{s(m-1)}$ since
there are at most $q^{n+m(t-1)}$ such configurations and $\gamma>0$.

So
\begin{align*} \ZPotts(\hypergraph;q,\gamma) \leq 
\frac{\ZPotts(\hypergraph';q,\gamma)}{{(1+\gamma)}^{s m}}
&\leq 
\ZPotts(\hypergraph;q,\gamma)  +
\frac{q^{n+m(t-1)} {(1+\gamma)}^{m}}
{ {(1+\gamma)}^s}\\
&\leq
\ZPotts(\hypergraph;q,\gamma)  + \frac14
\end{align*}
which completes the reduction.
\end{proof}

Finally, we are ready to put together the pieces to show that, for every
integer $q>2$,  
the problem of approximating the Potts partition function
is equivalent to a tree homomorphism problem.

\begin{theorem}
Let $q>2$ be a positive integer and let $\gamma$ be a 
positive efficiently approximable real.
Then
$\Potts(q,\gamma)\APeq \nHom{J_q}$.
\label{thm:junction}
\end{theorem}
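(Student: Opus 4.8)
The plan is to assemble the theorem by chaining together the lemmas already proved in this section with two results imported from~\cite{FerroPotts}: first, the reduction from approximating the Potts partition function of a uniform hypergraph to approximating it on an ordinary graph (the reduction alluded to just before Lemma~\ref{lem:touniform}); and second, the fact that, for a fixed integer~$q$, the problems $\Potts(q,\gamma)$ and $\Potts(q,1)$ are AP-interreducible for every positive efficiently approximable real~$\gamma$. I would establish the two directions of the interreducibility separately and then combine them.

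For the direction $\nHom{J_q}\APred\Potts(q,\gamma)$: by Lemma~\ref{lem:fromcol} we have $\nHom{J_q}\APred\hPotts(q,1)$; by Lemma~\ref{lem:touniform} applied with $\gamma=1$ we have $\hPotts(q,1)\APred\uhPotts(q,1)$; and by the uniform-hypergraph-to-graph reduction of~\cite{FerroPotts} we obtain $\uhPotts(q,1)\APred\Potts(q,\gamma_0)$ for some positive efficiently approximable real~$\gamma_0$ that depends on~$q$ and on the uniformity of the hypergraph. For the instances arising from the two previous steps the uniformity is bounded by the maximum vertex degree of the original input graph, hence is polynomially bounded, and~$\gamma_0$ is an explicit algebraic function of~$q$ and that uniformity, so it is indeed efficiently approximable and the cited reduction applies. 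Finally, invoking the $\gamma$-independence of~\cite{FerroPotts} gives $\Potts(q,\gamma_0)\APeq\Potts(q,\gamma)$, and composing the whole chain yields $\nHom{J_q}\APred\Potts(q,\gamma)$.

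For the reverse direction $\Potts(q,\gamma)\APred\nHom{J_q}$: by $\gamma$-independence again, $\Potts(q,\gamma)\APeq\Potts(q,1)$, and Lemma~\ref{lem:tocol} — which is precisely where the hypothesis $q>2$ enters — gives $\Potts(q,1)\APred\nHom{J_q}$. Composing these gives $\Potts(q,\gamma)\APred\nHom{J_q}$, and together with the previous paragraph this establishes $\Potts(q,\gamma)\APeq\nHom{J_q}$.

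The only substantive point to check is that the two ingredients imported from~\cite{FerroPotts} are available in the exact form used here — namely the uniform-hypergraph-to-graph reduction with a positive, polynomially-bounded, efficiently-approximable target weight~$\gamma_0$, and the AP-interreducibility of $\Potts(q,\gamma_1)$ and $\Potts(q,\gamma_2)$ for all positive efficiently approximable reals — and to observe that AP-reductions compose (including the routine move, already used inside the proof of Lemma~\ref{lem:fromcol}, of reducing a quantity to a sum of two quantities each computable via one oracle call). None of this is a genuine obstacle; the real content of the theorem lies in the constructions of Lemmas~\ref{lem:tocol}, \ref{lem:fromcol} and~\ref{lem:touniform}, and the proof here is just the bookkeeping that stitches them to the Potts-model results of~\cite{FerroPotts}.
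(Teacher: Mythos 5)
Your proposal is correct and follows essentially the same route as the paper: $\nHom{J_q}\APred\hPotts(q,1)\APred\uhPotts(q,1)\APred\Potts(q,\gamma)$ via Lemmas~\ref{lem:fromcol} and~\ref{lem:touniform} plus the uniform-hypergraph-to-graph machinery of~\cite{FerroPotts} (stated there for the Tutte polynomial, which equals the hypergraph Potts partition function), and $\Potts(q,\gamma)\APred\Potts(q,1)\APred\nHom{J_q}$ via the same $\gamma$-shifting results of~\cite{FerroPotts} and Lemma~\ref{lem:tocol}. The only cosmetic difference is that the paper's cited Lemmas~9.1 and~10.1 of~\cite{FerroPotts} go directly from $\uhPotts(q,1)$ to the target $\Potts(q,\gamma)$, so no intermediate instance-dependent $\gamma_0$ is needed.
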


\begin{proof}  
We start by establishing the reduction from $\nHom{J_q}$ to
$\Potts(q,\gamma)$.
By Lemmas \ref{lem:fromcol} and~\ref{lem:touniform}.
$$\nHom{J_q}\APred\hPotts(q,1)\APred\uhPotts(q,1).$$
To complete the sequence of reductions we need to know 
that the last problem is reducible to $\Potts(q,\gamma)$.  
Fortunately, this step already appears 
in the literature in a slightly different guise, so we just need to
explain how to translate the terminology from the earlier result 
to the current setting.  For every positive integer~$q$,
the partition function $\ZPotts(\hypergraph;q,\gamma)$ of the Potts model
on hypergraphs is
equal to the \emph{Tutte polynomial} $\ZTutte(\hypergraph;q,\gamma)$ 
  (whose definition
we will not need here).  
This equality is proved in \cite[Observation 2.1]{FerroPotts},
using the same basic line of argument that Fortuin and Kasteleyn~\cite{FK} used in
the graph case. 
Furthermore, for $q>2$,
Lemmas~9.1 and~10.1 of \cite{FerroPotts}
reduce the problem of approximating the Tutte partition function  
$\ZTutte(\hypergraph;q,1)$, where $\hypergraph$ is a \emph{uniform hypergraph},
to that of approximating the Tutte partition function  
$\ZTutte(G;q,\gamma)$, where $G$ is a \emph{graph}.
Given the equivalence between $\ZTutte(G;q,\gamma)$ and $\ZPotts(G;q,\gamma)$ 
mentioned earlier, we see that 
$$\uhPotts(q,1)\APred\Potts(q,\gamma),$$
completing the chain of reductions.

For the other direction, we will establish an AP-reduction from
$\Potts(q,\gamma)$ to the problem $\nHom{J_q}$.
To start, we note that since a graph is a special case of a uniform hypergraph,
Lemmas~9.1 and 10.1 of \cite{FerroPotts}
give an AP-reduction from
$\Potts(q,\gamma)$ to $\Potts(q,1)$.
(It is definitely not necessary to go via hypergraphs for this reduction,
but here it is easier to use the stated result than to repeat the work.)
Finally,  
Lemma~\ref{lem:tocol} shows that
$\Potts(q,1) \APred \nHom{J_q}$.
\end{proof}

\section{Inapproximability of counting tree homomorphisms}
\label{sec:hard}
Until now, it was not known whether or not a bipartite graph~$H$ exists for which
approximating $\nHom H$ is \SAT-hard.  It is perhaps surprising, then, 
to discover that $\nHom H$ may be \SAT-hard even when $H$ is a tree.  
However, the hardness 
result from Section~\ref{sec:weighted} provides a clue.  There it was shown 
that the weighted version $\wHom{H}$ is \SAT-hard whenever $H$ is a tree containing
$J_3$ as an induced subgraph.  If we were able to construct a tree~$H$, containing $J_3$,
that is able, at least in some limited sense, to simulate vertex weights, then 
we might obtain a reduction from $\wHom{J_3}$ to $\nHom{H}$.  That is roughly how we proceed in 
this section.  We will obtain our hard tree~$H$ by ``decorating'' the leaves of~$J_3$.
These decorations will match certain structures in the instance~$G$, so that particular
distinguished vertices in $G$ will preferentially be coloured with particular colours.  
Carrying through this idea requires~$H$ to have a certain level of complexity, and the 
tree~$\JS$ that we actually use (see Figure~\ref{fig:JS}) is about the smallest for which 
this approach works.  Presumably the same approach could also be applied starting at $J_q$, 
for $q>3$.  It is possible that there are trees~$H$
that are much smaller than~$\JS$   for which $\nHom{H}$ is \SAT-hard.
It is even possible that $\nHom{J_3}$ is \SAT-hard. But demonstrating this would require new ideas.   
    
Define vertex sets
\begin{align*}
X &=\{x_0,x_1\} \cup \{x_{2,i}\mid i\in[5]\} ,\\
Y &= \{y_0,y_1\} \cup \{y_{2,i}\mid i\in[4]\} \cup \{y_{3,i,j}\mid i\in[4],j\in[3]\}, \\
Z &= \{z_0,z_1\} \cup \{z_{2,i}\mid i\in[3]\} 
\cup \{z_{3,i,j}\mid i\in[3],j\in[3]\} 
\cup \{z_{4,i,j,k}\mid i\in[3],j\in[3],k\in[2]\},
\end{align*}
and edge sets
\begin{align*}
E_X &=\{(x_0,x_1)\} \cup \{(x_1,x_{2,i})\mid i\in[5]\} ,\\
E_Y &= \{(y_0,y_1)\} \cup \{(y_1,y_{2,i})\mid i\in[4]\} \cup \{(y_{2,i},y_{3,i,j})\mid i\in[4],j\in[3]\} ,\\
E_Z &= \{(z_0,z_1)\} \cup \{(z_1,z_{2,i})\mid i\in[3]\} 
\cup \{(z_{2,i},z_{3,i,j})\mid i\in[3],j\in[3]\} \\
&\qquad\null\cup \{ (
z_{3,i,j},z_{4,i,j,k})\mid i\in[3],j\in[3],k\in[2]\}.
\end{align*}
Let $\JS$ be the tree with vertex set $V(\JS)=\{w\} \cup X \cup Y \cup Z$
and edge set $$E(\JS)=\{(w,x_0),(w,y_0),(w,z_0)\} \cup E_X \cup E_Y \cup E_Z.$$
See Figure~\ref{fig:JS}.  
Consider the equivalence relation on  $V(\JS)$
defined by graph isomorphism --- two vertices of~$\JS$ are in the same equivalence
class if there is an isomorphism of~$\JS$ mapping one to the other.
The canonical representatives of the equivalence classes
are the vertices 
$w$, 
$x_0$, $x_1$, $x_{2,1}$,
$y_0$, $y_1$, $y_{2,1}$, $y_{3,1,1}$, 
$z_0$, $z_1$, 
$z_{2,1}$, $z_{3,1,1}$ and $z_{4,1,1,1}$. These are shown in the figure.

  \begin{figure}

\centering{
 \begin{tikzpicture}[fill=white,scale=0.7,
line width=0.5pt,inner sep=1pt,minimum size=2.5mm]
\pgfsetxvec{\pgfpoint{1.7cm}{0cm}}
\pgfsetyvec{\pgfpoint{0cm}{1.7cm}}
\path 
(0,0) node [draw,fill,circle,minimum size=0.6cm](w){ $w$}
(1,0) node [draw,fill,circle,minimum size=0.6cm](y0){ $y_0$}
(-1,0) node [draw,fill,circle,minimum size=0.6cm](x0){ $x_0$}
(0,-1) node [draw,fill,circle,minimum size=0.6cm](z0){ $z_0$}
(-2,0) node [draw,fill,circle,minimum size=0.6cm](x1){ $x_1$}
(-3,2) node [draw,fill,circle,minimum size=0.7cm](x21){ $x_{2,1}$}
(-3,1) node [draw,fill,circle,minimum size=0.7cm](x22){  }
(-3,0) node [draw,fill,circle,minimum size=0.7cm](x23){  }
(-3,-1) node [draw,fill,circle,minimum size=0.7cm](x24){  }
(-3,-2) node [draw,fill,circle,minimum size=0.7cm](x25){  }
(2,0) node [draw,fill,circle,minimum size=0.6cm](y1){ $y_1$}
(3,1.5) node [draw,fill,circle,minimum size=0.7cm](y21){ $y_{2,1}$}
(3,0.5) node [draw,fill,circle,minimum size=0.7cm](y22){  }
(3,-0.5) node [draw,fill,circle,minimum size=0.7cm](y23){  }
(3,-1.5) node [draw,fill,circle,minimum size=0.7cm](y24){  }
(5,3.5) node [draw,fill,circle,minimum size=0.5cm](y31){ $y_{3,1,1}$}
(5,2.5) node [draw,fill,circle,minimum size=0.5cm](y32){ }
(5,2) node [draw,fill,circle,minimum size=0.5cm](y33){ }
(5,1.5) node [draw,fill,circle,minimum size=0.5cm](y34){ }
(5,1) node [draw,fill,circle,minimum size=0.5cm](y35){ }
(5,0.5) node [draw,fill,circle,minimum size=0.5cm](y36){ }
(5,-0.5) node [draw,fill,circle,minimum size=0.5cm](y37){ }
(5,-1) node [draw,fill,circle,minimum size=0.5cm](y38){ }
(5,-1.5) node [draw,fill,circle,minimum size=0.5cm](y39){ }
(5,-2) node [draw,fill,circle,minimum size=0.5cm](y310){ }
(5,-2.5) node [draw,fill,circle,minimum size=0.5cm](y311){ }
(5,-3) node [draw,fill,circle,minimum size=0.5cm](y312){ }
;
\tikzstyle{level 1}=[sibling distance=70mm]
\tikzstyle{level 2}=[sibling distance=23mm]
\tikzstyle{level 3}=[sibling distance=15mm]
\draw (0,-2) node [draw,fill,circle,minimum size=0.6cm](z1){ $z_1$}
    child {node [draw,fill,circle,minimum size=0.6cm] {$ z_{2,1}$  }
                child {node [draw,fill,circle,minimum size=0.6cm] { $  z_{3,1,1}$ }
                             child {node [draw,fill,circle,minimum size=0.5cm] { $  z_{4,1,1,1}$ }}
                             child {node [draw,fill,circle,minimum size=0.5cm] {  }}
                                   }
                child {node [draw,fill,circle,minimum size=0.6cm] {  } 
                           child {node [draw,fill,circle,minimum size=0.5cm] { }}
                             child {node [draw,fill,circle,minimum size=0.5cm] {  }}  
                             }
                child {node [draw,fill,circle,minimum size=0.6cm] {  }
                             child {node [draw,fill,circle,minimum size=0.5cm] { }}
                             child {node [draw,fill,circle,minimum size=0.5cm] {  }}     
                }
             }
    child {node [draw,fill,circle,minimum size=0.6cm] {  }
                child {node [draw,fill,circle,minimum size=0.6cm] {  }
                             child {node [draw,fill,circle,minimum size=0.5cm] { }}
                             child {node [draw,fill,circle,minimum size=0.5cm] {  }}                     
   }
                child {node [draw,fill,circle,minimum size=0.6cm] {  }
                             child {node [draw,fill,circle,minimum size=0.5cm] { }}
                             child {node [draw,fill,circle,minimum size=0.5cm] {  }}                  
                }
                child {node [draw,fill,circle,minimum size=0.6cm] {  }
                             child {node [draw,fill,circle,minimum size=0.5cm] { }}
                             child {node [draw,fill,circle,minimum size=0.5cm] {  }}                   
   }
    }
    child {node [draw,fill,circle,minimum size=0.6cm] { }
                child {node [draw,fill,circle,minimum size=0.6cm] {  }
                             child {node [draw,fill,circle,minimum size=0.5cm] { }}
                             child {node [draw,fill,circle,minimum size=0.5cm] {  }}                    
                }
                child {node [draw,fill,circle,minimum size=0.6cm] {  }
                             child {node [draw,fill,circle,minimum size=0.5cm] { }}
                             child {node [draw,fill,circle,minimum size=0.5cm] {  }}         
                }
                child {node [draw,fill,circle,minimum size=0.6cm] {  }
                             child {node [draw,fill,circle,minimum size=0.5cm] { }}
                             child {node [draw,fill,circle,minimum size=0.5cm] {  }}     
                }
    }
     ;
 
\draw [-] (x1)--(x0) -- (w) -- (y0) -- (y1);
\draw [-] (w)--(z0)--(z1); 
\draw [-] (x21)--(x1)--(x22);
\draw [-] (x23)--(x1)--(x24);
\draw [-] (x1)--(x25);
\draw [-] (y21)--(y1)--(y22);
\draw [-] (y23)--(y1)--(y24);
\draw [-] (y31)--(y21)--(y32);
\draw [-] (y21)--(y33); 
\draw [-] (y34)--(y22)--(y35);
\draw [-] (y22)--(y36);  
\draw [-] (y37)--(y23)--(y38);
\draw [-] (y23)--(y39);  
\draw [-] (y310)--(y24)--(y311);
\draw [-] (y24)--(y312);   
 
\end{tikzpicture}

 }
 \caption{The tree $\JS$.}
\label{fig:JS}
\end{figure}  

In this section, we will  show that $\SAT$
is AP-reducible to~$\nHom{\JS}$.
We start by   identifying  relevant structure in~$\JS$.

A simple path in a graph is a path in which no vertices are repeated.
For every vertex~$h$ of~$\JS$, 
and every positive integer~$k$, let $d_k(h)$ be the
number of simple length-$k$ paths from~$h$.
The values $d_1(h)$, $d_2(h)$ and $d_3(h)$ can be calculated
for each canonical representative $h\in V(\JS)$ by inspecting the
definition of~$\JS$ (or its drawing 
in Figure~\ref{fig:JS}). These values are recorded in 
the first four columns of the table in Figure~\ref{JStable}. 
\begin{figure}
   $$
 \begin{array}{|c||c|c|c|c|c|c|}
 \hline
  h & d_1(h) & d_2(h) & d_3(h)&w_1(h)&w_2(h)&w_3(h)\\
 \hline
 w           & 3  & 3 & 12&3&6&24\\
 x_0       & 2 & 7 & 2&2&9&13\\
 x_1       & 6 & 1 & 2&{\bf 6}&7&39\\
 x_{2,1}   & 1 & 5 & 1&1&6&7\\ 
 y_0       & 2 & 6 & 14&2&8&24\\ 
  y_1      & 5 & 13 & 2&5&{\bf 18}&40\\
 y_{2,1}   & 4 & 4 & 10&4&8&30\\
 y_{3,1,1} & 1 & 3 & 4&1&4&8\\ 
 z_0        & 2 & 5 & 11&2&7&20\\
 z_1         & 4  & 10 & 20&4&14&{\bf 46}\\
 z_{2,1}     & 4 & 9 & 7&4&13&32\\
 z_{3,1,1}   & 3 & 3 & 7&3&6&19\\
 z_{4,1,1,1} & 1 & 2 & 3&1&3&6\\
 \hline
  \end{array}
  $$ 
  \caption{ For each canonical representative $h\in V(\JS)$, we record the
  values of $w_1(h)=d_1(h)$, $w_2(h)=d_1(h)+d_2(h)$ and $w_3(h)=d_1^2(h)+d_2(h)+d_3(h)$.}
  \label{JStable}
  \end{figure}
   
Now let $w_k(h)$ denote the number of length-$k$ walks from~$h$ in~$\JS$.
Clearly, $w_1(h)=d_1(h)$ since $\JS$ has no self-loops,  so all length-$1$ walks
are simple paths.
Next, note that $w_2(h)=d_1(h) + d_2(h)$.
To see this, note that
every length-$2$ walk from~$h$ is either a simple length-$2$ path from~$\JS$,
or it is a walk obtained by taking an edge from~$h$, and then going back to~$h$.
Finally, $w_3(h) = d_1(h)^2 + d_2(h)+d_3(h)$ 
since every length-$3$ walk from~$h$ is
one of the following:
\begin{itemize}
\item a simple length-$3$ path from~$h$,
\item a simple length-$2$ path from~$h$, with the last edge repeated in reverse, or
\item a simple length-$1$ path from~$h$ with the last edge repeated in reverse,
followed by another simple length-$1$ path from~$h$.
\end{itemize}
These values are recorded, for each canonical representative $h\in V(\JS)$,
in the last three columns of the table in Figure~\ref{JStable}.
The important fact that we will use is that $w_1(h)$ is uniquely maximised at $h=x_1$,
  $w_2(h)$ is uniquely maximised at $h=y_1$, and $w_3(h)$ is uniquely maximised at $h=z_1$.
  (These are shown in boldface in the table.)
  
We are now ready to prove the following theorem.  
   
\begin{theorem}
\label{thm:hardH}
$\SAT \APred \nHom{\JS}$.
\end{theorem}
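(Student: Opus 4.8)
The plan is to reduce from the problem \MultiCutCount{$3$}, which is AP-interreducible with $\SAT$ by Lemma~\ref{lem:cut}, so it suffices to give an AP-reduction from \MultiCutCount{$3$} to $\nHom{\JS}$. It is helpful to think of this as upgrading the $\SAT$-hardness of $\wHom{J_3}$ established in Lemma~\ref{lem:hardweighted}: the construction there used vertex weights only in order to confine certain vertices of the instance to prescribed subsets of the colour set, and the aim here is to reproduce that confinement \emph{unweighted}, by attaching pendant gadgets that exploit the ``decorated'' structure of $\JS$. So, given an instance $(b,G,\alpha,\beta,\gamma)$ of \MultiCutCount{$3$}, I would build an instance $G''$ of $\nHom{\JS}$ by starting from an edge-blow-up of $G$ (each edge of $G$ replaced by $s$ internally disjoint short paths through fresh vertices, with $s$ a polynomial in $|G|$, as in the proof of Lemma~\ref{lem:hardweighted}) and then attaching pendant gadgets, as described next.

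The mechanism that drives the construction is the following elementary fact, which is exactly what the table in Figure~\ref{JStable} is there to supply. If one attaches $a$ internally disjoint pendant paths of length $k$ at a vertex $v$ of an instance, then every homomorphism $\sigma$ from the rest of the instance to $\JS$ extends to these pendant paths in exactly $w_k(\sigma(v))^{a}$ ways, where $w_k(h)$ is the number of length-$k$ walks from $h$ in $\JS$. By Figure~\ref{JStable}, $w_1$, $w_2$ and $w_3$ are uniquely maximised at $x_1$, $y_1$ and $z_1$ respectively --- the three leaves of the induced copy of $J_3$ on $\{w,x_0,x_1,y_0,y_1,z_0,z_1\}$. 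Hence attaching a large (but polynomially bounded) number of length-$1$ pendant paths to $\alpha$, of length-$2$ pendant paths to $\beta$, and of length-$3$ pendant paths to $\gamma$ forces $\sigma(\alpha)=x_1$, $\sigma(\beta)=y_1$, $\sigma(\gamma)=z_1$ in all but an exponentially small fraction of the total weight $Z_{\JS}(G'')$. This is precisely the asymmetry that $J_3$ itself lacks, and it is the reason for passing to the larger decorated tree $\JS$: the pendant stars hung on the three branches of $\JS$ have been sized so that three different pendant-path gadgets single out the three ``colours'' of the core $J_3$ independently.

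With the terminals pinned, the rest of the argument follows the template of Lemma~\ref{lem:hardweighted}. The short path-gadgets joining the endpoints of each edge of $G$ are chosen (and taken in $s$ parallel copies, together with further small pendant gadgets on the remaining vertices) so that an edge of $G$ whose two endpoints receive the same colour contributes a large, fixed factor, while an edge whose endpoints receive different colours contributes a much smaller one --- the monochromatic/bichromatic split of Lemma~\ref{lem:hardweighted} --- and so that the remaining vertices of $G$ are confined to the part of $\JS$ on which this split is visible. One then shows that, after dividing by an explicit normalising constant (a product of powers of the walk-count maxima and of small integers), $Z_{\JS}(G'')$ lies within $1/4$ of the number $N$ of size-$b$ multiterminal cuts of $G$, so that an FPRAS for $\nHom{\JS}$ yields one for \MultiCutCount{$3$}; the translation of approximation accuracies is as in the proof of Theorem~3 of \cite{APred}.

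The main obstacle is the numerology and bookkeeping in the previous paragraph. One must choose the gadget multiplicities and $s$ --- all bounded by polynomials in $|G|$ --- so that three requirements hold simultaneously: that the three terminals are pinned to three mutually compatible colours; that the remaining vertices of $G$ are confined to the ``useful'' part of $\JS$; and that every minimum multiterminal cut of $G$ contributes the \emph{same} dominant weight. One must then bound the three resulting families of stray contributions --- homomorphisms that colour a terminal incorrectly, homomorphisms that use the decoration vertices $x_{2,\cdot}$, $y_{2,\cdot}$, $z_{2,\cdot},\dots$ of $\JS$ for vertices of $G$, and colourings corresponding to non-minimum cuts --- by $1/4$ of the normalised count. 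It is exactly here that the precise walk-count data recorded in Figure~\ref{JStable} is needed, and it is why the tree that works is $\JS$ rather than anything smaller.
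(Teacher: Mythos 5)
You have identified the paper's key mechanism correctly --- pendant paths of lengths $1$, $2$, $3$ together with the walk counts $w_1,w_2,w_3$ of Figure~\ref{JStable}, uniquely maximised at $x_1$, $y_1$, $z_1$ --- and your overall strategy (reduce from \MultiCutCount{$3$} via Lemma~\ref{lem:cut}, follow the template of Lemma~\ref{lem:hardweighted}, bound the stray contributions by a quarter of the normalised count) is the paper's. But there is a genuine gap in \emph{where} you attach the gadgets. You hang the pendant paths directly on the terminals, so that $\sigma(\alpha)=x_1$, $\sigma(\beta)=y_1$, $\sigma(\gamma)=z_1$. This places $V(G)$ on the wrong side of the bipartition of $\JS$: the remaining vertices of $V(G)$ would then have to take colours in $\{x_1,y_1,z_1\}$, and each of the $s$ parallel midpoint vertices of an edge $(u,v)$ of $G$ must be coloured with a common neighbour of $\sigma(u)$ and $\sigma(v)$ in $\JS$. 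Since $\Gamma_{\JS}(x_1)$, $\Gamma_{\JS}(y_1)$ and $\Gamma_{\JS}(z_1)$ are pairwise disjoint, a bichromatic edge admits \emph{zero} extensions, so every colouring that induces a nonempty cut contributes nothing; and the monochromatic edges contribute unequal factors ($6^s$, $5^s$ and $4^s$ for the three colours), so even if that were repaired, distinct minimum cuts would not receive the same weight. The reduction as you describe it collapses.

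The repair is one level of indirection, plus an apex vertex, and both are essential. The paper adds a spine $v_w$--$v_{x_0}$--$v_{x_1}$--$\alpha$ (and similarly for $\beta$, $\gamma$) and hangs the $r$ pendant length-$k$ paths on the \emph{new} vertices $v_{x_1},v_{y_1},v_{z_1}$; pinning these to $x_1,y_1,z_1$ forces $\sigma(v_w)=w$ and hence $\sigma(\alpha)=x_0$, $\sigma(\beta)=y_0$, $\sigma(\gamma)=z_0$. Equally important is the apex vertex $v_w$, joined to \emph{every} vertex of $V(G)$: this is what confines the non-terminal vertices to $\{x_0,y_0,z_0\}$. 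Your ``further small pendant gadgets on the remaining vertices'' cannot do this job: a pendant gadget biases a vertex towards a single colour (the argmax of some $w_k$), whereas each non-terminal vertex must be free to take any of $x_0,y_0,z_0$ with \emph{equal} weight and nothing else, and no product $w_1^a w_2^b w_3^c$ is constant and maximal on exactly that triple (the columns of Figure~\ref{JStable} give $x_0,y_0,z_0$ the values $(2,9,13)$, $(2,8,24)$, $(2,7,20)$). Once $V(G)$ is confined to $\{x_0,y_0,z_0\}$, the midpoint counts become uniform --- $2$ choices ($w$ or the relevant one of $x_1,y_1,z_1$) for a monochromatic edge and $1$ choice ($w$ only) for a bichromatic one --- which is precisely the $2^{s|E(G)-E'|}$ weighting that the cut-counting argument requires.
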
 
  
\begin{proof} 
 
 By Lemma~\ref{lem:cut},
it suffices to  give an AP-reduction from
\MultiCutCount{$3$} to $\nHom{\JS}$. 
The basic construction 
follows the outline of
the reduction 
developed in the proof of  Lemma~\ref{lem:hardweighted}.
However, unlike the situation of Lemma~\ref{lem:hardweighted},
the target problem $\nHom{\JS}$ does
not include weights, so we must develop gadgetry to simulate the
role of these.

Let $b$, $G=(V,E)$, $\alpha$, $\beta$ and $\gamma$ be an input to \MultiCutCount{$3$}.
Let  
$s= 3 + |E(G)|+2|V(G)|$.
(As before, the exact size of~$s$ is not important, but
it has to be at least this big to make the calculation work,
and it has to be at most a polynomial in the size of~$G$.)

Let $G'$ be the graph defined in the proof of Lemma~\ref{lem:hardweighted}.
In particular,
let $V'(G)= \{(e,i) \mid e\in E(G),i\in[s]\}$.
Then let $G'$
be the graph with vertex set
$V(G') = V(G) \cup V'(G)$
and edge set 
 $$E(G') = \{(u,(e,i)) \mid u\in V(G), (e,i)\in V'(G),  
\mbox{and $u$ is  an endpoint of~$e$}  \}.$$

Now let $r$ be
any positive integer
such that
\begin{equation}
\label{eq:r}
{\left(
\frac{46}{40}
\right)}^r \geq 8 {|V(\JS)|}^{|V(G)|+ s |E(G)| + 7}.
\end{equation}
For concreteness, take $r$ to be the smallest integer satisfying
(\ref{eq:r}). Once again, the exact value of~$r$ is not so important.
Any~$r$ would work as long as it is at most a polynomial in the size of~$G$,
and it satisfies (\ref{eq:r}).

We will construct an instance~$G''$ of $\nHom{\JS}$
by adding some gadgets to~$G'$.
First, we define the gadgets.
\begin{itemize}
\item
Let $\Gamma_{x}$ be a 
graph with vertex set
$V(\Gamma_{x}) = 
\{ v_{x_1} \} \cup \bigcup_{i\in[r]} \{v_{x,i}\}$
and edge set $E(\Gamma_x) = 
\bigcup_{i\in [r]} \{(v_{x_1},v_{x,i})\}$. 
\item 
Let $\Gamma_y$ be a graph
with vertex set
$V(\Gamma_y) = \{v_{y_1}\} \cup 
\bigcup_{i\in[r]} 
\{v_{y,i},v'_{y,i}\} $
and edge set 
$E(\Gamma_y) = \bigcup_{i\in [r]} \{(v_{y_1},v_{y,i}),(v_{y,i},v'_{y,i})
\}$.
\item 
Let $\Gamma_z$ be a graph
with vertex set
$V(\Gamma_z) = \{v_{z_1}\} \cup 
\bigcup_{i\in[r]} 
\{ v_{z,i}, v'_{z,i}, v''_{z,i} \} $
and edge set 
$E(\Gamma_x) = \bigcup_{i\in [r]} \{ (v_{z_1},v_{z,i}),(v_{z,i},v'_{z,i}),(v'_{z,i},v''_{z,i})
\}$.
\end{itemize}
Finally, let 
$$V(G'') = V(G') \cup \{v_w,v_{x_0},v_{y_0},v_{z_0}\} \cup V(\Gamma_x) \cup
V(\Gamma_y) \cup V(\Gamma_z),$$
and
\begin{align*} 
E(G'') &=  
\{
(v_w,v_{x_0}),(v_w,v_{y_0}),(v_w,v_{z_0}),
(v_{x_0},v_{x_1}),(v_{y_0},v_{y_1}),(v_{z_0},v_{z_1}),
(v_{x_1},\alpha),(v_{y_1},\beta),(v_{z_1},\gamma)
\} \\
& \cup
E(G') \cup
\{(v_w,v) \mid v\in V(G)\} 
\cup  
E(\Gamma_x) \cup E(\Gamma_y) \cup E(\Gamma_z).
\end{align*}
A picture of the instance $G''$ is shown in Figure~\ref{fig:GInstance}.

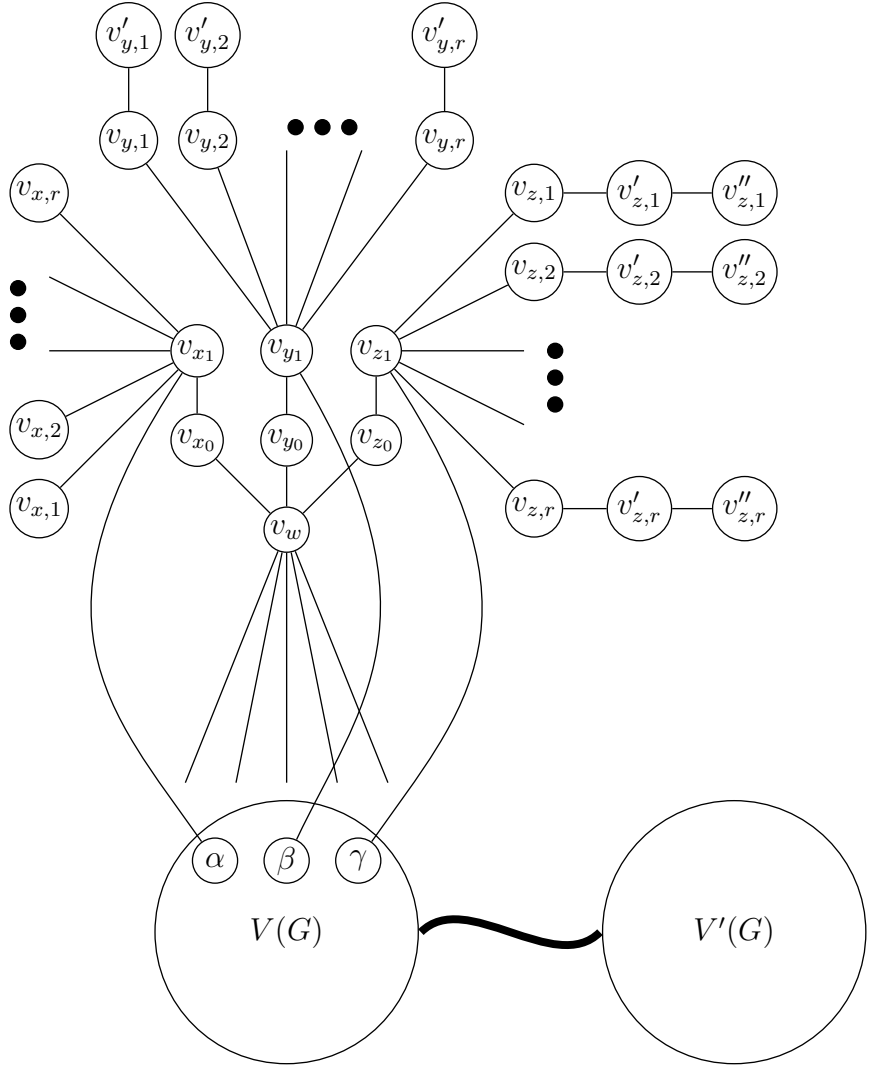
\begin{figure}
\centering{
 \begin{tikzpicture}[fill=white,scale=0.7,
line width=0.5pt,inner sep=1pt,minimum size=2.5mm]
\pgfsetxvec{\pgfpoint{1.7cm}{0cm}}
\pgfsetyvec{\pgfpoint{0cm}{1.7cm}}
\path 
(0,0) node [draw,fill,circle,minimum size=0.6cm](w){ $v_w$}
[grow'=down, level distance=50mm,sibling distance=10mm]
child {node {}}
child {node {}}
child {node {}}
child {node {}}
child {node {}}
(-1,1) node [draw,fill,circle,minimum size=0.6cm](x0){ $v_{x_0}$}
(0,1) node [draw,fill,circle,minimum size=0.6cm](y0){ $v_{y_0}$}
(1,1) node [draw,fill,circle,minimum size=0.6cm](z0){ $v_{z_0}$}
(-1,2) node [draw,fill,circle,minimum size=0.6cm](x1){ $v_{x_1}$} 
[grow'=left, level distance=30mm,sibling distance=15mm]
   child {node [draw,fill,circle,minimum size=0.6cm] { $  v_{x,1}$} }  
   child {node [draw,fill,circle,minimum size=0.6cm] { $  v_{x,2}$} }  
   child {node  { } }  
   child {node  { } }  
   child {node [draw,fill,circle,minimum size=0.6cm] { $  v_{x,r}$} }  
(0,2) node [draw,fill,circle,minimum size=0.6cm](y1){ $v_{y_1}$}
[grow'=up, level distance=40mm]
  child {node [draw,fill,circle,minimum size=0.6cm] { $  v_{y,1}$}  
  [level distance=20mm]
  child {node [draw,fill,circle,minimum size=0.6cm] { $  v'_{y,1}$}}
  } 
  child {node [draw,fill,circle,minimum size=0.6cm] { $  v_{y,2}$}  
  [level distance=20mm]
  child {node [draw,fill,circle,minimum size=0.6cm] { $  v'_{y,2}$}}
  }   
   child {node  { } }  
   child {node  { } }    
 child {node [draw,fill,circle,minimum size=0.6cm] { $  v_{y,r}$}  
  [level distance=20mm]
  child {node [draw,fill,circle,minimum size=0.6cm] { $  v'_{y,r}$}}
  }  
(1,2) node [draw,fill,circle,minimum size=0.6cm](z1){ $v_{z_1}$}
 [grow'=right, level distance=30mm]
   child {
        node [draw,fill,circle,minimum size=0.6cm] { $  v_{z,1}$} [level distance=20mm]
            child { node [draw,fill,circle,minimum size=0.6cm] { $  v'_{z,1}$} 
                child { node [draw,fill,circle,minimum size=0.6cm] { $  v''_{z,1}$} 
                }
            }
   }  
   child {
        node [draw,fill,circle,minimum size=0.6cm] { $  v_{z,2}$} [level distance=20mm]
            child { node [draw,fill,circle,minimum size=0.6cm] { $  v'_{z,2}$} 
                child { node [draw,fill,circle,minimum size=0.6cm] { $  v''_{z,2}$} 
                }
            }
   }  
   child {node  { } }  
   child {node  { } }  
child {
        node [draw,fill,circle,minimum size=0.6cm] { $  v_{z,r}$} [level distance=20mm]
            child { node [draw,fill,circle,minimum size=0.6cm] { $  v'_{z,r}$} 
                child { node [draw,fill,circle,minimum size=0.6cm] { $  v''_{z,r}$} 
                }
            }
   }  
;
\draw [-] (x1) -- (x0) --(w) -- (y0) -- (y1);
\draw [-] (w) -- (z0) -- (z1);
\path (-3,2.1) node [draw,fill=black,circle,minimum size=0.2cm] {}; 
\path (-3,2.4) node [draw,fill=black,circle,minimum size=0.2cm] {}; 
\path (-3,2.7) node [draw,fill=black,circle,minimum size=0.2cm] {}; 
\path (3,1.4) node [draw,fill=black,circle,minimum size=0.2cm] {}; 
\path (3,1.7) node [draw,fill=black,circle,minimum size=0.2cm] {}; 
\path (3,2.0) node [draw,fill=black,circle,minimum size=0.2cm] {}; 
\path (0.1,4.5) node [draw,fill=black,circle,minimum size=0.2cm] {}; 
\path (0.4,4.5) node [draw,fill=black,circle,minimum size=0.2cm] {}; 
\path (0.7,4.5) node [draw,fill=black,circle,minimum size=0.2cm] {}; 
\path (0,-4.5) node   {$V(G)$};
\draw (0,-4.5) circle (2.5cm);
\draw (5,-4.5) circle (2.5cm);
\path (5,-4.5) node  {$V'(G)$};
\draw [line width=3pt]  (1.5,-4.5) .. controls (2,-4)  and (3,-5) .. (3.5,-4.5);
\path 
(-0.8,-3.7) node [draw,fill,circle,minimum size=0.6cm](alpha) { $\alpha$}
(0,-3.7) node  [draw,fill,circle,minimum size=0.6cm](beta){ $\beta$ }
(0.8,-3.7) node  [draw,fill,circle,minimum size=0.6cm](gamma){ $\gamma$}
;
\draw (x1) .. controls (-3,-1) and (-2,-2) .. (alpha);
\draw (z1) .. controls (3,-1) and (2,-2) .. (gamma);
\draw (y1) .. controls (1.8,-1) and (0.8,-2) .. (beta);
\end{tikzpicture}

 }
 \caption{The instance~$G''$.  The thick curved line between $V(G)$ and
 $V'(G)$ indicates that  the edges in~$E(G')$ go between vertices in~$V(G)$ and
 vertices in~$V'(G)$, but these are not shown.  
 Vertex $v_w$ is connected to each vertex in~$V(G)$.
 }
\label{fig:GInstance}

\end{figure}

We say that a homomorphism~$\sigma$ from~$G''$ to~$\JS$ is  \emph{typical}
if $\sigma(v_{x_1})=x_1$,
$\sigma(v_{y_1})=y_1$, 
and $\sigma(v_{z_1})=z_1$.
Note that, in a typical homomorphism, $\sigma(v_w)=w$,
so $\sigma(V(G))=\{x_0,y_0,z_0\}$ 
and $\sigma(V'(G)) \subseteq \{w,x_1,y_1,z_1\}$.
Also, $\sigma(\alpha)=x_0$,
$\sigma(\beta)=y_0$, and $\sigma(\gamma)=z_0$.

If $\sigma$ is a typical homomorphism, then let 
\begin{align*}
\bichrom{\sigma} =
\{ e \in E(G) \mid \quad & 
\mbox{the vertices of~$V(G)$ corresponding to } \\
& \mbox{the endpoints of~$e$ are mapped to different colours by~$\sigma$} 
\}.\end{align*}
Note that, for every typical homomorphism~$\sigma$,
 $\bichrom{\sigma}$ is a multiterminal cut 
for the graph~$G$ with terminals~$\alpha$, $\beta$ and~$\gamma$.

For every multiterminal cut 
$E'$ of~$G$,
let $\components{E'}$
denote the number of components
in the graph $(V,E\setminus E')$.
For each multiterminal cut~$E'$,
let $Z_{E'}$ denote
the number of typical homomorphisms~$\sigma$ from~$G''$ to~$\JS$
such that $\bichrom{\sigma} = E'$.

As in the proof of Lemma~\ref{lem:hardweighted}, $\components{E'}\geq 3$.
If $\components{E'}=3$ then
$$Z_{E'} =  2^{s|E(G)-E'|} 6^r 18^r 46^r
= 2^{s|E(G)-E'|} 4968^r.$$
The $2^{s|E(G)-E'|}$
comes from the two choices for the colour of each vertex $(e,i)$ with $e\in E(G)-E'$, as before.
The $6^r$ comes from the choices for the vertices in 
$V(\Gamma_x)\setminus \{x_1\}$
according to column~5 of the table in Figure~\ref{JStable}.
The $18^r$ comes from the choices for the vertices in $
V(\Gamma_y)\setminus \{y_1\}$
(in column~6)
and the $46^r$ comes from the choices for the vertices in $
V(\Gamma_z)\setminus \{z_1\}$
(in column~7).

Also, for any multiterminal cut 
$E'$ of~$G$,
$$Z_{E'} \leq   2^{s|E(G)-E'|} 3^{\components{E'}-3}  4968^r,$$
since
in any typical homomorphism~$\sigma$,
    the component
of~$\alpha$ is mapped to~$x_0$ by~$\sigma$,
the component of~$\beta$ is mapped to~$y_0$, 
the component of~$\gamma$ is mapped to~$z_0$,  and each remaining
component  is mapped to a colour in $\{x_0,y_0,z_0\}$.

Let $Z^*= 2^{s|E(G)-b|} 4968^r$.
If $E'$ has size~$b$ then 
$\components{E'}=3$. (Otherwise, there would be a smaller multiterminal cut,
contrary to the definition of \MultiCutCount{$3$}.)
So, in this case, 
\begin{equation}
Z_{E'} = Z^*.
\label{eq:goodcuts}
\end{equation}

If $E'$ has size $b'>b$ then 
$$Z_{E'} \leq 2^{s|E(G)-b'|} 3^{\components{E'}-3}  4968^r
               = 2^{-s(b'-b)} 3^{\components{E'}-3} Z^*
               \leq 2^{-s} 3^{|V(G)|} Z^*.
$$
Clearly, there are at most $2^{|E(G)|}$   multiterminal cuts~$E'$.
So, using the definition of~$s$,  
\begin{equation}
\label{eq:bigcuts}
\sum_{E' : |E'|>b} Z_{E'} \leq \frac{Z^*}{8}.
\end{equation}

Now let $Z^-$ denote the number of homomorphisms from~$G''$ to~$\JS$
that are not typical. 
Now
$$ 
Z^- \leq |V(\JS)|^{|V(G)|+|V'(G)|+7 } {(40/46)}^r 4968^r,
$$
since there are at most $|V(\JS)|$ colours for each of the vertices in
$$V(G)\cup V'(G) \cup \{v_w,v_{x_0},v_{y_0},v_{z_0},v_{x_1},v_{y_1},v_{z_1}\}.$$
Also, given that the assignment to $v_{x_1}$, $v_{y_1}$ and $v_{z_1}$ is
not precisely $x_1$, $y_1$ and $z_1$, respectively,
it can be seen from the table in Figure~\ref{JStable} 
that the number of possibilities for the remaining vertices is
at most $(40/46)^r$ times as large as it would otherwise have been.
(For example, from the last column of the table, colouring
$v_{z_1}$ with $y_1$ instead of with~$z_1$ would give exactly 
$40^r$ choices for the colours of the vertices in $
\Gamma_z \setminus \{v_{z_1}\}$ instead of 
$46^r$ choices. The differences in the other columns are more substantial than this.)
Since $|V'(G)|=s |E(G)|$,
$$Z^- \leq 
{|V(\JS)|}^{|V(G)|+s|E(G)|+7} {(40/46)}^r 4968^r.$$
We can assume that $b\leq |E(G)|$ (otherwise, the number of size-$b$ multiterminal cuts
is trivially~$0$)
so from the definition of~$Z^*$,
$$
 Z^- \leq 
{|V(\JS)|}^{|V(G)|+s|E(G)|+7} {(40/46)}^r  Z^*.
$$
Using Equation~(\ref{eq:r}), we get
\begin{equation}
\label{eq:nocut}
Z^- \leq \frac{Z^*}{8}.
\end{equation}

From Equation~(\ref{eq:goodcuts}), we find that,
if  there are $N$ size-$b$ multiterminal cuts
then  
$$Z_{\JS}(G) = N Z^* + \sum_{E' : |E'|>b} Z_{E'} + Z^-.$$
So applying 
Equations  (\ref{eq:bigcuts}) and (\ref{eq:nocut}), 
we get
$$ N \leq \frac{Z_{\JS}(G)}{Z^*} \leq N + \frac{1}{4}.$$

Thus, we have an AP-reduction
from \MultiCutCount{$3$}  to $\nHom{\JS}$.
To determine the accuracy with which~$Z(G)$ should be approximated
in order to achieve a given accuracy in the approximation to~$N$,
see the   proof of Theorem 3 of \cite{APred}.
\end{proof}

\section{The Potts partition function and proper colourings of bipartite graphs}
\label{sec:bqcol}

Let $q$ be any integer greater than~$2$. Consider the following computational problem.

\begin{description}
\item[Problem] $\bqcol q$.
\item[Instance]  A bipartite graph $G$.
\item[Output]  The number of proper $q$-colourings of $G$.
\end{description} 

Dyer et al.~\cite[Theorem 13]{APred} showed that $\BIS \APred \bqcol q$.
However, it may be the case that $\bqcol q$ is easier to approximate than $\SAT$. Certainly,
no AP-reduction from $\SAT$ to $\bqcol q$ has been discovered (despite some effort!).
Therefore, it seems worth recording the following
upper bound on the complexity of $\nHom{J_q}$, which is an easy consequence of
Theorem~\ref{thm:junction}.

\begin{corollary}\label{cor:bqcol}
Let $q>2$ be a positive integer. Then
$\nHom{J_q} \APred \bqcol q$.
\end{corollary}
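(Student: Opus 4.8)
The plan is to combine Theorem~\ref{thm:junction} with one short gadget reduction. Set $\gamma_0 = 1/(q-2)$; since $q>2$ this is a well-defined positive rational, hence trivially efficiently approximable, so Theorem~\ref{thm:junction} applies and gives $\nHom{J_q}\APeq\Potts(q,\gamma_0)$, in particular $\nHom{J_q}\APred\Potts(q,\gamma_0)$. So it will suffice to exhibit an AP-reduction from $\Potts(q,\gamma_0)$ to $\bqcol q$; composing the two then yields the corollary.

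For the remaining reduction I would simply subdivide edges. Given an instance $G=(V,E)$ of $\Potts(q,\gamma_0)$, let $G'$ be obtained from $G$ by replacing each edge $e=\{u,v\}$ with a path $u\,m_e\,v$ through a new ``midpoint'' vertex $m_e$. Then $G'$ is a legitimate instance of $\bqcol q$: it is bipartite, with $V$ on one side and $\{m_e : e\in E\}$ on the other, and this holds whether or not $G$ itself is bipartite. Let $P(G')$ denote the number of proper $q$-colourings of $G'$, i.e.\ the output of $\bqcol q$ on input $G'$. In any proper $q$-colouring $\sigma$ of $G'$, the colour of $m_e$ must avoid both $\sigma(u)$ and $\sigma(v)$, so there are $q-1$ admissible choices for $m_e$ when $\sigma(u)=\sigma(v)$ and $q-2$ when $\sigma(u)\neq\sigma(v)$. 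Grouping the colourings of $G'$ by their restriction to $V$ and comparing with the definition of $\ZPotts$, I get
$$
P(G') \;=\; \sum_{\sigma:V\to[q]}\ \prod_{e=\{u,v\}\in E}(q-2)\Bigl(1+\tfrac{1}{q-2}\,\delta(\sigma(u),\sigma(v))\Bigr)\;=\;(q-2)^{|E|}\,\ZPotts(G;q,\gamma_0).
$$
Since $G'$ and the scalar $(q-2)^{|E|}$ are computable in time polynomial in the size of $G$, a single call to an approximation oracle for $\bqcol q$ on input $G'$, followed by division by $(q-2)^{|E|}$, yields an approximation to $\ZPotts(G;q,\gamma_0)$ with the same relative error; this is the desired AP-reduction.

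I do not expect a serious obstacle here. The only points that genuinely use the hypothesis or require a moment's care are: (i)~that $q>2$ is exactly what makes $\gamma_0=1/(q-2)$ a positive (and hence admissible, efficiently approximable) parameter, so that Theorem~\ref{thm:junction} can be invoked; and (ii)~the elementary count of admissible midpoint colours, which one could alternatively justify through the random-cluster (Tutte) expansion of $\ZPotts$, though the direct vertex-by-vertex argument above is shorter. Everything appearing in the construction is strictly positive, so the conventions attached to AP-reductions are met automatically.
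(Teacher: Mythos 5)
Your proposal is correct and follows essentially the same route as the paper: the paper also deduces the corollary from Theorem~\ref{thm:junction} with $\gamma=1/(q-2)$ together with a lemma (Lemma~\ref{lem:bqcol}) whose proof is exactly your two-stretch construction, with the identical count of $q-1$ versus $q-2$ admissible midpoint colours and the same exact identity $N=(q-2)^{|E|}\ZPotts(G;q,1/(q-2))$.
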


Corollary~\ref{cor:bqcol} follows immediately from 
Lemma~\ref{lem:bqcol} below by applying
Theorem~\ref{thm:junction} with
$\gamma=1/(q-2)$.

\begin{lemma}\label{lem:bqcol}
Let $q>2$ be a positive integer. Then
$\Potts(q,1/(q-2)) \APred \bqcol q$.
\end{lemma}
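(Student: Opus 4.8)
The plan is to exhibit a particularly clean reduction that is in fact \emph{exact}: from an instance $G$ of $\Potts(q,1/(q-2))$ I would construct a single bipartite graph $G'$, make one oracle call to $\bqcol q$ on $G'$, and recover $\ZPotts(G;q,1/(q-2))$ by dividing the returned value by a fixed positive constant depending only on $|E(G)|$. Division by such a constant preserves relative error exactly, so this is an AP-reduction in which the single oracle call uses error tolerance $\delta=\varepsilon$.

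The graph $G'$ is the ``$2$-stretch'' of $G$, which already appeared in the proof of Lemma~\ref{lem:tocol}: put $V(G')=V(G)\cup E(G)$, and for each edge $e=\{u,v\}\in E(G)$ add the two edges $(u,e)$ and $(v,e)$ to $G'$, so that $e$ becomes the ``midpoint vertex'' of the edge $\{u,v\}$. Every edge of $G'$ joins a vertex of $V(G)$ to a midpoint vertex, so $\bigl(V(G),E(G)\bigr)$ is a bipartition of $G'$ and $G'$ is a legitimate instance of $\bqcol q$. (Equivalently, subdividing each edge once turns every cycle of $G$ into a cycle of even length.)

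The key identity, which I would verify directly, is the following. Fix \emph{any} map $\sigma\colon V(G)\to[q]$ (not required to be a proper colouring of $G$) and let $\mono(\sigma)$ be the number of edges of $G$ that are monochromatic under $\sigma$. In a proper $q$-colouring of $G'$ that restricts to $\sigma$ on $V(G)$, the midpoint vertex of an edge $e=\{u,v\}$ must receive a colour distinct from both $\sigma(u)$ and $\sigma(v)$, so there are $q-1$ admissible colours for it when $\sigma(u)=\sigma(v)$ and $q-2$ otherwise. Hence $\sigma$ extends to exactly $(q-1)^{\mono(\sigma)}(q-2)^{|E(G)|-\mono(\sigma)}$ proper $q$-colourings of $G'$, and summing over all $\sigma$,
\[
\#\bigl\{\text{proper }q\text{-colourings of }G'\bigr\}
= (q-2)^{|E(G)|}\sum_{\sigma\colon V(G)\to[q]}\Bigl(\tfrac{q-1}{q-2}\Bigr)^{\mono(\sigma)}
= (q-2)^{|E(G)|}\,\ZPotts\bigl(G;q,\tfrac{1}{q-2}\bigr),
\]
where the final equality is just the definition~(\ref{eq:PottsGph}) of $\ZPotts$ with the edge weight $1+\gamma=\tfrac{q-1}{q-2}$, i.e.\ $\gamma=1/(q-2)$. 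Since $q>2$ we have $q-2\ge1$, so $(q-2)^{|E(G)|}$ is a positive integer computable in polynomial time and there is no division-by-zero issue.

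There is essentially no obstacle here: the content of the lemma is precisely the observation that a single subdivision vertex implements exactly the interaction weight $\tfrac{q-1}{q-2}$ — which is why the parameter in the statement is $\gamma=1/(q-2)$ — together with the fact that subdivision makes the resulting graph bipartite. Assembling the reduction (construct $G'$, call the oracle on $(G',\varepsilon)$, output the returned value divided by $(q-2)^{|E(G)|}$) gives $\Potts(q,1/(q-2))\APred\bqcol q$; combined with Theorem~\ref{thm:junction} applied with $\gamma=1/(q-2)$, this also yields Corollary~\ref{cor:bqcol}.
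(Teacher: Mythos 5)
Your proposal is correct and is essentially identical to the paper's proof: both use the $2$-stretch $G'$ of $G$ as the $\bqcol q$ instance and the exact identity that the number of proper $q$-colourings of $G'$ equals $(q-2)^{|E(G)|}\,\ZPotts(G;q,1/(q-2))$, obtained by counting the $q-1$ versus $q-2$ admissible colours for each midpoint vertex. The only difference is that you spell out the (routine) AP-reduction bookkeeping, which the paper leaves implicit.
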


\begin{proof} 
Let $\graph=(\graphvertices,\graphedges)$ be an input to $\Potts(q,1/(q-2))$.
Let $\graph'$ be the two-stretch of $\graph$ constructed as in the proof of Lemma~\ref{lem:tocol}.
In particular, 
$G'$
is the bipartite
graph with 
$$V(G') = V(G) \cup E(G)$$
and 
$$E(G') = \{(u,e) \mid u\in V(G), e \in E(G),  
\mbox{and $u$ is  an endpoint of~$e$}  \}.$$

Consider an assignment $\sigma\colon V(G) \to [q]$ and
an edge $e=(u,v)$ of $\graph$.
If $\sigma(u)\neq \sigma(v)$ then 
there are $q-2$ ways to 
colour the midpoint vertex corresponding to~$e$
so that it receives a different colour from~$\sigma(u)$ and~$\sigma(v)$.
However, if $\sigma(u)=\sigma(v)$ then there are $q-1$ possible colours for the midpoint vertex.
 
 Let $N$ denote the number of proper $q$-colourings of~$G'$. Then
since $(q-1)/(q-2)-1=1/(q-2)$, we have
$$
N = {(q-2)}^{|\graphedges|}
\sum_{\sigma:\graphvertices\rightarrow[q]} 
{\left(\frac{q-1}{q-2}\right)}^{\mathrm{mono}(\sigma)}
= {(q-2)}^{|\graphedges|}
\ZPotts(\graph;q,  1/(q-2)),$$
where 
$\mathrm{mono}(\sigma)$
is the number of edges $e\in E(G)$ whose endpoints in $V(G)$ are mapped to 
the same colour by~$\sigma$.

\end{proof}
    
\section{The Potts partition function and the weight enumerator of a code}
\label{sec:we}    

A {\it linear code\/} $C$ of length $N$ over a finite field $\Fq$ is a linear subspace of 
$\Fq^N$.  
If the subspace has dimension~$r$ then the code may be
specified by an $r\times N$ {\it generating matrix}~$M$ over~$\Fq$ 
whose rows form a basis for the code.  
For any real number $\lambda$, the weight enumerator of the code is given
by $W_M(\lambda)=\sum_{w\in C}\lambda^{\|w\|}$ where $\|w\|$ 
is the number of non-zero entries in~$w$.  ($\|w\|$ is usually called the {\it Hamming
weight\/} of~$w$.)  We
consider the following computational problem, parameterised by $q$ and~$\lambda$.
\begin{description}
\item[Problem] $\WE q\lambda$.
\item[Instance]  A generating matrix $M$ over $\Fq$.
\item[Output]  $W_M(\lambda)$.
\end{description} 
In \cite{WeightEnum}, the authors considered the special case $q=2$ and obtained various 
results on the complexity of $\WE 2\lambda$, depending on $\lambda$.  Here we 
show that, 
for any prime~$p$,
$\WE p\lambda$ provides an upper bound on the complexity of $\Potts(p^k,\gamma)$.

\begin{theorem}\label{thm:PottsToWE}
Suppose that $p$ is a prime, $k$ is a positive integer 
satisfying $p^k>2$
and $\lambda\in(0,1)$ is
an efficiently computable real.  Then
$$\Potts(p^k,1)\APred\WE p\lambda.$$
\end{theorem}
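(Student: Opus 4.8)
The plan is to obtain the reduction as a composition
$$\Potts(p^k,1)\ \APred\ \Potts(p^k,\gamma)\ \APred\ \WE p\lambda ,$$
where $\gamma$ is a positive real to be chosen as a function of~$\lambda$. The first reduction is free: since $p^k>2$ and $\gamma$ will be positive and efficiently computable (hence efficiently approximable) from~$\lambda$, Theorem~\ref{thm:junction} gives $\Potts(p^k,1)\APeq\nHom{J_{p^k}}\APeq\Potts(p^k,\gamma)$. So the real content is the second reduction, which turns an instance $G$ of $\Potts(p^k,\gamma)$ into a generating matrix of a linear code over~$\Fp$ whose weight enumerator at~$\lambda$ pins down $\ZPotts(G;p^k,\gamma)$.

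The engine of that reduction is a \emph{constant-weight linear encoding} of the field. Regard $\mathbb{F}_{p^k}$ as a $k$-dimensional vector space over~$\Fp$, let $\mathrm{Tr}\colon\mathbb{F}_{p^k}\to\Fp$ be the field trace, and define $\psi\colon\mathbb{F}_{p^k}\to\Fp^{\,p^k-1}$ by $\psi(x)=\big(\mathrm{Tr}(\alpha x)\big)_{\alpha\in\mathbb{F}_{p^k}\setminus\{0\}}$. This map is $\Fp$-linear and injective, and since the functionals $\mathrm{Tr}(\alpha\,\cdot)$ run through all nonzero $\Fp$-linear functionals on $\mathbb{F}_{p^k}$ as $\alpha$ ranges over $\mathbb{F}_{p^k}\setminus\{0\}$, every nonzero~$x$ has $\|\psi(x)\|=h$, where $h:=p^{k-1}(p-1)$, while $\psi(0)=0$ (in other words, $\psi$ is the $p$-ary simplex code). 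Now orient the edges of $G$ arbitrarily and let $\Phi\colon\mathbb{F}_{p^k}^{V(G)}\to\Fp^{\,E(G)\times(\mathbb{F}_{p^k}\setminus\{0\})}$ send a colouring $\sigma$ to the concatenation over the edges $e=(u,v)$ of the blocks $\psi(\sigma(u)-\sigma(v))$. Then $\Phi$ is $\Fp$-linear; let $C=\mathrm{Im}\,\Phi$, a code over~$\Fp$ of length $N=|E(G)|(p^k-1)$, which is polynomial in the size of~$G$. A generating matrix $M$ for~$C$ is produced in polynomial time (for fixed $p,k$) by applying $\Phi$ to the standard $\Fp$-basis of $\mathbb{F}_{p^k}^{V(G)}$ and row-reducing. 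By construction $\Phi(\sigma)$ has Hamming weight $h\,|\bichrom\sigma|$, where $\bichrom\sigma$ is the set of bichromatic edges of~$\sigma$; and since $\psi$ is injective, $\ker\Phi$ consists of the colourings constant on each component of~$G$, so $\Phi$ is exactly $(p^k)^{\components G}$-to-one onto~$C$.

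It remains to match parameters. Since a bichromatic edge costs Hamming weight~$h$, the Potts interaction the code ``sees'' is $\lambda^{h}$, so I set $\gamma:=\lambda^{-h}-1$; as $\lambda\in(0,1)$ this is positive, and it is efficiently computable from~$\lambda$, which is exactly what the first reduction requires. Writing $|\bichrom\sigma|=|E(G)|-\mono(\sigma)$ and using $1+\gamma=\lambda^{-h}$,
$$W_M(\lambda)=\sum_{w\in C}\lambda^{\|w\|}
=\frac{1}{(p^k)^{\components G}}\sum_{\sigma\in\mathbb{F}_{p^k}^{V(G)}}\lambda^{\,h|\bichrom\sigma|}
=\frac{\lambda^{\,h|E(G)|}}{(p^k)^{\components G}}\;\ZPotts(G;p^k,\gamma).$$
The factor $(p^k)^{\components G}\lambda^{-h|E(G)|}$ is positive, and since $\lambda$ is efficiently computable while $p^k,\components G,h,|E(G)|$ are known exactly, it is efficiently (approximately) computable; hence one call to a $\WE p\lambda$ oracle at a suitable precision, followed by the standard error bookkeeping from the proof of Theorem~3 of~\cite{APred}, yields an approximation to $\ZPotts(G;p^k,\gamma)$. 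Composing with the first reduction gives $\Potts(p^k,1)\APred\WE p\lambda$.

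The one genuinely new ingredient, and the step I expect to demand the most care, is the constant-weight encoding $\psi$ and the realisation that it forces the particular value $\gamma=\lambda^{-h}-1$: one cannot encode weights over~$\Fp$ while keeping $\gamma=1$, which is precisely why the argument must route through Theorem~\ref{thm:junction} to change the Potts parameter first. The hypothesis that $p$ is prime is used only so that $\Fp$ is the prime subfield of $\mathbb{F}_{p^k}$ (so $\psi$ makes sense as an $\Fp$-linear map); when $p=2$ this recovers, by a different route, reductions in the spirit of those in~\cite{WeightEnum}. Everything else --- orienting~$G$, assembling~$M$, and tracking multiplicative constants and approximation error --- is routine.
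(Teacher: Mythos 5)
Your proposal is correct and follows essentially the same route as the paper: the identical decomposition $\Potts(p^k,1)\APred\Potts(p^k,\gamma)\APred\WE p\lambda$ via Theorem~\ref{thm:junction} with the same value $\gamma=\lambda^{-p^{k-1}(p-1)}-1$, and the same underlying code, which the paper builds by attaching to each edge the system of all $\Fp$-linear forms evaluated at the difference of the endpoint spins and counting unsatisfied equations, while you phrase it as the $p$-ary simplex code $\psi(x)=(\mathrm{Tr}(\alpha x))_{\alpha\neq 0}$ applied to edge differences. The only (harmless) cosmetic differences are that you drop the $m$ identically-zero coordinates coming from the trivial linear form and handle disconnected $G$ explicitly via the $(p^k)^{\components{G}}$-to-one count, where the paper assumes connectivity and gets a factor of $q$.
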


The following corollary follows immediately from Theorem~\ref{thm:PottsToWE} and
Theorem~\ref{thm:junction}.
\begin{corollary}
\label{newcor}
Suppose that $p$ is a prime, 
$k$ is a positive integer 
satisfying $p^k>2$
and $\lambda\in(0,1)$ is
an efficiently computable real. 
Then
$\nHom{J_{p^k}} \APred
\WE p\lambda$. 
\end{corollary}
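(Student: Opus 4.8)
To prove Corollary~\ref{newcor} it suffices, by Theorem~\ref{thm:junction} applied with $\gamma=1$ (which gives $\nHom{J_{p^k}}\APeq\Potts(p^k,1)$), to establish Theorem~\ref{thm:PottsToWE}, namely $\Potts(p^k,1)\APred\WE p\lambda$. The plan is to compose two reductions. The first is already available: as noted in the proof of Theorem~\ref{thm:junction}, a graph is a $2$-uniform hypergraph, so Lemmas~9.1 and~10.1 of~\cite{FerroPotts} (which require the number of colours $q=p^k$ to exceed~$2$, whence the hypothesis $p^k>2$) give $\Potts(p^k,1)\APred\uhPotts(p^k,1)\APred\Potts(p^k,\gamma)$ for every positive efficiently approximable real~$\gamma$. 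So it remains to fix one convenient~$\gamma$ and give an AP-reduction $\Potts(p^k,\gamma)\APred\WE p\lambda$.

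Write $q=p^k$, fix once and for all an $\Fp$-linear isomorphism $\Fq\cong\Fp^k$, and for $c\in\Fq$ let $\mathrm{wt}(c)$ denote the number of nonzero coordinates of $c$ under this identification. Set $D=\sum_{c\in\Fq^*}\mathrm{wt}(c)$; then $D\ge q-1\ge 1$, so $\lambda^{-D}>1$, and I would take $\gamma=\lambda^{-D}-1>0$ (efficiently approximable because $\lambda$ is efficiently computable). Given an instance $\graph=(\graphvertices,\graphedges)$ of $\Potts(p^k,\gamma)$, fix an arbitrary orientation of the edges, writing $e=(u_e,v_e)$, and construct a linear code $C$ over $\Fp$ of length $N=k(q-1)\,|\graphedges|$ whose coordinates are indexed by triples $(e,\beta,j)$ with $e\in\graphedges$, $\beta\in\Fq^*$ and $j\in[k]$: $C$ is the image of the $\Fp$-linear map that sends $\sigma\colon\graphvertices\to\Fq$ to the codeword $w(\sigma)$ whose $(e,\beta,j)$-entry is the $j$-th coordinate of $\beta\cdot\bigl(\sigma(u_e)-\sigma(v_e)\bigr)\in\Fq\cong\Fp^k$. (Multiplication by a fixed $\beta$ is $\Fp$-linear, so $w$ is $\Fp$-linear in $\sigma$.) A generating matrix $M$ of $C$ whose rows form a basis is obtained in polynomial time by Gaussian elimination over $\Fp$ applied to the images under $w$ of the standard basis of $\Fp^{k|\graphvertices|}$.

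The construction works because of two elementary observations. First, for a fixed $\sigma$ the block of coordinates sharing a given edge $e$ contributes $0$ to the Hamming weight $\|w(\sigma)\|$ when $\sigma(u_e)=\sigma(v_e)$; when $\sigma(u_e)\ne\sigma(v_e)$, the map $\beta\mapsto\beta\cdot(\sigma(u_e)-\sigma(v_e))$ is a bijection of $\Fq^*$, so the multiset of values on that block is exactly $\Fq^*$ and the block contributes $\sum_{c\in\Fq^*}\mathrm{wt}(c)=D$, independently of the actual value of the difference. Hence $\|w(\sigma)\|=D\bigl(|\graphedges|-\mono(\sigma)\bigr)$, where $\mono(\sigma)$ is the number of monochromatic edges of $\graph$ under $\sigma$. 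Second, $w(\sigma)=0$ precisely when $\sigma$ is constant on each connected component of $\graph$, so $\sigma\mapsto w(\sigma)$ is $q^{\components{\graph}}$-to-one onto $C$. Combining these with the identity $\ZPotts(\graph;q,\gamma)=\sum_{\sigma}(1+\gamma)^{\mono(\sigma)}$ and with $1+\gamma=\lambda^{-D}$,
\begin{align*}
q^{\components{\graph}}\,W_M(\lambda)
&=\sum_{\sigma\colon\graphvertices\to\Fq}\lambda^{D(|\graphedges|-\mono(\sigma))}
=\lambda^{D|\graphedges|}\sum_{\sigma}\bigl(\lambda^{-D}\bigr)^{\mono(\sigma)}\\
&=\lambda^{D|\graphedges|}\,\ZPotts(\graph;q,\gamma),
\end{align*}
so $\ZPotts(\graph;p^k,\gamma)=\lambda^{-D|\graphedges|}\,q^{\components{\graph}}\,W_M(\lambda)$. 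A single oracle call to $\WE p\lambda$ on $M$, whose answer is multiplied by $q^{\components{\graph}}$ (computed exactly) and by $\lambda^{-D|\graphedges|}$ (computed to the required relative precision in polynomial time, using that $\lambda$ is efficiently computable), then yields an approximation to $\ZPotts(\graph;p^k,\gamma)$ of any prescribed accuracy; this is an AP-reduction. Composing it with $\Potts(p^k,1)\APred\Potts(p^k,\gamma)$ proves Theorem~\ref{thm:PottsToWE}, and Corollary~\ref{newcor} follows by Theorem~\ref{thm:junction}.

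I expect the one genuinely delicate point to be the design of the code. A naive ``$\Fp$-ification'' of an $\Fq$-linear code does not preserve Hamming weight, because a nonzero $\Fq$-symbol generally has some zero $\Fp$-coordinates; replacing each $\Fq$-symbol by its entire $\Fq^*$-orbit (the multiplications by $\beta\in\Fq^*$) is precisely what forces each edge's contribution to the Hamming weight to depend only on whether that edge is monochromatic. Once this is in place, the counting identity above and the bookkeeping for the explicitly computable scaling constants are routine, and the passage from $\gamma=1$ to a general $\gamma$ is supplied by~\cite{FerroPotts} exactly as in the proof of Theorem~\ref{thm:junction}.
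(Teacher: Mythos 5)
Your proposal is correct and follows essentially the same route as the paper: reduce $\Potts(p^k,1)$ to $\Potts(p^k,\gamma)$ for a specially tuned $\gamma$ via Theorem~\ref{thm:junction} and the lemmas of~\cite{FerroPotts}, encode each spin as a vector in $\Fp^k$, build a linear code over $\Fp$ in which every bichromatic edge contributes a fixed Hamming weight while every monochromatic edge contributes zero, and divide out the $q^{\kappa(G)}$-to-one kernel. Your gadget (the $\Fq^*$-multiplication orbit of the edge difference expanded into $\Fp$-coordinates, giving weight $kq(p-1)/p$ per bichromatic edge) is a harmless variant of the paper's (all $q$ linear forms on $\Fp^k$ applied to the difference, giving weight $q(p-1)/p$), with the matching adjustment to $\gamma$; your version also treats disconnected instances explicitly via the factor $q^{\kappa(G)}$, where the paper's $q$-to-one argument tacitly assumes connectivity.
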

 
The condition $p^k>2$ can in fact be removed from Corollary~\ref{newcor},
even though the result does not follow from Theorem~\ref{thm:PottsToWE} 
in this situation.  
For the missing case where $p=2$ and $k=1$, 
Lemma~\ref{lem:intermediate} gives
$\nHom{J_{2}} \APred \BIS$ and \cite[Cor.~7, Part~(4)]{WeightEnum}
show $\BIS \APred \WE {2}{\lambda}$.
A striking feature of Corollary~\ref{newcor} is that it provides a uniform upper bound
on the complexity of the infinite sequence of problems $\nHom{J_{p^k}}$, with $p$ fixed
and $k$ varying.
This uniform upper bound is interesting if 
(as we suspect)
$\WE p\lambda$ is not itself equivalent to \SAT{} via AP-reducibility.

\begin{proof}[Proof of Theorem~\ref{thm:PottsToWE}]
Let 
$q=p^k$ and let
$\gamma=\lambda^{-q(p-1)/p}-1>0$.  
Since Theorem~\ref{thm:junction}
shows $\Potts(p^k,1)\APeq \nHom{J_{p^k}}\APeq\Potts(p^k,\gamma)$,
it is
enough to 
given an AP-reduction from
$\Potts(p^k,\gamma)$ to $\WE p\lambda$.
So suppose 
$G=(V,E)$
is a graph with $n$ vertices and $m$ edges.
We wish to evaluate 
\begin{equation}\label{eq:PottsDef}
\ZPotts(G;q,\gamma)=\sum_{\sigma:V\to[q]}(1+\gamma)^{\mono(\sigma)}.
\end{equation}
Our aim is to construct an instance of the weight enumerator problem 
whose solution is the above expression, modulo an easily 
computable factor.
Introduce a collection of variables $X=\{x^v_i\mid v\in V \text{ and }i\in[k]\}$.  
To each assignment $\sigma:V\to[q]$
we define an associated assignment $\sigmahat:X\to\Fp$ as follows:  
for all $v\in V$,
$$
\big(\sigmahat(x_1^v),\sigmahat(x_2^v), \ldots,\sigmahat(x_k^v)\big)=\phi(\sigma(v)),
$$
where $\phi$ is any fixed bijection $[q]\to 
\Fp^k$.
Note that $\sigma\mapsto\sigmahat$ is a bijection from assignments $V\to[q]$ to assignments $X\to\Fp$.
(Informally, we have coded the spin at each vertex as a $k$-tuple 
of variables taking values in~$\Fp$.)
   
Let $\ell_1(z_1,\ldots,z_k),\ldots,\ell_q(z_1,\ldots,z_k)$ be an enumeration of all linear forms 
$\alpha_1z_1+\alpha_2z_2+\cdots+\alpha_kz_k$ over $\Fp$, 
where $(\alpha_1,\alpha_2,\ldots,\alpha_k)$ ranges over $\Fp^k$.
This collection of linear forms has the following property:
\begin{equation}\label{eq:prop}
\begin{split}
&\text{If $z_1=z_2=\cdots z_k=0$, then all of 
$\ell_1(z_1,\ldots,z_k),\ldots,\ell_q(z_1,\ldots,z_k)$ are zero;}\\
&\text{otherwise, precisely $q/p=p^{k-1}$ of 
$\ell_1(z_1,\ldots,z_k),\ldots,\ell_q(z_1,\ldots,z_k)$ are zero.}
\end{split}
\end{equation}
The first claim in~(\ref{eq:prop}) is trivial.  
To see the second, assume without loss of generality
that $z_1\not=0$.  Then, for any choice of $(\alpha_2,\ldots,\alpha_k)\in\Fp^{k-1}$, there is 
precisely one choice for $\alpha_1\in\Fp$ that makes $\alpha_1z_1+\cdots+\alpha_kz_k=0$.

Now 
give an arbitrary direction to each edge $(u,v)\in E$ and
consider the system $\Lambda$ of linear equations 
$$
\Big\{
\ell_j
\big(\sigmahat(x^v_1)-\sigmahat(x^u_1),\,\sigmahat(x^v_2)-\sigmahat(x^u_2),\,
   \ldots,\,\sigmahat(x^v_k)-\sigmahat(x^u_k)\big)=0:
   j\in[q]
   \text{ and } 
   (u,v)
   \in E\Big\}.
$$
(We view $\Lambda$ as a multiset, so the trivial equation $0=0$ 
arising from the linear form $\ell_j$ with $\alpha_1=\alpha_2 = \cdots = \alpha_k=0$
occurs $m$ times,
a convention that makes the following calculation simpler.)
Denote by $\sat(\sigmahat)$ the number of satisfied equations in $\Lambda$.  
Then, from~(\ref{eq:prop}), 
$$\sat(\sigmahat)=q\mono(\sigma)+\frac qp(m-\mono(\sigma)),$$ 
and hence
$$
\mono(\sigma)=\frac p{(p-1)q}\sat(\sigmahat)-\frac m{p-1}.
$$
Noting that $1+\gamma=\lambda^{-q(p-1)/p}$,
\begin{align}
\sum_{\sigma:V\to[q]}(1+\gamma)^{\mono(\sigma)}
&=\sum_{\sigmahat:
X
\to\Fp}(1+\gamma)^{(p/(p-1)q)\sat(\sigmahat)-m/(p-1)}\notag\\
&=
\lambda^{qm/p}
\sum_{\sigmahat:
X
\to\Fp}\lambda^{-\sat(\sigmahat)}\notag\\
&=
\lambda^{-(1-1/p)qm}
\sum_{\sigmahat:
X
\to\Fp}\lambda^{\unsat(\sigmahat)},\label{eq:unsat}
\end{align}
where $\unsat(\sigmahat)=qm-\sat(\sigmahat)$ is the number of unsatisfied equations in $\Lambda$.

The system $\Lambda$ has $qm$ equations in $kn$ variables, so we may write it 
in matrix form $A\bsigma=\mathbf0$, where $A$ is a $(qm\times kn)$-matrix,
and $\bsigma$ is a $kn$-vector over~$\Fp$.  The columns of $A$ and the 
components of $\bsigma$ are indexed by pairs $(i,v)\in[k]\times V$, and the 
$(i,v)$-component of $\bsigma$ is $\sigmahat(x_i^v)$.  Enumerating the columns
of $A$ as $\bfa_i^v\in\Fp^{qm}$ for $(i,v)\in[k]\times V$, we may re-express $\Lambda$
in the form
$$
\sum_{i\in[k],v\in V}\sigmahat(x_i^v)\,\bfa_i^v=\mathbf0,
$$
where $\mathbf0$ is the length-$qm$ zero vector. 
Then $\unsat(\sigmahat)$ is the Hamming weight of the 
length-$qm$
vector
$\bfb(\sigmahat)=\sum_{i,v}\sigmahat(x_i^v)\,\bfa_i^v$.  
As $\sigmahat$ ranges over all assignments 
$X\to\Fp$, so $\bfb(\sigmahat)$ ranges
over the vector space (or code) 
$$C=\Big\{\sum_{i,v}\sigmahat(x_i^v)\,\bfa_i^v\Bigm| \sigmahat:
X\to\Fp\Big\}
=\langle \bfa_i^v\mid i\in[k],v\in V\rangle$$  
generated by the vectors $\{\bfa_i^v\}$.

We will argue that the mapping sending $\sigmahat$ to $\bfb(\sigmahat)$ is $q$ to~1,
from which it follows that $\sum_{\sigmahat}\lambda^{\unsat(\sigmahat)}$ is 
$q$ times the weight enumerator of the code~$C$.  Then, from (\ref{eq:PottsDef})
and~(\ref{eq:unsat}), letting $M$ be any generating matrix
for~$C$, 
$$
\ZPotts(G;q,\gamma)=q\lambda^{-(1-1/p)qm}
\,W_M(\lambda).
$$
To see where the factor~$q$ comes from, consider the 
assignments~$\sigmahat$ satisfying 
\begin{equation}\label{eq:qto1}
\sum_{i\in[k],v\in V}\sigmahat(x_i^v)\,\bfa_i^v=\bfb,
\end{equation}
for some $\bfb\in\Fp^{qm}$.
For every $i\in [k]$ and every edge 
$(u,v)\in E$,
there is an equation in $\Lambda$ 
specifying the value of $\sigmahat(x_i^v)-\sigmahat(x_i^u)$.  
Thus, since $G$ is connected, the vector $\bfb$ determines 
$\sigmahat$ once the partial assigment 
$(\sigmahat(x_1^r),\ldots,\sigmahat(x_k^r))$ is specified for some 
distinguished vertex $r\in V$.
Conversely, each of the $q$ partial assignments
$(\sigmahat(x_1^r),\ldots,\sigmahat(x_k^r))$ extends to a 
total assignment satisfying~(\ref{eq:qto1}).
\end{proof}

\bibliographystyle{plain}
\bibliography{mybibfile}

\begin{thebibliography}{10}

\bibitem{BS}
R.~Balasubramanian and C.~R. Subramanian.
\newblock On sampling colorings of bipartite graphs.
\newblock {\em Discrete Math. Theor. Comput. Sci.}, 8(1):17--30 (electronic),
  2006.

\bibitem{BJH}
J{\o}rgen Bang-Jensen, Pavol Hell, and Gary MacGillivray.
\newblock Hereditarily hard {$H$}-colouring problems.
\newblock {\em Discrete Math.}, 138(1-3):75--92, 1995.
\newblock 14th British Combinatorial Conference (Keele, 1993).

\bibitem{BKN}
Libor Barto, Marcin Kozik, and Todd Niven.
\newblock The csp dichotomy holds for digraphs with no sources and no sinks (a
  positive answer to a conjecture of bang-jensen and hell).
\newblock {\em SIAM J. Comput.}, 38(5):1782--1802, 2009.

\bibitem{BCDT}
Christian Borgs, Jennifer~T. Chayes, Martin Dyer, and Prasad Tetali.
\newblock On the sampling problem for {$H$}-colorings on the hypercubic
  lattice.
\newblock In {\em Graphs, morphisms and statistical physics}, volume~63 of {\em
  DIMACS Ser. Discrete Math. Theoret. Comput. Sci.}, pages 13--28. Amer. Math.
  Soc., Providence, RI, 2004.

\bibitem{LSM}
Andrei~A. Bulatov, Martin~E. Dyer, Leslie~Ann Goldberg, Mark Jerrum, and Colin
  McQuillan.
\newblock The expressibility of functions on the {B}oolean domain, with
  applications to counting {CSPs}.
\newblock {\em Journal of the ACM}, 60(5), 2013.

\bibitem{CDF}
Colin Cooper, Martin Dyer, and Alan Frieze.
\newblock On {M}arkov chains for randomly {$H$}-coloring a graph.
\newblock {\em J. Algorithms}, 39(1):117--134, 2001.

\bibitem{CrescenziGuide}
Pierluigi Crescenzi.
\newblock A short guide to approximation preserving reductions.
\newblock In {\em IEEE Conference on Computational Complexity}, pages 262--273.
  IEEE Computer Society, 1997.

\bibitem{Dalhaus}
Elias Dahlhaus, David~S. Johnson, Christos~H. Papadimitriou, Paul~D. Seymour,
  and Mihalis Yannakakis.
\newblock The complexity of multiterminal cuts.
\newblock {\em SIAM J. Comput.}, 23(4):864--894, 1994.

\bibitem{HColSampleCount}
Martin Dyer, Leslie~Ann Goldberg, and Mark Jerrum.
\newblock Counting and sampling {$H$}-colourings.
\newblock {\em Inform. and Comput.}, 189(1):1--16, 2004.

\bibitem{APred}
Martin~E. Dyer, Leslie~Ann Goldberg, Catherine~S. Greenhill, and Mark Jerrum.
\newblock The relative complexity of approximate counting problems.
\newblock {\em Algorithmica}, 38(3):471--500, 2003.

\bibitem{trichotomy}
Martin~E. Dyer, Leslie~Ann Goldberg, and Mark Jerrum.
\newblock An approximation trichotomy for {B}oolean \#{CSP}.
\newblock {\em J. Comput. Syst. Sci.}, 76(3-4):267--277, 2010.

\bibitem{DG}
Martin~E. Dyer and Catherine~S. Greenhill.
\newblock The complexity of counting graph homomorphisms.
\newblock {\em Random Struct. Algorithms}, 17(3-4):260--289, 2000.

\bibitem{FV}
Tom{\'a}s Feder and Moshe~Y. Vardi.
\newblock The computational structure of monotone monadic snp and constraint
  satisfaction: A study through datalog and group theory.
\newblock {\em SIAM J. Comput.}, 28(1):57--104, 1998.

\bibitem{FK}
C.~M. Fortuin and P.~W. Kasteleyn.
\newblock On the random-cluster model. {I}. {I}ntroduction and relation to
  other models.
\newblock {\em Physica}, 57:536--564, 1972.

\bibitem{Ising}
Leslie~Ann Goldberg and Mark Jerrum.
\newblock The complexity of ferromagnetic {I}sing with local fields.
\newblock {\em Combinatorics, Probability {\&} Computing}, 16(1):43--61, 2007.

\bibitem{FerroPotts}
Leslie~Ann Goldberg and Mark Jerrum.
\newblock Approximating the partition function of the ferromagnetic {P}otts
  model.
\newblock {\em J. ACM}, 59(5):25, 2012.

\bibitem{WeightEnum}
Leslie~Ann Goldberg and Mark Jerrum.
\newblock Approximating the {T}utte polynomial of a binary matroid and other
  related combinatorial polynomials.
\newblock {\em J. Comput. System Sci.}, 79(1):68--78, 2013.

\bibitem{GKP}
Leslie~Ann Goldberg, Steven Kelk, and Mike Paterson.
\newblock The complexity of choosing an {$H$}-coloring (nearly) uniformly at
  random.
\newblock {\em SIAM J. Comput.}, 33(2):416--432, 2004.

\bibitem{GWW}
Wolfgang Gutjahr, Emo Welzl, and Gerhard Woeginger.
\newblock Polynomial graph-colorings.
\newblock {\em Discrete Appl. Math.}, 35(1):29--45, 1992.

\bibitem{HNZ}
P.~Hell, J.~Ne{\v{s}}et{\v{r}}il, and X.~Zhu.
\newblock Complexity of tree homomorphisms.
\newblock {\em Discrete Appl. Math.}, 70(1):23--36, 1996.

\bibitem{HN}
Pavol Hell and Jaroslav \Nesetril.
\newblock On the complexity of {$H$}-coloring.
\newblock {\em J. Comb. Theory, Ser. B}, 48(1):92--110, 1990.

\bibitem{HNbook}
Pavol Hell and Jaroslav Ne{\v{s}}et{\v{r}}il.
\newblock {\em Graphs and homomorphisms}, volume~28 of {\em Oxford Lecture
  Series in Mathematics and its Applications}.
\newblock Oxford University Press, Oxford, 2004.

\bibitem{HNsurvey}
Pavol Hell and Jaroslav \Nesetril.
\newblock Colouring, constraint satisfaction, and complexity.
\newblock {\em Computer Science Review}, 2(3):143--163, 2008.

\bibitem{JVW90}
F.~Jaeger, D.~L. Vertigan, and D.~J.~A. Welsh.
\newblock On the computational complexity of the {J}ones and {T}utte
  polynomials.
\newblock {\em Math. Proc. Cambridge Philos. Soc.}, 108(1):35--53, 1990.

\bibitem{JS93}
Mark Jerrum and Alistair Sinclair.
\newblock Polynomial-time approximation algorithms for the {I}sing model.
\newblock {\em SIAM J. Comput.}, 22(5):1087--1116, 1993.

\bibitem{JVV}
Mark~R. Jerrum, Leslie~G. Valiant, and Vijay~V. Vazirani.
\newblock Random generation of combinatorial structures from a uniform
  distribution.
\newblock {\em Theoret. Comput. Sci.}, 43(2-3):169--188, 1986.

\bibitem{Kelk}
Steven Kelk.
\newblock {\em On the relative complexity of approximately counting
  $H$-colourings}.
\newblock PhD thesis, Warwick University, 2003.

\bibitem{McQuillan}
Colin McQuillan.
\newblock {\em Computational complexity of approximation of partition
  functions}.
\newblock PhD thesis, University of Liverpool, 2013.

\bibitem{Potts}
R.~B. Potts.
\newblock Some generalized order-disorder transformations.
\newblock {\em Proc. Cambridge Philos. Soc.}, 48:106--109, 1952.

\bibitem{Sokal05}
Alan Sokal.
\newblock The multivariate {T}utte polynomial.
\newblock In {\em Surveys in Combinatorics}, pages 173--226. Cambridge
  University Press, 2005.

\bibitem{zuckerman}
David Zuckerman.
\newblock On unapproximable versions of {NP-Complete} problems.
\newblock {\em SIAM Journal on Computing}, 25(6):1293--1304, 1996.

\end{thebibliography}

\end{document}